\numberwithin{mycor}{subsection}
\newtheorem{myth}{Theorem}
\numberwithin{myth}{subsection}
\newtheorem{myprop}{Proposition}
\numberwithin{myprop}{subsection}
\numberwithin{mylemma}{subsection}
\theoremstyle{definition}
\newtheorem{mydef}{Definition}
\numberwithin{mydef}{subsection}
\numberwithin{mymod}{subsection}
\numberwithin{myrem}{subsection}
\numberwithin{myex}{subsection}
\DeclareMathOperator{\cov}{Cov}
\DeclareMathOperator{\var}{Var}
\DeclareMathOperator{\CoV}{CoVaR}
\DeclareMathOperator{\DCov}{\Delta\text{-CoVaR}}
\DeclareMathOperator{\VaR}{VaR}
\DeclareMathOperator{\Real}{\mathbb{R}}
\providecommand{\keywords}[1]{\textbf{\textit{Keywords---}} #1}
\providecommand{\classificationjel}[1]{\textbf{\textit{JEL---}} #1}
\providecommand{\classificationams}[1]{\textbf{\textit{AMS 2020 subject Classification---}} #1}
\title{New general dependence measures: construction, estimation and application to high-frequency stock returns\thanks{The authors would like to thank Jan Dhaene ,Roman Goncharenko, Hamza Hanbali, Florian Hoffmann and Gertjan Verdickt}}
\author[1]{Wentao Hu\footnote{Email: wentao.hu@kuleuven.be}}
\author[1]{Aleksy Leeuwenkamp\footnote{Corresponding author, Email: aleksy.leeuwenkamp@kuleuven.be}}
\affil[1]{KU Leuven Faculty of Economics and Business (FEB) Department of Accounting, Finance \& Insurance (AFI): Naamsestraat 69 B-3000 Leuven, Belgium}   
\date{\today}
\begin{document}
\renewcommand{\bibname}{References}
\maketitle
\begin{abstract}
    \noindent We propose a set of dependence measures that are non-linear, local, invariant to a wide range of transformations on the marginals, can show tail and risk asymmetries, are always well-defined, are easy to estimate and can be used on any dataset. We propose a nonparametric estimator and prove its consistency and asymptotic normality. Thereby we significantly improve on existing (extreme) dependence measures used in asset pricing and statistics. To show practical utility, we use these measures on high-frequency stock return data around market distress events such as the 2010 Flash Crash and during the GFC. Contrary to ubiquitously used correlations we find that our measures clearly show tail asymmetry, non-linearity, lack of diversification and endogenous buildup of risks present during these distress events. Additionally, our measures anticipate large (joint) losses during the Flash Crash while also anticipating the bounce back and flagging the subsequent market fragility. Our findings have implications for risk management, portfolio construction and hedging at any frequency.
\end{abstract}
\keywords{Asset Pricing, Comonotonicity, Dependence measure, Copula, Empirical Copula, Nonparametric Estimation, High Frequency data, Flash Crash}\newline
\classificationjel{C58,G01,G11,G12}\newline
\classificationams{62G20,62G30,62H12,62H20,62Pxx,91G45,91G70}
\newpage

\section{Introduction}\label{Sec:intro}
Correlations between financial asset returns are a key element of not just asset pricing but also portfolio management models. Correlations quantify the amount on linear comovement between assets and hence are used to identify underlying sources of comovement but also used to construct portfolios that are well-diversified. In this paper we address the shortcomings of correlations by proposing a set of more general dependence measures which, despite their generality, are intuitive and easy to estimate on any dataset. We apply this measure to explore and show economically important empirical properties of asset return data. \newline \indent Despite their success empirical asset pricing models such as the CAPM \cite{sharpe1964capital,lintnervaluation,fama2004capital} and theoretical asset pricing models such as modern portfolio theory \cite{markowitz,elton1997modern} rely heavily on correlations and these methods exhibit flaws and oversights. This fact becomes particularly clear during times of financial market stress when seemingly well-diversified portfolios suddenly become poorly or entirely undiversified as assets that normally move relatively independently or even in opposite directions all move in one direction\footnote{ Some papers that investigate this phenomenon: \cite{LonginSolnikExtremeCorr,Samoroddependence,asymmcorrang,forbes2002no,extremevaluedpendence,hartmann2004asset,embrechts_2010,van2016systematicdownbeta,CopulaGarchCovar,bearbeta,Mcrash}}. \newline \indent This tail asymmetry also called the leverage effect in financial asset returns is well-known in the literature \cite{cont2001empirical,asymmcorrang} and can be modelled with asymmetric or local correlations \cite{asymmcorrang,EngleAsymmcorr,corrriskportchoice,TJOSTHEIM201333}. \newline \indent However, this still does not address the issue of non-linearities in asset returns during financial crises\footnote{Even outside financial crises well-known theoretical asset pricing models pose that returns are non-linear functions of state variables (\cite{campbell1999force,bansal2004risks,brunnermeier2009market,brunnermeier2014macroeconomic}). Furthermore, \cite{pohl2018higher} show that linear approximations in asset pricing can be woefully inaccurate.} nor does it allow to show the direction of the risks. In other words, correlations still measure linear dependence only. This means correlations can be distorted by non-linear transformation of the asset returns/prices. Furthermore, correlations do not allow to distinguish if the link from asset A to B or from asset B to asset A is stronger. Therefore, they are symmetric to risk while in financial markets risks are known to be rather asymmetric \cite{LonginSolnikExtremeCorr,extremevaluedpendence,portchoicecontagionHawkes,CopulaGarchCovar}. Lastly, since correlations rely on (co)variances one has to be sure that the second moment of the population exists otherwise correlations are meaningless \footnote{The existence of second moments of asset returns is a contentious question within asset pricing with many papers on the topic, some of which are: \cite{fama1963mandelbrot,cont2001empirical,gabaix2007unified,gabaix2009power,taleb2009finiteness,grabchak2010financial}. This makes it hard to justify measures like coskewness or cokurtosis \cite{coskew,coskewcokurt} which require even higher moments to exist. }.\newline \indent Becasue these issues are not just relevant to asset pricing or finance but to statistics in general in this paper we develop a set of dependence measures based on copulas, the $\DCov$  risk measure and the link between the two established by \cite{BERNARDI20178,Jaworski+2017+1+19}. The first set of dependence measures called $\iota$ purely measures the dependence. Hence, it can be seen as a generalized correlation. The second set called $\delta$ measures dependence and its effect on the variable of interest. Hence, it can be seen as a generalized regression beta. Both $\iota$ and $\delta$ come in pairs as within a pair one can examine the lower left and the upper right joint tails separately. \\

\noindent In the finance \footnote{ A selection of papers on this: \cite{LonginSolnikExtremeCorr,asymmcorrang,extremevaluedpendence,ang2006downside, TJOSTHEIM201333,van2016systematicdownbeta,bearbeta,CMIEL202055} } and statistics \footnote{ An overview of commonly used methods can be found in: \cite{embrechts1997modelling,joe1997multivariate,gijbels2021specification}. Some famous measures are: Spearmans rank correlation \cite{spearman1987proof}, Ginis gamma \cite{gini1914ammontare}, Kendalls tau \cite{kendall1938new} and Blomqvists beta \cite{blomqvist1950measure}. } literature there exist many papers on extreme or tail dependence measures, local correlations, local betas and more classical dependence measures like Spearmans Rho or Kendalls Tau. All of these methods either still fall prey to the above mentioned issues, have poor estimation properties or lack any economic interpretation or clear intuition.\newline \indent Recently, in asset pricing the Model-Free Implied Dependence (MFID) was proposed \cite{inghelbrecht2022model}. The MFID is based on the HIX (Herd Behavior Index) by \cite{DHAENE2012357} and measures how comonotonic or positively (non)-linearly dependent asset markets are. However, the MFID relies on an asset pricing model, still requires the existence of variances, only measures global and positive dependence, does not take into account the mentioned asymmetries and requires index option data to be able to estimate it. Despite these drawbacks, the MFID has recently been successfully applied in standard empirical asset pricing models to show that investors really seem to price non-linear dependence risk independent of known risk factors \cite{inghelbrecht2022model}.  We have been able to find only one dependence measure, also based on copulas, that comes close to fixing all of these issues but it is less intuitive \cite{CMIEL202055}. \newline\indent Considering the success of the MFID in asset pricing we propose dependence measures that allows one to measure comonotonicity or the opposite countermonotonicity, allows for the risk and tail asymmetry, are always well-defined, measure local dependence, are well behaved statistically and are relatively easy to estimate and understand. \\

\noindent To demonstrate the practical utility of our dependence measures and address a relatively unaddressed question of what the dependence structure of this particular data looks like, we estimate them using high-frequency (more frequent than daily) stock return data during market distress events such as the Flash Crash of 2010 and the Great Financial Crisis (GFC) of 2008. \newline \indent The statistical properties of financial asset returns have been known to get more complex at higher sampling frequencies \cite{cont2001empirical,grabchak2010financial}. However, these works mainly focus on the univariate properties. Regarding dependence we only know of a few papers that have examined the dependence structure of intraday financial asset returns \cite{Samoroddependence,embrechts2003dependence,ait2016cojumps}. These papers already find that joint distribution of asset returns is non-Gaussian at higher frequencies. \cite{Samoroddependence} provides theoretical and empirical evidence that the dependence properties of high frequency returns persist at lower frequencies. Because these properties require a large sample to estimate reliably high-frequency data are better suited. The persistence in turn implies that findings obtained with high-frequency data do generalize to lower frequencies.  However, this literature has not considered local dependence nor has it investigated asymmetries. Furthermore, all the standard models used in the high-frequency literature assume jump diffusion processes which are linearly coupled across assets \cite{ait2014high,ait2016cojumps}. Consequently, these models impose stricter assumptions on the dependence structure than the data might allow for. \newline \indent Therefore, there exists a clear gap in the literature as it is unknown if this dependence structure is adequate to model the true dependence structure of high-frequency stock returns. Lastly, due to the persistence of the dependence structure under temporal aggregation our findings have implications for managing assets at any frequency. \\

\noindent Our main contribution is the construction a new set of dependence measures that allow us to measure dependence between random variables and the effect on dependence under very general assumptions on the underlying random variables and their dependence structure. \newline \indent Theoretically, we prove a wide range of properties of these dependence measures. These properties make them more general than existing approaches. We provide a nonparametric estimator that does not impose any stringent assumptions. Furthermore, we prove consistency and asymptotic normality of our estimator.\newline \indent Empirically, we find that high-frequency stock returns around market distress events exhibit clear tail asymmetry and lack of diversification not captured by correlations. During some important days of the GFC we even find that extreme profits are stronger comoving than extreme losses. Furthermore, in some cases our measures are even able to anticipate large joint losses and profits in asset markets due to assets starting to comove strongly prior to these events. Lastly, some of these effects clearly persist in lower frequency data. Because high-frequency stock returns data and our method reveal these features while showing where standard models fail our findings show the underlying asset pricing processes are rather complex. Consequently, there are clear and strong implications for portfolio management, risk management and hedging at any frequency.\\

\noindent The paper is structured as follows: Section \ref{Sec:theory} is concerned with the mathematical and statistical theory behind our dependence measures, their construction and their properties. Section \ref{Sec:empirics} deals with estimation of our measures and show their performance on a test case. Section \ref{Sec:data} explains the data used. Section \ref{Sec:res} discusses the results and their implications. Section \ref{Sec:conc} concludes the paper while all the proofs and additional information can be found in the Appendix (Section \ref{Sec:appendix}).

% \section{Preliminaries}\label{Sec:prelim}
\section{Theory}\label{Sec:theory}
\subsection{Copula}\label{Sec:copdep}
To model the dependence structure between asset prices we use copulas. Copulas are widely used in finance, risk management and engineering to model dependence between random variables \cite{LonginSolnikExtremeCorr,extremevaluedpendence,genest2007everything,embrechts2009,Genest2009,mcneil2015quantitative,Mcrash}.\\ 

\noindent  A copula is a function that links the marginal distributions. Consider a bivariate random vector $(X,Y)$ of asset payoffs with respective marginal distribution functions $F_X(x)$ and $F_Y(y)$.\cite{sklar1959fonctions} shows that the joint distribution function admits the following decomposition:
\begin{equation}
F_{X,Y}(x,y) = \mathbb{P}(X\leqslant x, Y\leqslant y) =C(F_X(x),F_Y(y)),   
\end{equation}
where $C(u,v) = \mathbb{P}(U\leqslant u, V\leqslant v)$ and $U=F_X(X),V=F_Y(Y) \sim U(0,1)$. Hence, the copula treats the marginals as uniformly distributed. The function $C$ is called the copula of $(X,Y)$ \footnote{Under the assumption that both marginal distributions are continuous the copula is unique.}. More details about copulas can be found in \cite{nelsen2007introduction,hofert2019elements}.\newline \indent A useful property of a copula is that is invariant to strictly increasing functions of its marginals. Also it does not depend on its marginals but rather shows how the marginals are linked. Hence, a given copula can be used to link a wide variety of marginals to construct a joint distribution. Furthermore, the dependence between the marginals can be non-linear and exhibit features like tail dependence and asymmetries.\newline \indent Another convenient property of copulas is that an upper and lower bound exist on the set of all copulas. Hence, there exist notions of maximal positive and maximal negative dependence. These bounds are given by the Theorem of Fréchet-Hoeffding \cite{frechet1951tableaux,hoeffding1940masstabinvariante}:

\begin{myth}\label{Th:FH}
    Let $(X,Y)$ be a bivariate random vector with copula $C(u,v)$. Then it holds that
    \begin{align*}
        W(u,v)\leqslant C(u,v) \leqslant M(u,v)
    \end{align*}
    \noindent for all $(u,v)\in [0,1]^2 $ with $W(u,v)=\max(u+v-1,0)$ and $M(u,v)=\min(u,v)$.
\end{myth}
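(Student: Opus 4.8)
The plan is to exploit the probabilistic representation of the copula recalled in the excerpt, namely $C(u,v) = \mathbb{P}(U\leqslant u, V\leqslant v)$ with $U = F_X(X)$, $V = F_Y(Y) \sim U(0,1)$, and to derive both bounds from elementary monotonicity and additivity properties of a probability measure. The two inequalities are logically independent, so I would establish the upper bound $M$ and the lower bound $W$ separately and then combine them.

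For the upper bound I would argue by event inclusion. Since $\{U\leqslant u, V\leqslant v\} \subseteq \{U\leqslant u\}$ and, symmetrically, $\{U\leqslant u, V\leqslant v\} \subseteq \{V\leqslant v\}$, monotonicity of $\mathbb{P}$ together with the uniform margins $\mathbb{P}(U\leqslant u) = u$ and $\mathbb{P}(V\leqslant v) = v$ yields the two estimates $C(u,v)\leqslant u$ and $C(u,v)\leqslant v$ holding simultaneously. Retaining the sharper of the two gives $C(u,v)\leqslant \min(u,v) = M(u,v)$ for every $(u,v)\in[0,1]^2$.

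For the lower bound I would pass to complements and invoke the union (Bonferroni) bound. Writing $C(u,v) = 1 - \mathbb{P}\bigl(\{U>u\}\cup\{V>v\}\bigr)$ and bounding the probability of the union by the sum of the marginal probabilities $\mathbb{P}(U>u) + \mathbb{P}(V>v) = (1-u)+(1-v)$ produces $C(u,v)\geqslant u+v-1$. Because $C(u,v)$ is itself a probability it is automatically non-negative, and combining this with the previous estimate gives $C(u,v)\geqslant \max(u+v-1,0) = W(u,v)$.

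Both bounds rest only on standard properties of probability measures, so I do not anticipate any substantial obstacle; the single point deserving mild care is recognizing that the lower bound is the maximum of two distinct estimates (the Bonferroni bound and non-negativity), each of which is binding on a different region of the unit square, so that neither alone suffices. I would also remark that the same conclusion can be reached purely from the axiomatic defining properties of a copula (groundedness, uniform margins, and $2$-increasingness) without reference to the random variables $(U,V)$, but the probabilistic route above is the most transparent and aligns with the representation already introduced in the text.
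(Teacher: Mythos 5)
Your proof is correct, but there is nothing in the paper to compare it against: the paper does not prove this statement at all. It is the classical Fr\'echet--Hoeffding theorem, which the paper simply cites (\cite{frechet1951tableaux,hoeffding1940masstabinvariante}) and then uses as a building block; the appendix proofs concern only the new measures $\iota$ and $\delta$. Your argument is the standard textbook proof and it is sound: event inclusion plus uniform margins gives $C(u,v)\leqslant\min(u,v)$, and De Morgan plus the union bound, combined with non-negativity of probability, gives $C(u,v)\geqslant\max(u+v-1,0)$; your remark that the two lower estimates are each binding on a different region of the square (the Bonferroni bound when $u+v\geqslant 1$, non-negativity otherwise) is exactly right. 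One point deserves to be made explicit: the representation $C(u,v)=\mathbb{P}(U\leqslant u,V\leqslant v)$ with uniform $(U,V)$ is legitimate for an \emph{arbitrary} copula because a copula is, by definition, itself a bivariate distribution function with uniform margins --- not because $U=F_X(X)$, $V=F_Y(Y)$ are uniform, which requires continuity of $F_X$ and $F_Y$ (the paper's footnote flags exactly this caveat). So if the marginals of $(X,Y)$ are not continuous, your probabilistic argument still applies verbatim to any copula of $(X,Y)$; alternatively one falls back on the axiomatic route you mention at the end (groundedness, uniform margins, $2$-increasingness), which proves the same inequalities without any reference to random variables.
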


\noindent 
If $C(u,v)=M(u,v)$ the random variables are said to be comonotonic which implies that $Y=T(X)$ almost surely for some increasing function $T(\cdot)$. 
This implies there exists just one underlying source of randomness and all marginals move perfectly up- or downwards with this source \cite{dhaene2002concept,DHAENE2012357,inghelbrecht2022model}.\newline \indent If $C(u,v)=W(u,v)$ the random variables are said to be countermonotonic which implies that $Y=T(X)$ almost surely for some decreasing function $T(\cdot)$ (see \cite{nelsen2007introduction}). In two dimensions one can see this as perfect comovement but in the opposite direction. \newline \indent Because $T(\cdot)$ is allowed to be a non-linear function, copulas capture also non-linear dependence. Therefore, comonotonicity neatly formalizes the intuition that in times of distress financial asset returns strongly comove (\cite{asymmcorrang,ang2006downside,kelly2014tail,van2016systematicdownbeta,bearbeta,Mcrash,inghelbrecht2022model}). The Fréchet-Hoeffding result is what makes copulas a particularly useful tool to model dependence structures as it allows to go beyond the linear dependence structure of the bivariate Gaussian distribution in an elegant manner. The result by Fréchet-Hoeffding can be expanded by splitting the set of copulas into positively dependent, negative dependent and independent copulas. \\

\noindent To do so we first define positive and negative quadrant dependence. Let $(X,Y)$ be a bivariate random vector with copula $C(u,v)$. $(X,Y)$ is positive (negative) quadrant dependent  or PQD(NQD) if for all $(x,y)\in\Real^2$ we have that
    \begin{align*}
        \mathbb{P}(X\leqslant x, Y\leqslant y)\geqslant(\leqslant) \mathbb{P}(X\leqslant x)\mathbb{P}(Y\leqslant y),
    \end{align*}
or in terms of the copula for all $(u,v)\in[0,1]^2$ we have that
    \begin{align*}
        C(u,v)\geqslant(\leqslant) uv.
    \end{align*}

% \begin{mydef}
%     Let $(X,Y)$ be a bivariate random vector with copula $C(u,v)$. $(X,Y)$ are positive/ negative quadrant dependent (PQD/NQD) if for all $(x,y)\in\Real^2$ we have that
%     \begin{align*}
%         \mathbb{P}(X\leqslant x, Y\leqslant y)\geqslant(\leqslant) \mathbb{P}(X\leqslant x)\mathbb{P}(Y\leqslant y)
%     \end{align*}
%     \noindent or in terms of $C(u,v)$ for all $(u,v)\in[0,1]^2$ we have that
%     \begin{align*}
%         C(u,v)\geqslant(\leqslant) uv
%     \end{align*}
% \end{mydef}

\begin{myth}\label{TH:FH2}
     Let $(X,Y)$ be a bivariate random vector with copula $C(u,v)$. If $X$ and $Y$ are positive quadrant dependent (PQD) then it holds that
    \begin{align*}
        I(u,v)\leqslant C(u,v) \leqslant M(u,v).
    \end{align*}
    \noindent If $X$ and $Y$ are negative quadrant dependent (NQD) then it holds that
    \begin{align*}
        I(u,v)\geqslant C(u,v) \geqslant W(u,v).
    \end{align*}
    \noindent for all $(u,v)\in [0,1]^2 $ with $W(u,v)=\max(u+v-1,0)$, $M(u,v)=\min(u,v)$ and $I(u,v)=uv$ the independence copula.
\end{myth}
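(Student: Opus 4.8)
The plan is to observe that Theorem \ref{TH:FH2} is simply a sharpening of the Fréchet--Hoeffding bounds (Theorem \ref{Th:FH}) in which the dependence sign of $(X,Y)$ is used to replace one of the two universal bounds by the tighter independence copula $I$. Since Theorem \ref{Th:FH} already supplies $W(u,v)\leqslant C(u,v)\leqslant M(u,v)$ for every copula, each case requires me only to produce the remaining inequality against $I(u,v)=uv$, and that inequality is nothing other than the defining property of PQD/NQD.

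For the PQD case I would argue as follows. The upper bound $C(u,v)\leqslant M(u,v)$ is immediate from Theorem \ref{Th:FH}, since it holds for all copulas irrespective of dependence. For the lower bound, PQD states precisely that $\mathbb{P}(X\leqslant x,Y\leqslant y)\geqslant \mathbb{P}(X\leqslant x)\mathbb{P}(Y\leqslant y)$ for all $(x,y)$, which in copula form reads $C(u,v)\geqslant uv=I(u,v)$ for all $(u,v)\in[0,1]^2$. Combining the two gives $I(u,v)\leqslant C(u,v)\leqslant M(u,v)$. For the NQD case the roles reverse: the defining property now yields $C(u,v)\leqslant uv=I(u,v)$, furnishing the new upper bound, while the lower bound $C(u,v)\geqslant W(u,v)$ is again inherited directly from Theorem \ref{Th:FH}, producing the stated chain $W(u,v)\leqslant C(u,v)\leqslant I(u,v)$ written from the bottom up.

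There is no genuine obstacle here; the result is essentially a bookkeeping exercise that pastes together the universal bound and the quadrant-dependence definition. The only point worth verifying, to confirm that the independence copula really does sit inside the Fréchet--Hoeffding envelope so that the refinement is coherent, is the ordering $W(u,v)\leqslant I(u,v)\leqslant M(u,v)$. The upper relation holds since $\min(u,v)-uv=\min(u,v)\bigl(1-\max(u,v)\bigr)\geqslant 0$, and the lower relation holds since $uv-\max(u+v-1,0)$ equals either $uv\geqslant 0$ (when $u+v\leqslant 1$) or $(1-u)(1-v)\geqslant 0$ (when $u+v>1$); in every case $u,v\in[0,1]$ secures nonnegativity. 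Hence the PQD/NQD assumption tightens exactly one side of the original Fréchet--Hoeffding result, which is all that is claimed.
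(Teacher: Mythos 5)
Your proof is correct. Note that the paper itself offers no proof of this theorem at all: it states the result and defers to the literature (Nelsen, \emph{An Introduction to Copulas}), so there is no internal argument to compare against. Your argument is exactly the standard one that the citation points to: the upper (resp.\ lower) bound $C\leqslant M$ (resp.\ $C\geqslant W$) is inherited from Theorem \ref{Th:FH}, which holds for every copula, while the bound against $I(u,v)=uv$ is literally the copula-form definition of PQD (resp.\ NQD) that the paper states just above the theorem, so nothing remains to be shown. Your closing verification that $W(u,v)\leqslant I(u,v)\leqslant M(u,v)$, via $\min(u,v)-uv=\min(u,v)\bigl(1-\max(u,v)\bigr)\geqslant 0$ and $uv-\max(u+v-1,0)\in\{uv,(1-u)(1-v)\}$, is correct and a reasonable sanity check, though it is not logically needed for the statement. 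The only implicit step worth being aware of is the equivalence between the distributional and copula formulations of quadrant dependence (clean when the marginals are continuous); since the paper asserts this equivalence in its definition, you are entitled to use it as you did.
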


\noindent 
The result and definitions for PQD and NQD can be found in \cite{nelsen2007introduction} for example. This result alongside results on our dependence measures being an increasing function of the copula (dependence consistency results) are crucial for the dependence measures to work consistently and be well-defined.

\subsection{CoVaR and $\Delta$-CoVaR}\label{Sec:CovDcov}

To quantify the marginal risk and the dependence between marginal risks we use the VaR, CoVaR and $\Delta$-CoVaR. We denote a payoff by $X$ and its cumulative distribution function by $F_X(x)$.The VaR of $X$ at probability level $\alpha \in [0,1]$ is defined as
    \begin{equation}
        \VaR_{\alpha}(X) = F^{-1}_X(\alpha), 
    \end{equation}
    with $F^{-1}_X(\alpha)$ the quantile function of $X$. While the VaR gives a threshold payoff under a certain probability it cannot describe the losses which exceed this threshold. However, since the VaR is based on the quantile function it always exists. To measure the dependence between the payoffs of assets the CoVaR was proposed.  For given loss probabilities $\alpha,\beta \in (0,1)$, \begin{equation}\label{1ver}
        \CoV_{\alpha,\beta}(Y| X)=\VaR_{\beta}(Y| X \leqslant \VaR_{\alpha}(X)).
\end{equation}
An often used definition of the CoVaR from \cite{AdrianBrunnCovar} is given as
\begin{equation}\label{2ver}
    \CoV^{=}_{\alpha,\beta}(Y| X)=\VaR_{\beta}(Y| X= \VaR_\alpha(X)).
\end{equation}

\noindent In this paper, we use definition (\ref{1ver}). \cite{MainikSchaanning+2014+49+77} provides a comprehensive comparison of the definition (\ref{1ver}) and (\ref{2ver}). They prove that $\CoV^{=}$ is not an increasing function of the dependence structure and hence it is not dependence consistent whereas $\CoV$ is. Since we require dependence measures that are increasing functions of the underlying dependence structure this definition of the $\CoV$ is used. Next, one can measure the risk contribution of $X$ to $Y$ by comparing $\CoV_{\alpha,\beta}(Y| X)$ to the $\CoV$ in some benchmark state. In the literature the following definitions exist. Given $\alpha,\beta \in (0,1)$, we can define
\begin{align} 
        \DCov_{\alpha,\beta}(Y| X)&=\CoV_{\alpha,\beta}(Y| X)-\VaR_\beta(Y),\\
        \DCov^{=,\text{med}}_{\alpha,\beta}(Y| X)&=\CoV^=_{\alpha,\beta}(Y| X)-\CoV^=_{1/2,\beta}(Y| X). \label{2ex}
    \end{align}

\noindent The $\DCov^{=,\text{med}}$ was first proposed by \cite{AdrianBrunnCovar}. In general, the $\DCov_{\alpha,\beta}(Y| X)$ can be interpreted as a risk contribution measure which is the change in $VaR_{\beta}(Y)$ if $VaR(X)$ moves from a benchmark state to a distressed state. In the first definition the benchmark state $\VaR_\beta(Y)$ can be seen as the $\CoV$ of $Y$ given $X$ under independence. This allows for an easier statistical interpretation as a dependence measure because it allows to isolate what risk is due to the dependence between the random variables instead of their marginal risk. Furthermore, in \cite{dhaene2022systemic} it is proven that this definition is dependence consistent under more general assumptions. Therefore, we will use this definition of the $\DCov$. 

\newpage

\subsection{Copulas and the CoVaR}\label{Sec:copCov}

Since the $\CoV$ involves the conditional distribution of random variables there exists a link between copula and the $\CoV$. This link was first establised by \cite{MainikSchaanning+2014+49+77} for $\CoV$, by \cite{hakwa2015analysing} for $\CoV^=$ and further elaborated upon in \cite{BERNARDI20178} for both versions. Following \cite{BERNARDI20178} we use their $\CoV$ definition related to the joint lower left tail (joint loss) exceedance: 

\begin{mydef}\label{def:CovCopula}
    Let $(X,Y)$ be a bivariate random vector with copula $C(u,v)$. Then the lower left $\CoV_{\alpha,\beta}$ is defined as follows:
    \begin{align*}
        F_{Y| X\leqslant \VaR_\alpha(X)}=\mathbb{P}(Y\leqslant \CoV_{\alpha,\beta}| X\leqslant \VaR_\alpha(X))=\frac{C(\alpha,F_Y(\CoV_{\alpha,\beta}))}{\alpha}=\beta.
    \end{align*}
    \noindent Let $F_Y(\CoV)=\omega$. Then, at a significance level $\beta\in(0,1)$ $\omega(\alpha,\beta,C)$ is the largest solution \footnote{Using any simple root finding algorithm like Newton-Raphson the solution can easily be found numerically. Because continuity is satisfied by all copulas and copulas are increasing in both arguments the solution is often unique too \cite{nelsen2007introduction}. Also, analytical solutions exist for copulas shown in \cite{BERNARDI20178}.} to the following equation:
    \begin{align*}
        C(\alpha,\omega)=\alpha\beta
    \end{align*}
    \noindent and therefore $\CoV_{\alpha,\beta}(Y| X)= F^{-1}_Y(\omega)=\VaR_\omega (Y)$.
\end{mydef}
\noindent 
This definition makes apparent that the full information of the dependence structure is contained within $\omega$. Also, the beyond the level $\omega$ the $\CoV$ just depends on the marginal distribution of $Y$. To obtain $\CoV$ under exceedances appropriate transformations of the copula need to be used. For example, in this paper we are also interested in the $\CoV$ related to the joint upper right (joint profit) exceedance. In this case $\CoV$ is defined as 

\begin{mydef}\label{def:Covcopula2}
Let $(X,Y)$ be a bivariate random vector with copula $C(u,v)$. Then the upper right $\CoV_{\alpha,\beta}$ is defined as follows:
    \begin{align*}
    \mathbb{P}(Y\geqslant \CoV_{\alpha,\beta}| X\geqslant \VaR_\alpha(X))=\frac{\Bar{C}(\alpha,F_Y(\CoV))}{1-\alpha}=1-\beta.
\end{align*}
\noindent Where the joint tail function is $\Bar{C}(u,v)=1-u-v+C(u,v)$. Let $F_Y(\CoV_{\alpha,\beta})=\gamma$. Then $\gamma(\alpha,\beta,C)$ is the smallest solution to  $\Bar{C}(\alpha,\gamma)=(1-\alpha)(1-\beta)$ and $\CoV_{\alpha,\beta}(Y| X)=F^{-1}_Y(\gamma)=\VaR_{\gamma}(Y)$.
\end{mydef}

\noindent 
In the following sections unless otherwise specified the lower left tail CoVaR from definition \ref{def:CovCopula} is used. It can easily be seen that all results on the lower left tail CoVaR in this paper equally hold for the right upper tail CoVaR with different bounds and flipped inequalities. Two more definitions for the upper left and lower right tail CoVaR could be analogously defined. However, in finance one is mostly interested in the comovement of large losses and profits. Hence, the provided definitions suffice for this paper. Using the Fréchet-Hoeffding bounds and dependence consistency \cite{BERNARDI20178,Jaworski+2017+1+19} establish that:
\begin{itemize}
    \item $\VaR_{\alpha\beta}(Y)\leq\CoV_{\alpha,\beta}(Y\mid X)\leq \VaR_{\beta}(Y)$ because $\alpha\beta\leq \omega\leq \beta$ for PQD payoffs
    \item $\VaR_{1-\alpha(1-\beta)}(Y)\geq\CoV_{\alpha,\beta}(Y\mid X)\geq \VaR_{\beta}(Y)$ because $1-\alpha(1-\beta)\geq\omega\geq\beta$ for NQD payoffs.
\end{itemize} \noindent The lower bound in the former is the $\CoV/\omega$ under comonotonicity and the upper bound in the latter is the $\CoV/\omega$ under countermonotonocity. Therefore, by using these bounds we will construct dependence measures using $\omega$ and the $\DCov$ which will work similar to a classical Pearson correlation. However, in contrast to a classical Pearson correlation our dependence measures are non-linear, local, invariant under a wider range of transformations on the marginals and always well-defined \footnote{Problems with the Pearson correlation are posed in \cite{embrechts2001correlation,hofert2019elements}.}. Furthermore, computing our measures is possible on any given dataset and does not require options data like \cite{DHAENE2012357,bearbeta,inghelbrecht2022model}. 
\newpage

\subsection{Dependence measures}\label{Sec:delta}
\subsubsection{Definitions}\label{Ssec:def}
The first set of dependence measures are constructed using $\omega/\gamma$ depending on if we consider lower left tail or upper right tail exceedances. 
\begin{mydef}\label{def:iota}
    Let $(X,Y)$ be a random vector then the proposed dependence measures $\iota^L_{\alpha,\beta}(Y\mid X),\iota^U_{\alpha,\beta}(Y \mid X)$ are defined as follows:
\end{mydef}
    \begin{equation}
        \iota^L_{\alpha,\beta}(Y\mid X)=    \begin{cases}
        \frac{\omega-\beta}{\alpha\beta-\beta} & \text{if }\omega-\beta \leqslant 0 \\
        \\
        \frac{-(\omega-\beta)}{(1-\alpha)(1-\beta)} & \text{if } \omega-\beta> 0
    \end{cases}
    \end{equation}
        \begin{equation}
        \iota^U_{\alpha,\beta}(Y\mid X)=    \begin{cases}
        \frac{\gamma-\beta}{\alpha-\alpha\beta} & \text{if }\gamma-\beta> 0 \\
        \\
        \frac{\gamma-\beta}{\alpha\beta} & \text{if } \gamma-\beta\leqslant 0
    \end{cases}
    \end{equation}
\noindent With $\omega,\gamma$ defined in definition \ref{def:CovCopula} and \ref{def:Covcopula2} respectively.\\
\noindent The big upshot of $\iota$  is that it satisfies the properties of a dependence measure as given in \cite{nelsen2007introduction} while being more general by measuring dependence locally at the quantiles $\alpha,\beta$. All other relevant properties will be examined in the next section. There, we show that the $\iota$ measure works similarly to a correlation albeit more generalized. This measure has a rather straightforward interpretation. Lets say $\alpha=\beta=0.05$. Then, for example $\iota_{0.05,0.05}^L(Y\mid X)$ measures to what extent and in which direction $Y$ below its 5\% level comoves with $X$ given $X$ moves below its 5\% level. Hence, intuitively its evident that $\iota_{0.05,0.05}^L(Y\mid X)$ might not be equal to $\iota_{0.05,0.05}^L(X\mid Y)$. Therefore, the measure  behaves like a generalized correlation. This interpretation is what sets $\iota$ apart from a very similar measure developed by \cite{CMIEL202055}. Additionally, $\iota$ allows for an easy extension to a measure that also takes into account the risk the dependence poses on $Y$. This measure will be called $\delta$ and is defined as follows:

\begin{mydef}\label{def:delta}
    Let $(X,Y)$ be a random vector then the proposed dependence risk measures $\delta^L_{\alpha,\beta}(Y| X),\delta^U_{\alpha,\beta}(Y| X)$ are defined as follows:
    \begin{equation}
        \delta^L_{\alpha,\beta}(Y| X)=    \begin{cases}
        \frac{\DCov_{\alpha,\beta}(Y| X)}{\VaR_{\alpha\beta}(Y)-\VaR_{\beta}(Y)} & \text{if }\DCov_{\alpha,\beta}(Y| X)\leqslant 0 \\
        \        \frac{-\DCov_{\alpha,\beta}(Y| X)}{\VaR_{1-\alpha(1-\beta)}(Y)-\VaR_{\beta}(Y)} & \text{if } \DCov_{\alpha,\beta}(Y| X)> 0
    \end{cases}
    \end{equation}
        \begin{equation}
        \delta^U_{\alpha,\beta}(Y| X)=    \begin{cases}
        \frac{\DCov_{\alpha,\beta}(Y| X)}{\VaR_{\alpha+\beta-\alpha\beta}(Y)-\VaR_{\beta}(Y)} & \text{if }\DCov_{\alpha,\beta}(Y| X)> 0 \\
        \\
        \frac{-\DCov_{\alpha,\beta}(Y| X)}{\VaR_{\beta(1-\alpha)}(Y)-\VaR_{\beta}(Y)} & \text{if } \DCov_{\alpha,\beta}(Y| X)\leqslant 0
    \end{cases}
    \end{equation}
    \noindent With the $\DCov$ in $\delta^L$ following definition \ref{def:CovCopula} and the $\DCov$ in $\delta^U$ following definition \ref{def:Covcopula2}.
\end{mydef}
\noindent The $\delta$ measure acts more like a regression coefficient because it takes dependence and the effect of dependence on the $Y$ variable into account. By using the quantile function this measure will always be well-defined and it contains information on all moments of $Y$. Also, the interpretation and intuition is rather simple. If we take the example used above then $\delta_{0.05,0.05}^L(Y\mid X)$ not just shows the extent and direction of the comovement of $Y$ below its 5\% level given $X$ is below its 5\% level but also what the effect of this comovement on $Y$ is. Therefore, $\delta$ shows how close the dependence structure and the risk of $Y$ are close to the comonotonic upper bound given a shock in $X$. From the definitions its evident that the measures are strongly linked. We make this link clear by showing that if the distribution of $Y$ is (locally) uniform then the two measures will be equivalent (for proof in the Appendix \ref{proof:unifeq}). This corresponds to the notion that $\delta$ is just like $\iota$ but with information on the distribution $Y$ incorporated in it. Conversely, $\iota$ can be seen as $\delta$ when one is indifferent about or has no information on the distribution of $Y$.  

\subsubsection{Properties}\label{Ssec:prop}
Both measures have some enticing properties which set them apart from other dependence measures \cite{gijbels2021specification}. All proofs can be found in the Appendix (Section \ref{App:proofs}). The properties of $\iota$ are:
\begin{enumerate}
  \item Dependence consistency. 
    \item $\iota\in[-1,1]$, $\iota=-1\Longleftrightarrow$ counter-monotonicity, $\iota=1 \Longleftrightarrow$ comontonicity and $\iota=0 \Longleftrightarrow$ independence. 
    \item Invariant under strictly increasing functions of $X$ and $Y$.
    \item  Risk asymmetry: $\iota_{\alpha,\beta}(Y\mid X)= \iota_{\alpha,\beta}(X\mid Y)$ $\Longleftrightarrow$ the copula is symmetric $C(u,v)=C(v,u)$. 
    \item Tail asymmetry: $\iota^L_{\alpha,\beta}(Y\mid X)=\iota^U_{1-\alpha,1-\beta}(Y\mid X)$ $\Longleftrightarrow$ the copula is radially symmetric $C(u,v)=\Bar{C}(1-u,1-v)$.
    \item Locality: $\iota_{\alpha,\beta}(Y\mid X)$ measures the dependence between $X$ and $Y$ at the quantiles given by $\alpha,\beta$.
    \item Consistency and asymptotic normality.
\end{enumerate}
\noindent These properties hold for both the upper as lower tail version of $\iota$. All properties are proven in the Appendix. Properties 2-6 make $\iota$ more general than the Pearson correlation, Spearmans Rho and Kendalls Tau. In the literature local versions of the Pearson correlation do exist \cite{asymmcorrang,TJOSTHEIM201333} but these only relax on locality and hence measure local linear dependence.\newline \indent Economically, risk asymmetry can be interpreted as comovement in $Y$ given a move in $X$ being more or less likely than comovement in $X$ given a move in $Y$. This can for example be seen in international stock markets where a movement in US stocks is more likely to affect stock markets of other countries than a move in any of these stock markets affecting US stock markets. In distress states this results in asymmetric channels of contagion. This asymmetry is modelled in finance using Hawkes processes \cite{portchoicecontagionHawkes}. Tail asymmetry is also sometimes referred to as the leverage effect and it means in asset markets that during a large downward moves asset returns are stronger correlated than during equally large upward moves \cite{LonginSolnikExtremeCorr,asymmcorrang}. In finance this phenomenon is often modelled with asymmetric correlations or asymmetric jumps \cite{EngleAsymmcorr,corrriskportchoice,ait2009portfoliojumpschoice,portchoicecontagionHawkes}.\newline \indent Since the $\delta$ measure works more like an asset pricing beta than a correlation it has some different properties. The properties of $\delta$ are the following:
\begin{enumerate}
\item Dependence consistency. 
    \item If $F_Y^{-1}$ is strictly increasing: $\delta\in[-1,1]$, $\delta=-1\Longleftrightarrow$ counter-monotonicity, $\delta=1 \Longleftrightarrow$ comontonicity and $\delta=0 \Longleftrightarrow$ independence. 
    \item Invariant under strictly increasing functions of $X$ and strictly increasing linear functions of $Y$.
    \item  Risk asymmetry: $\delta_{\alpha,\beta}(Y\mid X)= \delta_{\alpha,\beta}(X\mid Y)$ $\Longleftarrow$ the copula is symmetric $C(u,v)=C(v,u)$ and $X$ is a linear function of $Y$. 
    \item Tail asymmetry: $\delta^L_{\alpha,\beta}(Y\mid X)=\delta^U_{1-\alpha,1-\beta}(Y\mid X)$  $\Longleftarrow$ the copula is radially symmetric $C(u,v)=\Bar{C}(1-u,1-v)$ and $Y$ is symmetric.
    \item Locality: $\delta_{\alpha,\beta}(Y\mid X)$ measures the dependence between $X$ and $Y$ at the quantiles given by $\alpha,\beta$.
\item Consistency.
    \end{enumerate}
\noindent Again, the properties hold for both the upper right and lower left tail version of $\delta$. The main difference is by taking the distribution of $Y$ into account through the quantile function. Therefore, it acts similarly to an asset pricing beta. This comes at the cost of property 7 and the strength of properties 2-5. However, again it is more general than a beta by measuring non-linear and local dependence. Furthermore, using property 2 one can show $\delta$ is invariant to location-scale shifts of $Y$ and $X$. This means that drift and volatility will not affect $\delta$ which is useful as asset pricing data often exhibits a drift and volatility clustering \cite{cont2001empirical}. Lastly, these measures always exist, can be computed on any data set and are quite simple\footnote{The HIX comonotonicity measure requires index option prices and the existence of the variance \cite{DHAENE2012357}. }.

\section{Methodology}\label{Sec:empirics}
\subsection{Estimation}
In order to estimate $\iota$ and $\delta$ there are many possible techniques. One must estimate the copula in some way to obtain $\omega$ or $\gamma$ and then one can estimate the Value-at-Risk of $Y$ using a multitude of methods. However, one must mind that using certain techniques may impose restrictions that imply restrictions on the properties of these measures. For example, most bivariate GARCH models impose a normal or t-copula with symmetric margins. In this case all the special properties of the measures become irrelevant as the statistical model only allows for linear dependence and has one global dependence structure. Hence, ideally one must estimate these measures as distribution free and data driven as possible. In \cite{leeuwenkamp2022making} a data driven estimator for $\omega/\gamma$ and the $\DCov$ was developed we will use this estimator in this paper too.\\

\noindent First, in accordance with the copula estimation literature such as \cite{hofert2019elements} the observations are first transformed to the interval $[0,1]$ with the empirical quantile function. Then, on these pseudo-observations the empirical beta copula (see Appendix \ref{Cop:empbeta} for more details) estimator proposed by \cite{SEGERS201735} is used. The empirical beta copula is a fully nonparametric copula estimator with desirable properties such as continuity \footnote{ According to the Intermediate Value Theorem and its corollary Bolzano's Theorem at least continuity is required to guarantee the root of a function and hence the solution in terms of $\omega$ to $C(\alpha,\omega)-\alpha\beta=0$ exists. } and improved small sample performance in terms of mean squared error compared to other nonparametric copula estimators while being asymptotically equivalent to the empirical copula estimator \cite{SEGERS201735}. Also, it exhibits no boundary bias and the smoothing is fully determined by the sample size \footnote{On the other hand, kernel based estimators suffer from boundary bias and require a bandwidth parameter \cite{omelka2009improved}.}. Once an estimate $\hat{C}(u,v)$ is obtained using this method then also an estimate $\hat{\Bar{C}}(u,v)$ can be easily obtained. Then, by solving the equations
\begin{align*}
    \hat{C}(\alpha,\omega)-\alpha\beta&=0
\\ \hat{\Bar{C}}(\alpha,\gamma)-(1-\alpha)(1-\beta)&=0\end{align*}
\noindent in terms of $\omega,\gamma$ estimates for $\omega$ and $\gamma$ are obtained. One can see here that to compute the estimate of $\omega$ or $\gamma$ at a certain pair of quantiles $\alpha,\beta$ one just needs to estimate the entire copula once. This in contrast to local correlations and betas where one needs to re-estimate the entire dependence structure at different thresholds.\newline \indent Using $\hat{\omega}$ or $\hat{\gamma}$ $\iota^L$ and $\iota^U$ are easy to estimate. To estimate $\delta$ one additionally needs to estimate the VaR of $Y$. There is an extensive literature on how to estimate univariate VaRs \cite{Condvarcompari}. However, to keep the method fully empirical we estimate the VaR of $Y$ using the empirical quantile function.\newline \indent In the Appendix it will be proven that estimates of $\omega/\gamma$ and hence $\iota^L/\iota^U$ are consistent and asymptotically normal in a point-wise and uniformly. As far as the authors are aware it is not possible to prove asymptotic normality for $\delta$ estimates as the asymptotic distribution is a ratio of normals . By imposing some functional form on $F^{-1}$ like for example a standard normal or t quantile function convergence properties could be improved. However, this comes at the cost of imposing assumptions such as symmetry on the margins and a possible increased asymptotic (co)variance \cite{genest2010covariance}.\newline \indent Because the estimators are nonparametric assumptions like linear dependence, symmetry of the copula, symmetry of the margins , radial symmetry of the copula, tail (in)dependence and the existence of moments are not imposed. Hence, the estimators are always well-defined if one assumes continuous random variables. Lastly, due to the properties in Section \ref{Ssec:prop} estimates can be used to verify features like risk (a)symmetry, tail (a)symmetry and non-linear dependence in the data.\\

\noindent To estimate the measures in a time-dependent fashion in this paper we opt for expanding window estimation because it keeps the data to estimate fixed apart from the observations that are added at each time step. While it does require assuming stationarity within each window we aim to clearly show the time-dependent dependence structure through the changes between each window. Lastly, for the sake of simplicity and computational efficiency the expanding window method was chosen. 
\newpage
\subsection{Test Case}\label{ssec:case}
To show that our dependence measures alleviate some of the problems with correlations and work similar to the MFID/HIX while being data independent we use a test case from \cite{embrechts2001correlation,DHAENE2012357} to show that $\iota$ and $\delta$ remain consistent whereas the correlation fails. The test case has been adapted for our purposes and is defined as follows
\begin{enumerate}
    \item Generate 5000 observations $(X_1,X_2)$ from a bivariate normal distribution with $\rho=0.95, \mu_1=0,\mu_2=0,\sigma_1=0.2, \sigma_2=s$.
    \item Take $S_i=\exp(X_i), i=1,2$. 
    \item Estimate the empirical beta copula on $(S_1,S_2)$, estimate $\iota^L_{0.5,0.5}(X_2\mid X_1),\iota^U_{0.5,0.5}(X_2\mid X_1),\delta^L_{0.5,0.5}(X_2\mid X_1),\delta^U_{0.5,0.5}(X_2\mid X_1), \rho(S_1,S_2)$ and compute the theoretical values. 
    \item Vary $s$ from 0.2 to 2 in steps of 0.01 and repeat steps 1-3 for each $s$.
    \item Plot for each $s$ the $\iota/\delta$ estimates, $\rho$ estimates and the theoretical $\iota^L,\iota^U,\delta^L,\delta^U,\rho(S_1,S_2)$. 
\end{enumerate}
\noindent Because the $X_i$ are normally distributed $S_i=\exp(X_i)$ will have a lognormal distribution with $\mathbb{E}[S_i]=\exp(\mu_i), \var(S_i)=\exp(2\mu_i)(\exp(\sigma^2_i)-1),i=1,2$. The correlation is then
\begin{align*}
    \rho(S_1,S_2)=\frac{\exp(\rho(X_1,X_2)\sigma_1\sigma_2)-1}{\sqrt{\exp(\sigma^2_1)-1}\sqrt{\exp(\sigma^2_2)-1}},
\end{align*}
\noindent which is a decreasing function in $\sigma_2$. The exponential transformation will not affect the copula: a Gaussian copula but now with lognormal margins. According to property 3 of both $\iota$ and $\delta$ an increasing transformation of $X_1$ will not change $\delta$ and $\iota$. However, now $X_2$ is also transformed with a non-linear increasing transformation which can change $\delta$ but won't change $\iota$. To assess robustness of $\delta$ in this case the theoretical values of $\delta$ are computed from the Gaussian copula with normal margins (equivalent to a bivariate normal distribution) and compared to $\delta$ estimates obtained from the empirical beta copula.\newline \indent A significance level of $\alpha=\beta=0.5$ has been chosen as then the $\iota$'s and $\delta$'s cover the entire distribution from center to the tails and are therefore most similar to $\rho$.\newline \indent The test case scenario is meant to emulate an event of stress in the market where two assets are strongly comoving but where the volatility of one asset relative to another increases. The scenario could be applicable for an individual stock that crashes versus an index that contains said stock. 
\begin{figure}[H]
     \centering
     \begin{subfigure}[t]{0.38\textwidth}
    \includegraphics[width=\textwidth]{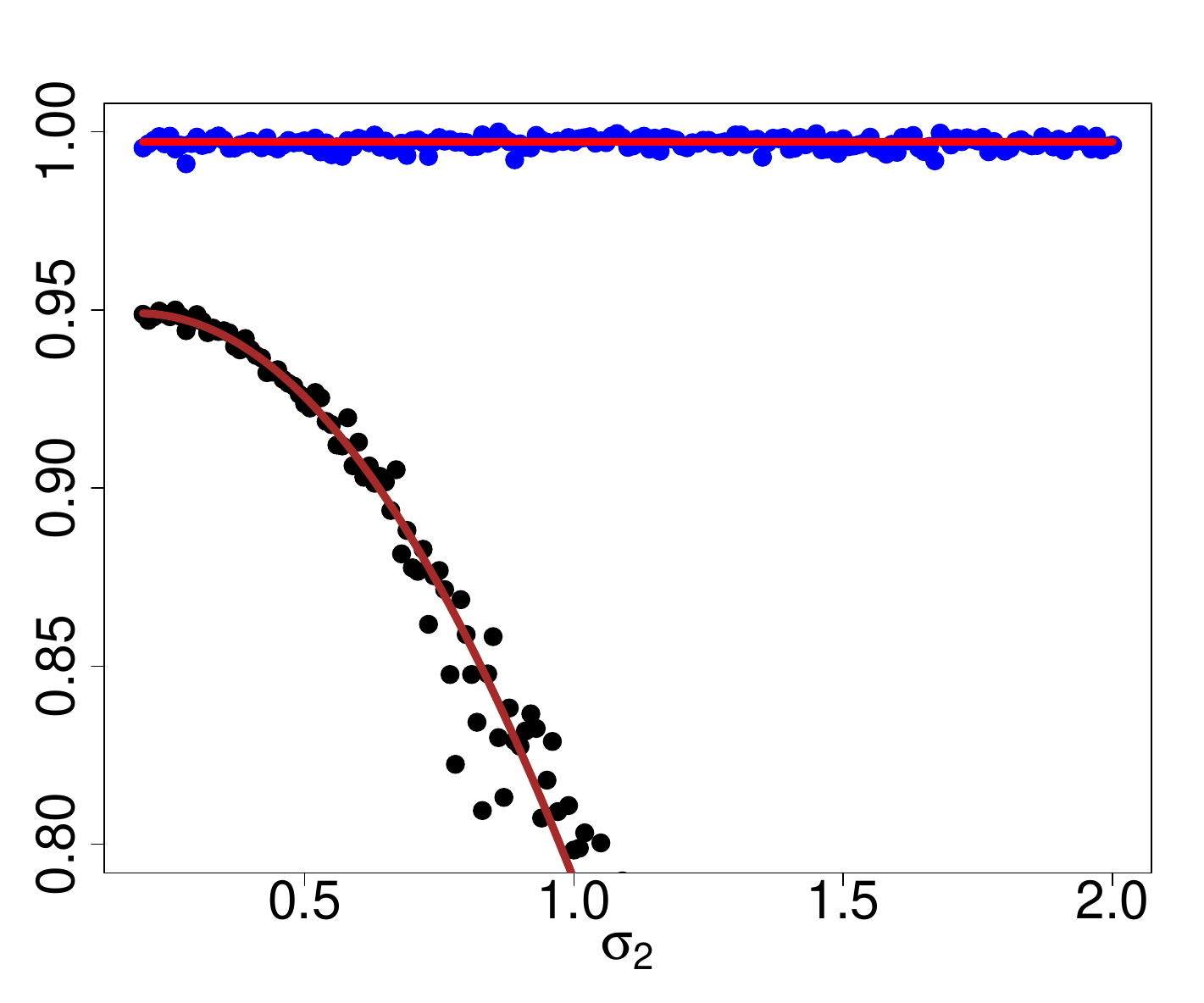}
    \caption{}
    \label{fig:Simplotiota}
\end{subfigure}
\begin{subfigure}[t]{0.38\textwidth}
    \includegraphics[width=\textwidth]{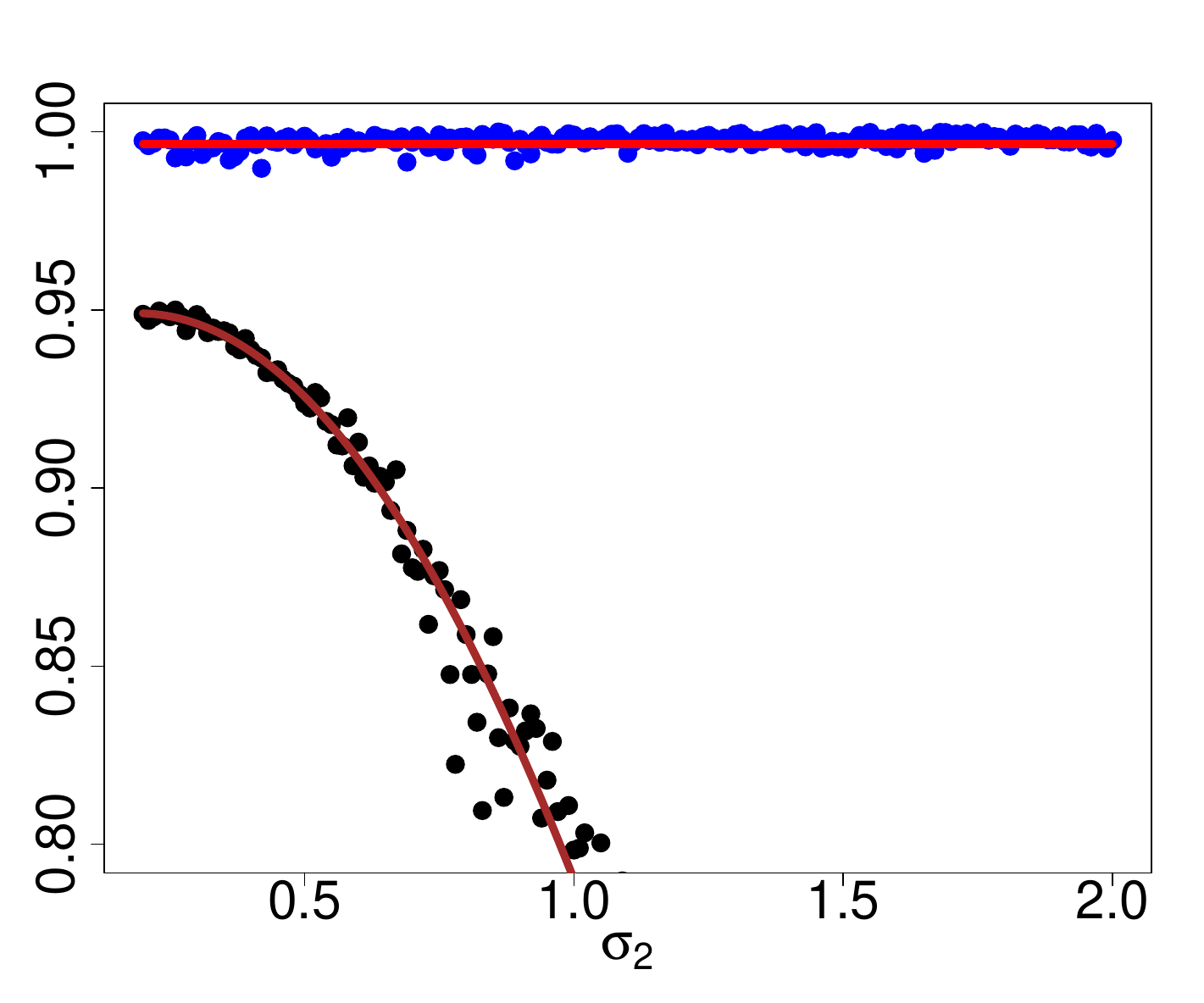}
    \caption{ }
    \label{fig:Simplotdelta}
\end{subfigure}
\caption{Estimates of $\iota^L_{0.5,0.5}(Y\mid X)$ (left) and $\delta^L_{0.5,0.5}(Y\mid X)$ (right) in the blue dots versus the theoretical values (red line) compared to the correlation estimates (black dots) and their theoretical values (brown line). All are in function of $\sigma_2$. }
\label{fig:comps}
\end{figure}
\noindent In Figure \ref{fig:comps} it is clear how sensitive correlations are to this transformation of the data while $\iota$ and $\delta$ estimates remain consistent in showing the underlying unchanged dependence structure. Because the Gaussian copula is symmetric and the margins are a linear function of each other it only suffices to show only one of each (lower left tail in this case). The figure shows a striking resemblance to Figure 1 in \cite{DHAENE2012357} which shows indeed just like the HIX $\iota$ and $\delta$ measure non-linear dependence.\newline \indent Furthermore, $\iota$ estimates due to their invariance under strictly increasing transformations remain truly consistent to the theoretical values from the untransformed distribution. On the other hand, the $\delta$ estimates show a slight deviation as they seem to increase. This is verified by means of linear regression of the $\delta$ estimates on $\sigma_2$ which shows a significant positive coefficient estimate whereas that of the $\iota$ estimates on $\sigma_2$ is insignificant. Therefore, the effect of the non-linear transformation of $X_2$ seems to have a negligible effect on the $\delta$ estimates here. \newline \indent Lastly, the test case is also a comforting confirmation of statistical properties as $\iota$ estimates seem to exhibit no bias and have a tightly controlled variance. The $\delta$ estimates exhibiting similar behavior is encouraging too despite the lack of a formal proof.  
\section{Data}\label{Sec:data}

For the empirical results we use the data from \cite{pelgerHF2020}\footnote{ Obtained from the authors website: \href{https://mpelger.people.stanford.edu/data-and-code}{https://mpelger.people.stanford.edu/data-and-code} }. The dataset contains TAQ (Trade-and-Quote) data of S\&P 500 stock returns from January 2nd 2004 to December 30th 2016 aggregated at the 5 minute level. To ensure enough trades to compute returns the data each day starts at 9:35 instead of 9:30. This results in 77 returns observations per stock per day. The dataset is a balanced panel of 332 stocks to ensure it consists of liquid stocks and aid estimation. For more details on the dataset we refer the reader to \cite{pelgerHF2020}. High frequency data is used because it provides large sample sizes and is found to have similar statistical properties to lower frequency data \cite{Samoroddependence}.  \newline\indent
To reduce the impact of microstructure effects on estimation the 5 minute aggregation is chosen in accordance with the high-frequency econometrics literature \cite{ait2014high}\footnote{To avoid the effect of bid-ask bounces the volume weighted transaction price at the last second of each 5 minute interval was used.}. For robustness and to investigate temporal aggregation effects we also apply our methods to returns aggregated at the 10 minute, 30 minute and 1 hour levels. \newline \indent
Since the dataset is rather large in the cross-sectional and the time-series dimension our main interest lies in the dependence structure during market distress events, we only estimate our measures in an expanding window with data of 30 prior days\footnote{ This implies $30\times77=2310$ observations at least per stock. To have this many observations with daily data one would need more than 9 years of data assuming 250 trading days/year.} on the following days: September 15th 2008 (Lehman Bankruptcy), September 16th 2008 (AIG bailout), September 29th 2008 (TARP fails to pass Congress), October 3rd 2008 (amended TARP passed by Congress) and May 6th 2010 (the Flash crash). The GFC days were chosen because prior stress was already built-up in financial markets due to the GFC unfolding with additional events on the respective days. Next, we consider the Flash Crash of May 6th 2010. On this day the Dow Jones Industrial Average (DJIA) reached -7\% in a matter of minutes before quickly recovering to around -2\%. Since the causes of this Flash Crash have been determined to be mostly endogenous \cite{easley2011flash,filimonov2012quantifying,menkveld2019flash}, the speed and severity of the downturn provide a contrast to the GFC. \newline \indent The working hypothesis is if sudden market stress is mostly endogenously driven then our dependence measures computed on individual stock returns conditional on market returns\footnote{ In our case we use the market factor constructed by \cite{pelgerHF2020}. This a value-weighted average of all stocks in CRSP minus the daily risk-free rate.} should pick up increasing dependence of joint losses early. Similar reasoning applies to the recovery and joint profits. In contrast, longer-run stress build-up should result in already high values. Lastly, we suspect our measures on lower frequency returns will still show some of these effects.\newline \indent Therefore, our dependence measures could not just help market makers and high-frequency traders to keep track of intraday risk in their inventories but also market participants working at lower frequencies. Early risk detection can result in significantly reduced losses as \cite{easley2011flash} and \cite{menkveld2019flash} either provide examples or estimates of significant market maker losses on this day.

\newpage
\section{Results}\label{Sec:res}
In all the results that follow we have computed both the lower left tail as well as upper right tail measures of the individual stock returns $Y$ conditional on the stock market returns $X$ at levels $\alpha=\beta\in\{0.01/0.99,0.05/0.95,0.10/0.90,0.5/0.5\}$. Next, the results of each individual stock are averaged to obtain a measure of average dependence/lack of diversification in the stock markets at a particular point in time. Averaging returns before computing the dependence measures just results in a measure always close to 1 as equally weighted S\&P 500 stock and CRSP returns are virtually comonotonic.\newline\indent Both the fully nonparametric as well as a fully parametric (Gaussian copula and margins) estimation approaches are used. The fully Gaussian approach is a baseline meant to demonstrate how different the actual dependence structure is from those assumed in standard or even high-frequency asset pricing models. Hence, it aims to demonstrate that these models do not adequately capture what happens in the markets with potentially costly consequences. \\ \indent

First, due to its uniqueness in US stock market history the results of the Flash Crash are shown.
\begin{figure}[H]
     \centering
     \begin{subfigure}[t]{0.38\textwidth}
    \includegraphics[width=\textwidth]{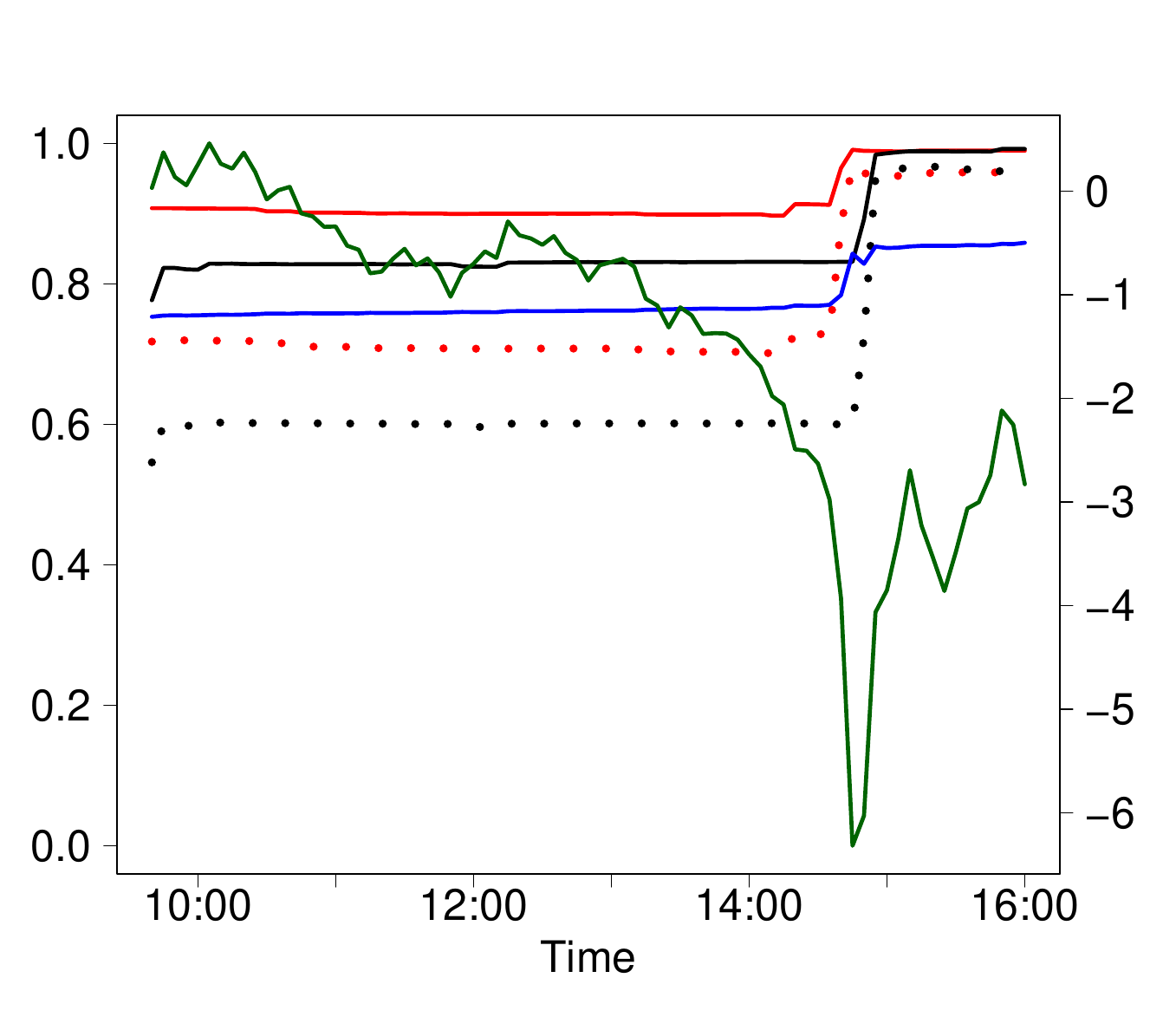}
    \caption{L: $\alpha=\beta=0.01$, U: $\alpha=\beta=0.99$}
    \label{fig:FC99}
\end{subfigure}
\begin{subfigure}[t]{0.38\textwidth}
    \includegraphics[width=\textwidth]{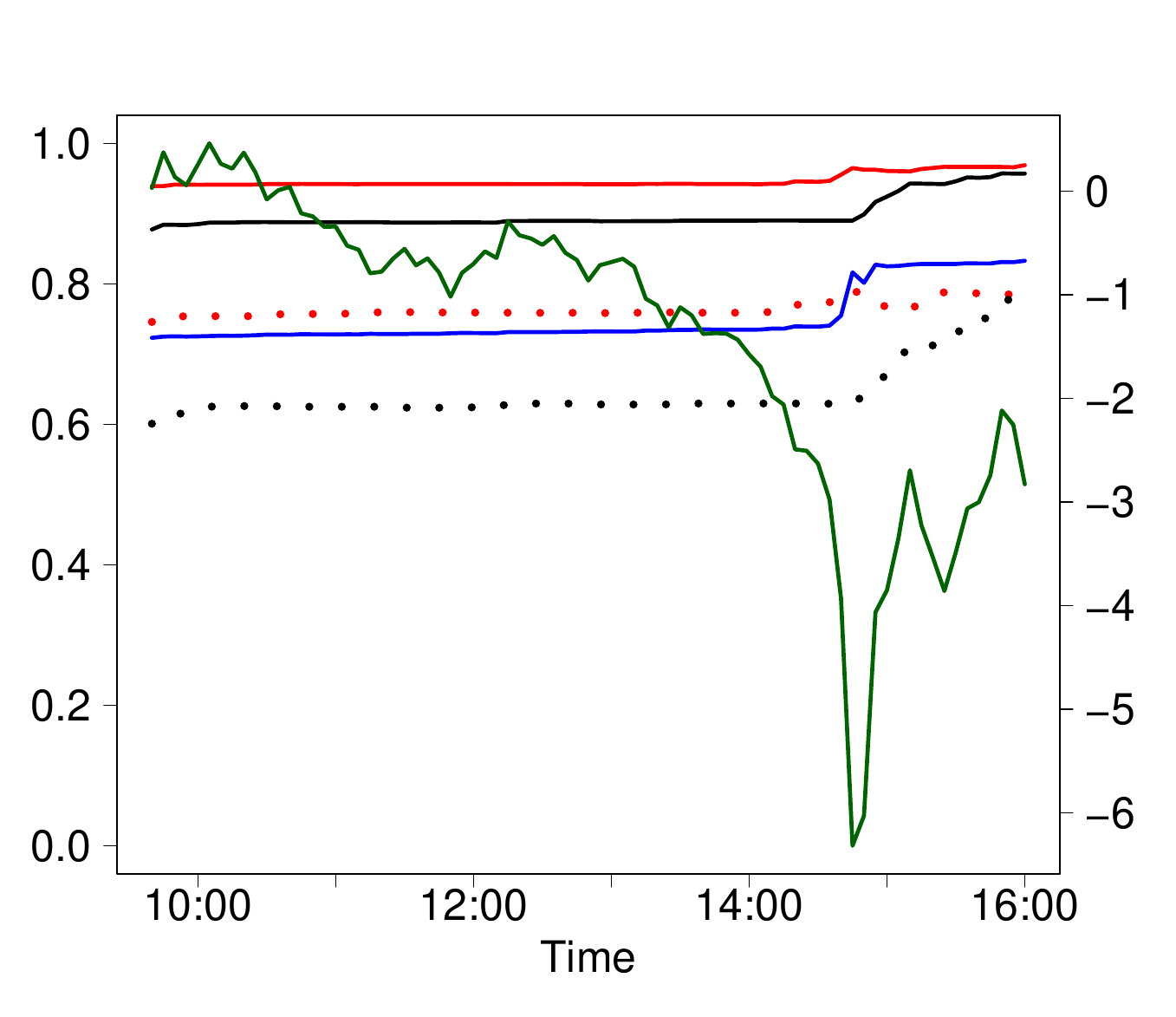}
    \caption{L: $\alpha=\beta=0.05$, U: $\alpha=\beta=0.95$ }
    \label{fig:FC95}
\end{subfigure}
\begin{subfigure}[t]{0.38\textwidth}
    \includegraphics[width=\textwidth]{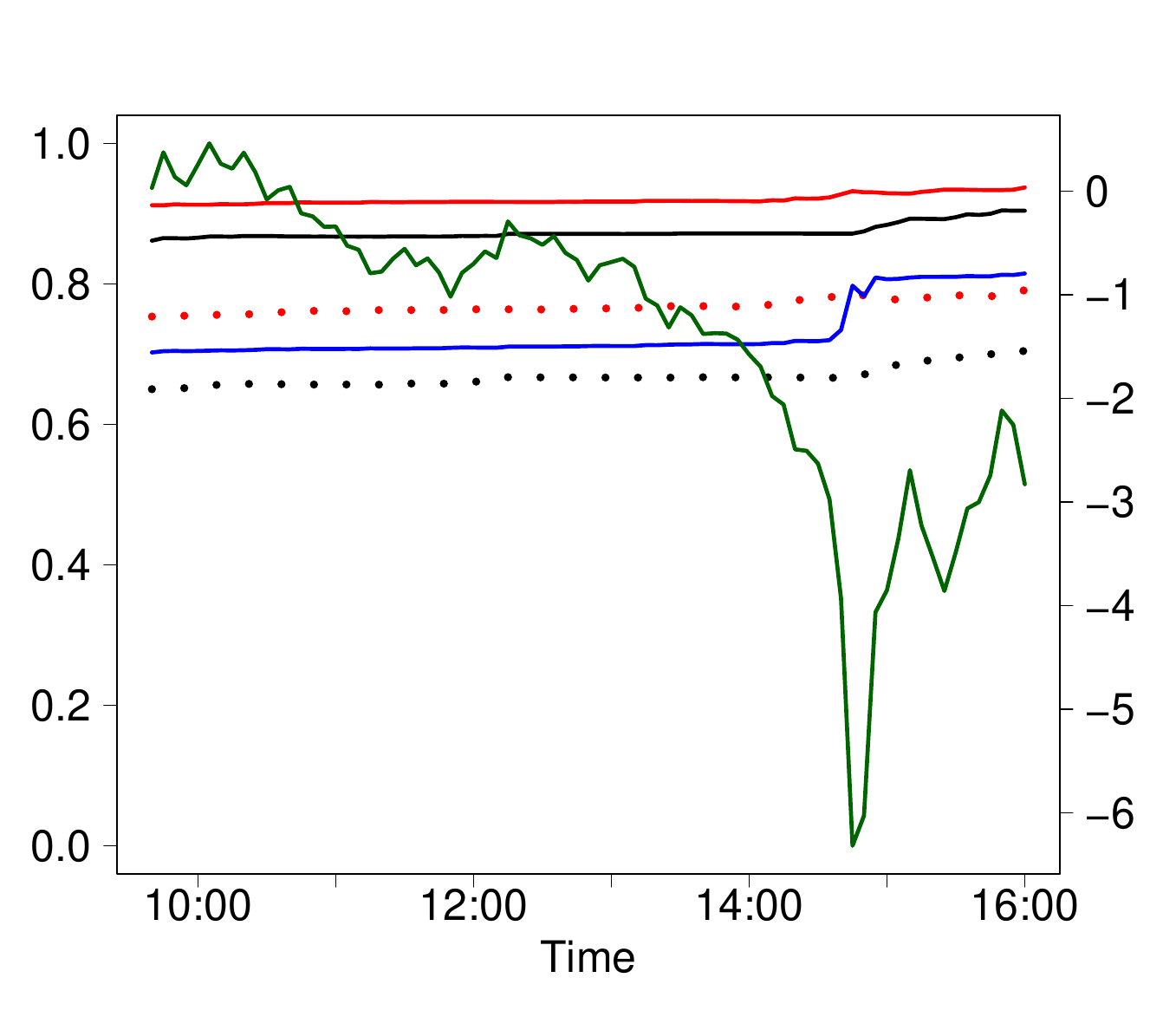}
    \caption{L: $\alpha=\beta=0.10$, U: $\alpha=\beta=0.90$}
    \label{fig:FC90}
\end{subfigure}
\begin{subfigure}[t]{0.38\textwidth}
    \includegraphics[width=\textwidth]{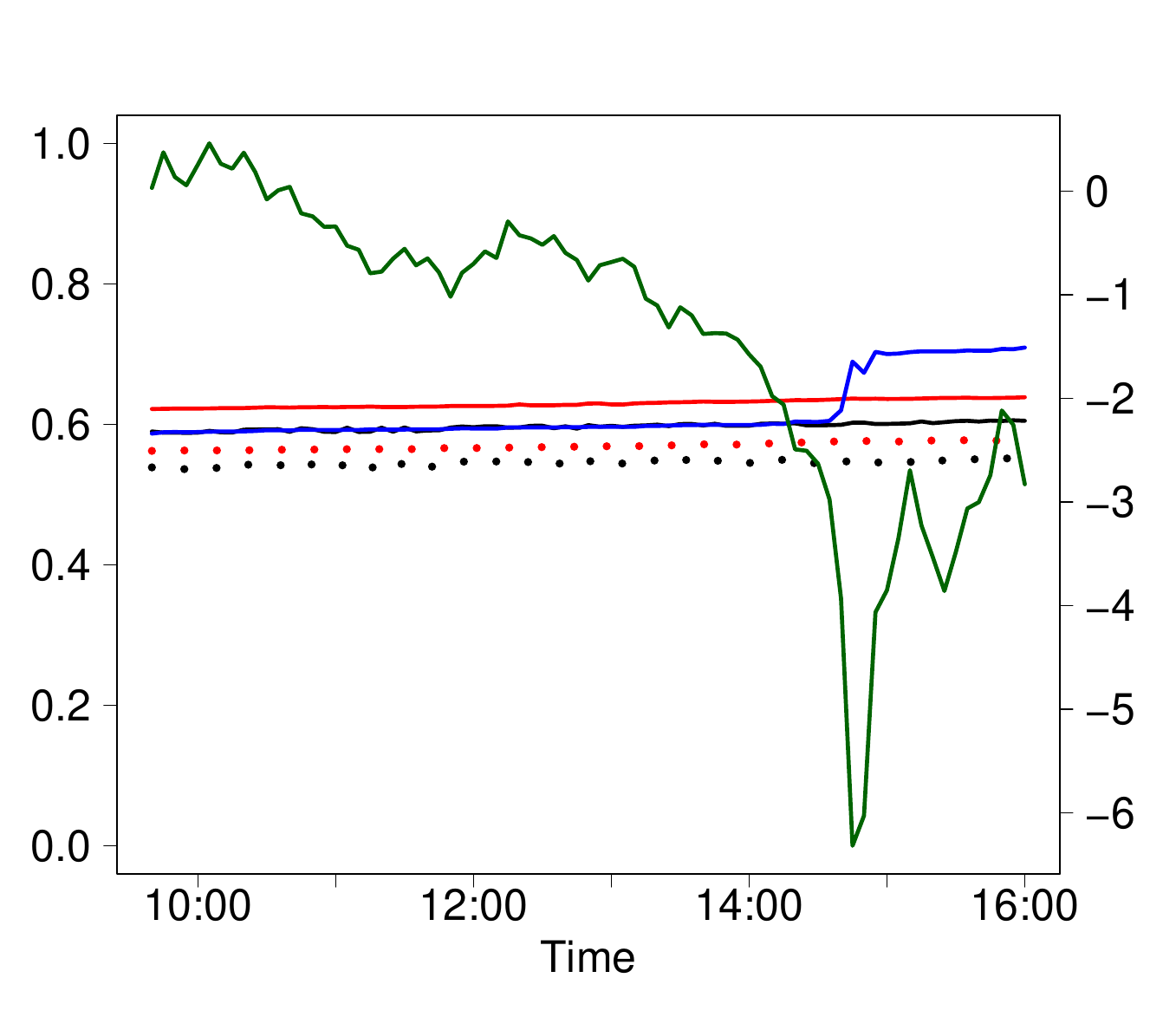}
    \caption{L: $\alpha=\beta=0.50$, U: $\alpha=\beta=0.50$ }
    \label{fig:FC50}
\end{subfigure}
\caption{ Left: Estimates of the lower (red) and upper tail (black) $\iota$(solid) $\delta$(dotted) on the Flash Crash data compared with the fully Gaussian $\delta$(blue solid). Right: Market index \% return (green line)}
\label{fig:compsFC}
\end{figure}
\noindent In Figure \ref{fig:compsFC} some results are immediately clear: the tail asymmetry is striking and most prominent at more extreme quantiles, but disappears after the big drop. This suggests the returns have become more symmetric in the tails with both tails now being strongly dependent. Furthermore, at the $1\%$ quantile the increased cross asset dependence before the big drop in the market is signalled. Rather stunningly, the first jump of the measure occurs already at 14:20, the second jump at 14:40 and the third jump at 14:45. The timing of the first jump is 10 minutes before \cite{menkveld2019flash} define the start of the Flash Crash while the second jump is 5 minutes before the large drop in the market. The third jump occurs at the bottom and 1 minute after cross-market arbitrage fully broke down. These findings persist more subtly at less extreme quantiles and applies to the $99\%$ quantile for large profits as well. \newline \indent These findings are consistent with an endogenous story behind the Flash Crash where deteriorating market conditions caused market makers to pull out, weakening cross-market arbitrage in the process and making the market even more vulnerable to large sell orders \cite{easley2011flash,filimonov2012quantifying,menkveld2019flash}. After the drop some institutions entered the market again thereby instigating the recovery. Also, the measures persist and indicate the potential for large market losses or gains has not disappeared which is also consistent with \cite{easley2011flash,filimonov2012quantifying,menkveld2019flash}. Hence, without any order level data and with using data at a lower frequency than order data we manage to still find the effects of these endogenous mechanisms.\newline\indent In contrast, all of these effects are mostly missed by the fully Gaussian based version of $\delta$ which only indicates an increase in risk but cannot tell in what part of the distribution, it is too late and at any quantile it is just a shifted version of the correlation. Therefore, using models built on correlations in this situation would have flagged the risk too late and leave one ignorant on what part of the distribution the risk manifests. \newline\indent Examining the same data with our measures at frequencies of 10 minutes, 30 minutes and 1 hour we find similar patterns albeit more smoothed out due to the lower sample size (see Appendix \ref{App:figs}). \\

\noindent Next, to compare with news-driven market stress we compute the measures on the GFC dates. 
\begin{figure}[H]
     \centering
     \begin{subfigure}[t]{0.38\textwidth}
    \includegraphics[width=\textwidth]{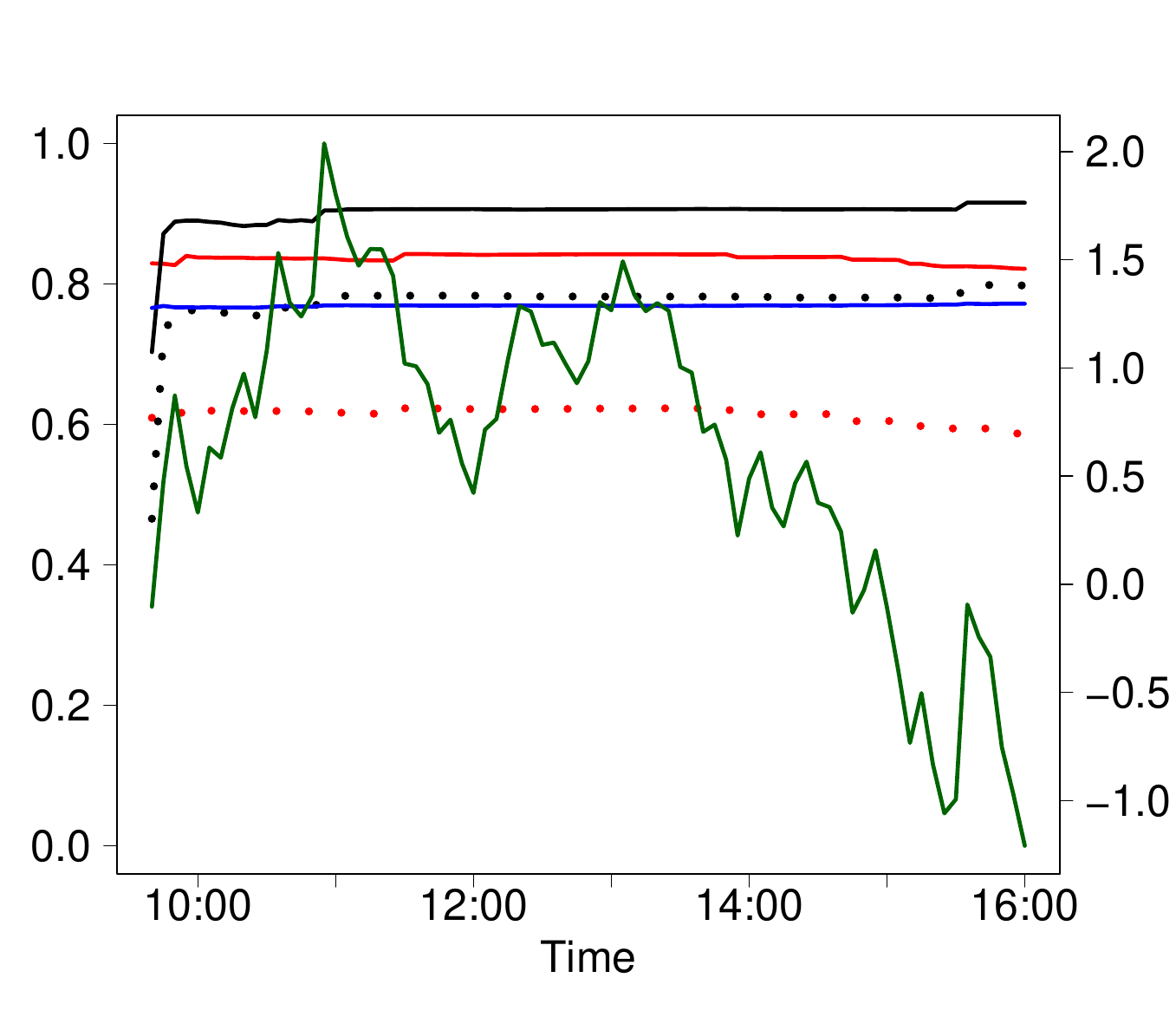}
    \caption{September 15th 2008}
    \label{fig:GFC1}
\end{subfigure}
\begin{subfigure}[t]{0.38\textwidth}
    \includegraphics[width=\textwidth]{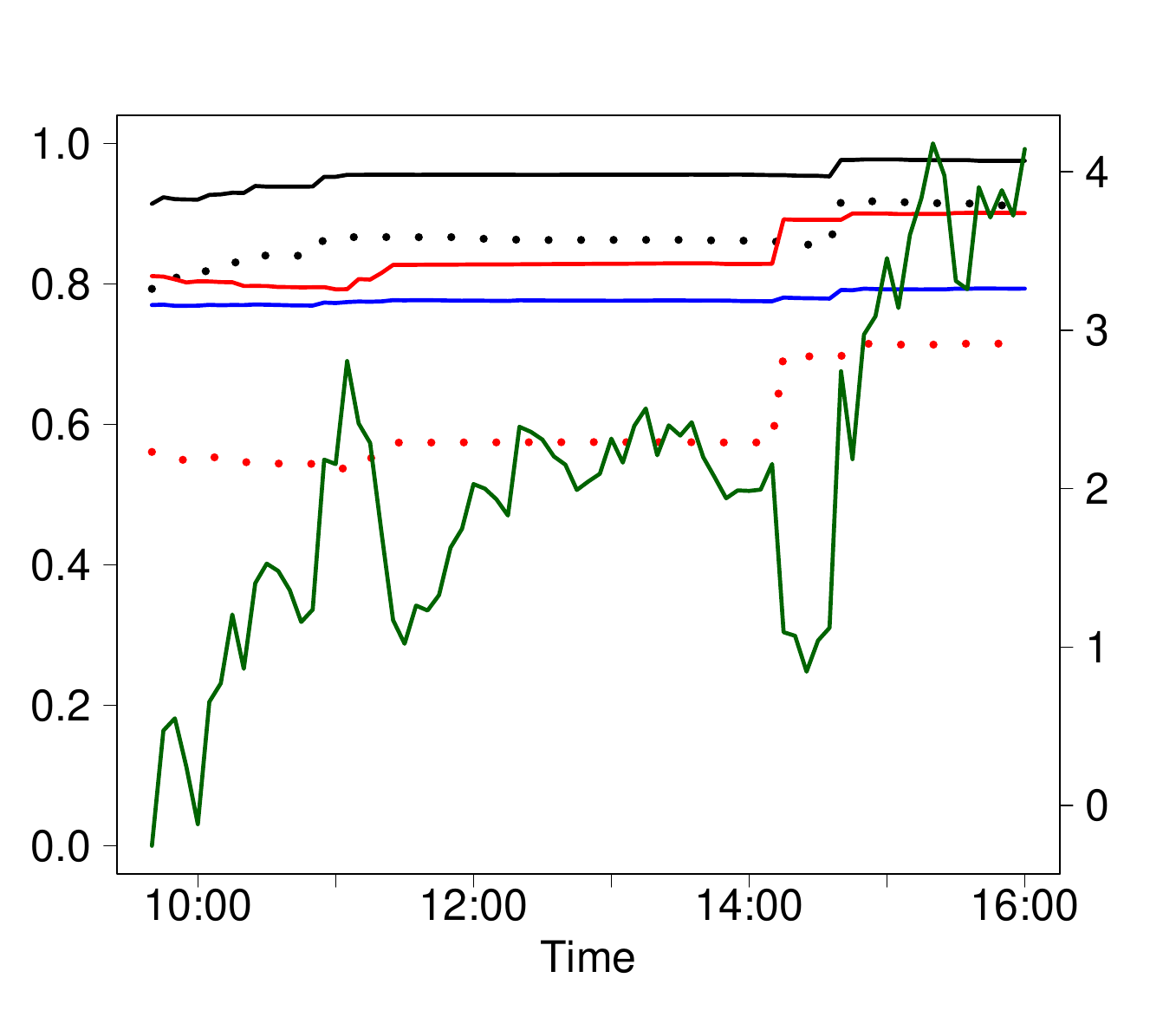}
    \caption{September 16th 2008}
    \label{fig:GFC2}
\end{subfigure}
\begin{subfigure}[t]{0.38\textwidth}
    \includegraphics[width=\textwidth]{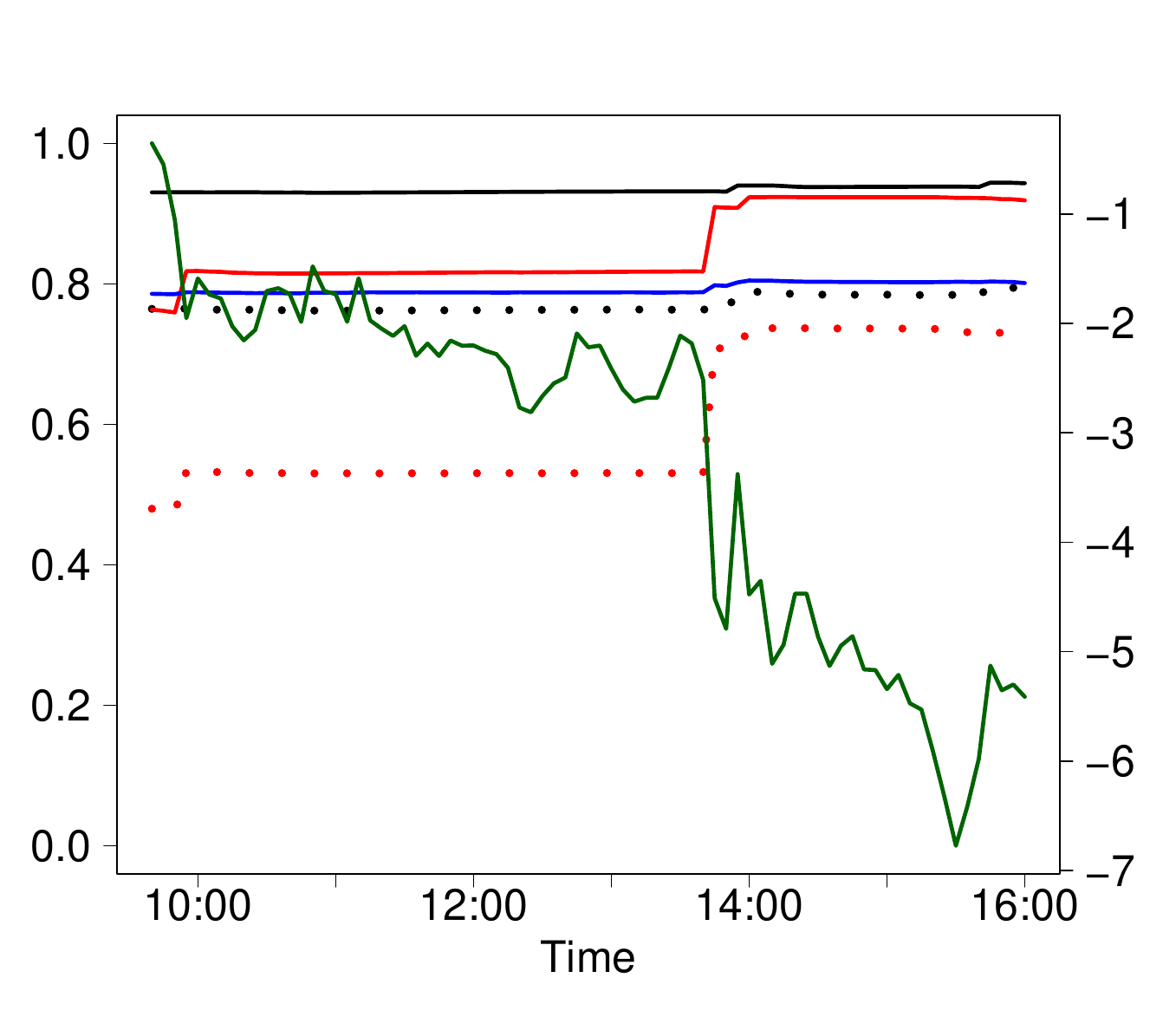}
    \caption{September 29th 2008}
    \label{fig:GFC3}
\end{subfigure}
\begin{subfigure}[t]{0.38\textwidth}
    \includegraphics[width=\textwidth]{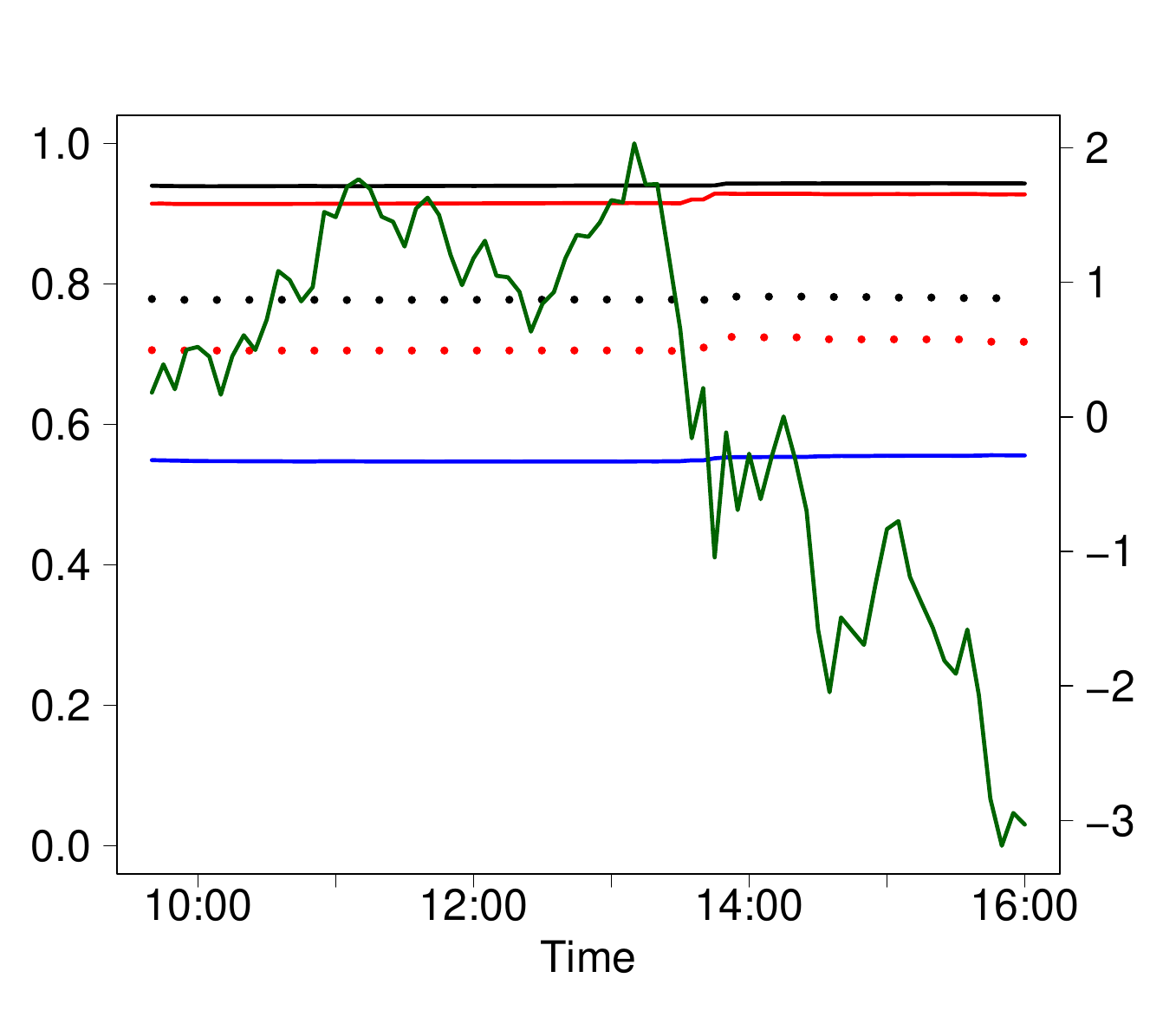}
    \caption{October 3rd 2008}
    \label{fig:GFC4}
\end{subfigure}
\caption{Left: estimates of the lower (red) and upper tail (black) $\iota$(solid) $\delta$(dotted) on the GFC event data compared with the fully Gaussian $\delta$(blue solid).Right: market index \% return (green line). All evaluated at: Lower $\alpha=\beta=0.01$, Upper $\alpha=\beta=0.99$.}
\label{fig:GFC}
\end{figure}
\newpage
In Figure \ref{fig:GFC} it is again evident from the difference between the Gaussian based $\delta$ and the nonparametric estimates that in the tails returns have a dependence structure which differs strongly from that of a Gaussian copula. On some of the days the Gaussian $\delta$ even barely reacts.\newline\indent Even more strikingly, early on September 15th the upper $\iota$ and $\delta$ exceed the lower ones. As an immediate market reaction to the bankruptcy of Lehman Brothers it shows initially markets reacted positively with large positive and strongly comoving returns across S\&P500 stocks. The pattern persists the next day and even on September 29th and October 3rd. This finding is novel as usually comovements for extreme losses are stronger. Again, we confirm that the dependence structure of stock returns over time can be more complex than current high-frequency asset pricing models like those in \cite{ait2016cojumps,pelgerHF2020} allow for. \newline\indent Furthermore, there are some jumps in the measures. In the Appendix we also examine the measures at lower quantiles and confirm the jumps and the extreme profit comovement only seem to occur at sufficiently high quantiles. Consequently, this confirms the notion they are caused by (co)jumps in the asset prices.\newline \indent Across all other quantiles we also the measures are elevated and indicate loss symmetry indicating the very volatile and fragile financial market situation where diversification opportunities are scarce and there is quick succession of extreme (co)movements of assets in both directions. \newline \indent Since the correlation based $\delta$ does not capture these dynamics at all it perfectly shows the losses traders and market makers can incur by basing their hedging strategies and risk management on traditional correlations.

\section{Conclusion}\label{Sec:conc}
\indent We have devised a set of dependence measures that are very general and powerful yet easy to estimate and interpret. We prove the measures have some useful properties that sets them apart from existing dependence measures \cite{gijbels2021specification} and make them particularly suited to applications in finance. However, these measures can be estimated on any sufficiently large dataset and are therefore useful to any scientific field. \newline\indent
We develop fully nonparametric but smooth estimators for our measures. For the estimator of one of the measures we prove both point-wise and uniform consistency and asymptotic normality and show the advantages versus standard Pearson correlations on a test case. \newline \indent
Next, using high-frequency stock return data around well-known financial market distress events we demonstrate that our dependence measure are able to show financial market dynamics and comovement patterns that standard correlations are too late to pick up or unable to pick up at all. The most prominent example being that our measures anticipate both the extreme joint losses and profits present in financial markets during the 2010 Flash Crash. \newline \indent We show these  comovement patterns persist at lower frequencies. Therefore, we can state that a lot of the stylized special facts of asset returns well-known in the literature are present in the data at any frequency. However, one must have a sufficiently large sample size and flexible estimation approach to really see the difference. Consequently, even for lower frequency applications high-frequency data offer more insight.\newline \indent To make this conclusion more intuitive we pose the following analogy with photography\footnote{ Since cameras perform sampling of the incoming light signal this analogy is more direct than one would think: \cite{shahram2004imaging,shahram2006images1statistical}.}: the true underlying asset pricing process is likely quite complex like a detailed landscape. When taking pictures of the landscape or gathering data and estimating a model in asset pricing we necessarily smudge out some of the complexity due to the resolution of the camera or frequency of the data and the structure imposed by the model. A higher resolution camera is able to resolve more of the details of the landscape. Analogously, high-frequency data and more flexible models capture more of the stylized features that are undoubtedly present in asset pricing data while the underlying price process or landscape in all cases is identical. Hence, it is optimal to use the highest frequency data possible even if the timescale of interest is larger. The simple conclusion contained in this analogy echoes that of \cite{ait2005often} but for different reasons.\newline\indent Therefore, by showing correlations can fail to spot lack of diversification, comovements in assets and do miss the devil in the (asymmetric) tails our findings have strong implications for portfolio management, risk management and hedging at any frequency. \newpage
\bibliographystyle{apalike}
\bibliography{ref}

\begin{thebibliography}{}

\bibitem[Adrian and Brunnermeier, 2016]{AdrianBrunnCovar}
Adrian, T. and Brunnermeier, M.~K. (2016).
\newblock {CoVaR}.
\newblock {\em American Economic Review}, 106(7):1705--41.

\bibitem[A{\"i}t-Sahalia et~al., 2009]{ait2009portfoliojumpschoice}
A{\"i}t-Sahalia, Y., Cacho-Diaz, J., and Hurd, T.~R. (2009).
\newblock {Portfolio choice with jumps: A closed-form solution}.
\newblock {\em The Annals of Applied Probability}, 19(2):556 -- 584.

\bibitem[A{\"\i}t-Sahalia and Hurd, 2016]{portchoicecontagionHawkes}
A{\"\i}t-Sahalia, Y. and Hurd, T.~R. (2016).
\newblock {Portfolio Choice in Markets with Contagion}.
\newblock {\em Journal of Financial Econometrics}, 14(1):1--28.

\bibitem[A{\"\i}t-Sahalia and Jacod, 2014]{ait2014high}
A{\"\i}t-Sahalia, Y. and Jacod, J. (2014).
\newblock {\em {High-frequency financial econometrics}}.
\newblock Princeton University Press.

\bibitem[Ait-Sahalia et~al., 2005]{ait2005often}
Ait-Sahalia, Y., Mykland, P.~A., and Zhang, L. (2005).
\newblock How often to sample a continuous-time process in the presence of
  market microstructure noise.
\newblock {\em The Review of Financial Studies}, 18(2):351--416.

\bibitem[A{\"\i}t-Sahalia and Xiu, 2016]{ait2016cojumps}
A{\"\i}t-Sahalia, Y. and Xiu, D. (2016).
\newblock Increased correlation among asset classes: Are volatility or jumps to
  blame, or both?
\newblock {\em Journal of Econometrics}, 194(2):205--219.

\bibitem[Ang and Chen, 2002]{asymmcorrang}
Ang, A. and Chen, J. (2002).
\newblock Asymmetric correlations of equity portfolios.
\newblock {\em Journal of Financial Economics}, 63(3):443--494.

\bibitem[Ang et~al., 2006]{ang2006downside}
Ang, A., Chen, J., and Xing, Y. (2006).
\newblock Downside risk.
\newblock {\em The Review of Financial Studies}, 19(4):1191--1239.

\bibitem[Bansal and Yaron, 2004]{bansal2004risks}
Bansal, R. and Yaron, A. (2004).
\newblock Risks for the long run: A potential resolution of asset pricing
  puzzles.
\newblock {\em The Journal of Finance}, 59(4):1481--1509.

\bibitem[Bernardi et~al., 2017]{BERNARDI20178}
Bernardi, M., Durante, F., and Jaworski, P. (2017).
\newblock {CoVaR of families of copulas}.
\newblock {\em Statistics \& Probability Letters}, 120:8--17.

\bibitem[Bernstein, 1912]{bernstein1912démo}
Bernstein, S. (1912).
\newblock D\'{e}monstration du th\'{e}or\`{e}me de weierstrass fonde\'{e}e sur
  le calcul des probabilit\'{e}s.
\newblock {\em Communications of the Kharkov Mathematical Society}, 13(1):1--2.

\bibitem[Blomqvist, 1950]{blomqvist1950measure}
Blomqvist, N. (1950).
\newblock On a measure of dependence between two random variables.
\newblock {\em The Annals of Mathematical Statistics}, pages 593--600.

\bibitem[Breymann et~al., 2003]{embrechts2003dependence}
Breymann, W., Dias, A., and Embrechts, P. (2003).
\newblock Dependence structures for multivariate high-frequency data in
  finance.
\newblock {\em Quantitative Finance}, 3(1):1.

\bibitem[Brunnermeier and Pedersen, 2009]{brunnermeier2009market}
Brunnermeier, M.~K. and Pedersen, L.~H. (2009).
\newblock Market liquidity and funding liquidity.
\newblock {\em The Review of Financial Studies}, 22(6):2201--2238.

\bibitem[Brunnermeier and Sannikov, 2014]{brunnermeier2014macroeconomic}
Brunnermeier, M.~K. and Sannikov, Y. (2014).
\newblock A macroeconomic model with a financial sector.
\newblock {\em American Economic Review}, 104(2):379--421.

\bibitem[B{\"u}cher and Volgushev, 2013]{bucher2013empirical}
B{\"u}cher, A. and Volgushev, S. (2013).
\newblock Empirical and sequential empirical copula processes under serial
  dependence.
\newblock {\em Journal of Multivariate Analysis}, 119:61--70.

\bibitem[Buraschi et~al., 2010]{corrriskportchoice}
Buraschi, A., Porchia, P., and Trojani, F. (2010).
\newblock Correlation risk and optimal portfolio choice.
\newblock {\em The Journal of Finance}, 65(1):393--420.

\bibitem[Campbell and Cochrane, 1999]{campbell1999force}
Campbell, J.~Y. and Cochrane, J.~H. (1999).
\newblock By force of habit: A consumption-based explanation of aggregate stock
  market behavior.
\newblock {\em Journal of Political Economy}, 107(2):205--251.

\bibitem[Cappiello et~al., 2006]{EngleAsymmcorr}
Cappiello, L., Engle, R.~F., and Sheppard, K. (2006).
\newblock {Asymmetric Dynamics in the Correlations of Global Equity and Bond
  Returns}.
\newblock {\em Journal of Financial Econometrics}, 4(4):537--572.

\bibitem[Chabi-Yo et~al., 2022]{Mcrash}
Chabi-Yo, F., Huggenberger, M., and Weigert, F. (2022).
\newblock Multivariate crash risk.
\newblock {\em Journal of Financial Economics}, 145(1):129--153.

\bibitem[Christie-David and Chaudhry, 2001]{coskewcokurt}
Christie-David, R. and Chaudhry, M. (2001).
\newblock Coskewness and cokurtosis in futures markets.
\newblock {\em Journal of Empirical Finance}, 8(1):55--81.

\bibitem[Cont, 2001]{cont2001empirical}
Cont, R. (2001).
\newblock Empirical properties of asset returns: stylized facts and statistical
  issues.
\newblock {\em Quantitative finance}, 1(2):223.

\bibitem[Deheuvels, 1979]{deheuvels1979fonction}
Deheuvels, P. (1979).
\newblock La fonction de d{\'e}pendance empirique et ses propri{\'e}t{\'e}s. un
  test non param{\'e}trique d'ind{\'e}pendance.
\newblock {\em Bulletins de l'Acad{\'e}mie Royale de Belgique}, 65(1):274--292.

\bibitem[Dhaene et~al., 2002]{dhaene2002concept}
Dhaene, J., Denuit, M., Goovaerts, M.~J., Kaas, R., and Vyncke, D. (2002).
\newblock The concept of comonotonicity in actuarial science and finance:
  theory.
\newblock {\em Insurance: Mathematics and Economics}, 31(1):3--33.

\bibitem[Dhaene et~al., 2022]{dhaene2022systemic}
Dhaene, J., Laeven, R.~J., and Zhang, Y. (2022).
\newblock Systemic risk: Conditional distortion risk measures.
\newblock {\em Insurance: Mathematics and Economics}, 102:126--145.

\bibitem[Dhaene et~al., 2012]{DHAENE2012357}
Dhaene, J., Linders, D., Schoutens, W., and Vyncke, D. (2012).
\newblock {The Herd Behavior Index: A new measure for the implied degree of
  co-movement in stock markets}.
\newblock {\em Insurance: Mathematics and Economics}, 50(3):357--370.

\bibitem[Donnelly and Embrechts, 2010]{embrechts_2010}
Donnelly, C. and Embrechts, P. (2010).
\newblock {The Devil is in the Tails: Actuarial Mathematics and the Subprime
  Mortgage Crisis}.
\newblock {\em ASTIN Bulletin}, 40(1):1–33.

\bibitem[Easley et~al., 2011]{easley2011flash}
Easley, D., De~Prado, M. M.~L., and O’Hara, M. (2011).
\newblock The microstructure of the “flash crash”: flow toxicity, liquidity
  crashes, and the probability of informed trading.
\newblock {\em The Journal of Portfolio Management}, 37(2):118--128.

\bibitem[Elton and Gruber, 1997]{elton1997modern}
Elton, E.~J. and Gruber, M.~J. (1997).
\newblock Modern portfolio theory, 1950 to date.
\newblock {\em Journal of Banking \& Finance}, 21(11-12):1743--1759.

\bibitem[Embrechts, 2009]{embrechts2009}
Embrechts, P. (2009).
\newblock {Copulas: A Personal View}.
\newblock {\em Journal of Risk and Insurance}, 76(3):639--650.

\bibitem[Embrechts et~al., 1997]{embrechts1997modelling}
Embrechts, P., Kl{\"u}ppelberg, C., and Mikosch, T. (1997).
\newblock {\em Modelling extremal events: for insurance and finance},
  volume~33.
\newblock Springer Science \& Business Media.

\bibitem[Embrechts et~al., 2002]{embrechts2001correlation}
Embrechts, P., McNeil, A., and Straumann, D. (2002).
\newblock Correlation and dependence in risk management: properties and
  pitfalls.
\newblock {\em Risk management: value at risk and beyond}, 1:176--223.

\bibitem[Fama, 1963]{fama1963mandelbrot}
Fama, E.~F. (1963).
\newblock Mandelbrot and the stable paretian hypothesis.
\newblock {\em The Journal of Business}, 36(4):420--429.

\bibitem[Fama and French, 2004]{fama2004capital}
Fama, E.~F. and French, K.~R. (2004).
\newblock The capital asset pricing model: Theory and evidence.
\newblock {\em Journal of Economic Perspectives}, 18(3):25--46.

\bibitem[Filimonov and Sornette, 2012]{filimonov2012quantifying}
Filimonov, V. and Sornette, D. (2012).
\newblock Quantifying reflexivity in financial markets: Toward a prediction of
  flash crashes.
\newblock {\em Physical Review E}, 85(5):056108.

\bibitem[Forbes and Rigobon, 2002]{forbes2002no}
Forbes, K.~J. and Rigobon, R. (2002).
\newblock No contagion, only interdependence: measuring stock market
  comovements.
\newblock {\em The Journal of Finance}, 57(5):2223--2261.

\bibitem[Fr{\'e}chet, 1951]{frechet1951tableaux}
Fr{\'e}chet, M. (1951).
\newblock Sur les tableaux de corr{\'e}lation dont les marges sont donn{\'e}es.
\newblock {\em Ann. Univ. Lyon, 3\^{} e serie, Sciences, Sect. A}, 14:53--77.

\bibitem[Gabaix, 2009]{gabaix2009power}
Gabaix, X. (2009).
\newblock Power laws in economics and finance.
\newblock {\em Annu. Rev. Econ.}, 1(1):255--294.

\bibitem[Gabaix et~al., 2007]{gabaix2007unified}
Gabaix, X., Gopikrishnan, P., Plerou, V., and Stanley, E. (2007).
\newblock A unified econophysics explanation for the power-law exponents of
  stock market activity.
\newblock {\em Physica A: Statistical Mechanics and its Applications},
  382(1):81--88.

\bibitem[Genest et~al., 2007]{genest2007everything}
Genest, C., Favre, A.-C., et~al. (2007).
\newblock Everything you always wanted to know about copula modeling but were
  afraid to ask.
\newblock {\em Journal of hydrologic engineering}, 12(4):347--368.

\bibitem[Genest et~al., 2009]{Genest2009}
Genest, C., Gendron, M., and Bourdeau-Brien, M. (2009).
\newblock {The Advent of Copulas in Finance}.
\newblock {\em The European Journal of Finance}, 15(7-8):609--618.

\bibitem[Genest and Segers, 2010]{genest2010covariance}
Genest, C. and Segers, J. (2010).
\newblock On the covariance of the asymptotic empirical copula process.
\newblock {\em Journal of Multivariate Analysis}, 101(8):1837--1845.

\bibitem[Gijbels et~al., 2021]{gijbels2021specification}
Gijbels, I., Kika, V., and Omelka, M. (2021).
\newblock On the specification of multivariate association measures and their
  behaviour with increasing dimension.
\newblock {\em Journal of Multivariate Analysis}, 182:104704.

\bibitem[Gini, 1914]{gini1914ammontare}
Gini, C. (1914).
\newblock {\em L'ammontare e la composizione della ricchezza delle nazioni},
  volume~62.
\newblock Fratelli Bocca, Torino.

\bibitem[Grabchak and Samorodnitsky, 2010]{grabchak2010financial}
Grabchak, M. and Samorodnitsky, G. (2010).
\newblock Do financial returns have finite or infinite variance? a paradox and
  an explanation.
\newblock {\em Quantitative Finance}, 10(8):883--893.

\bibitem[Hakwa et~al., 2015]{hakwa2015analysing}
Hakwa, B., J{\"a}ger-Ambro{\.z}ewicz, M., and R{\"u}diger, B. (2015).
\newblock Analysing systemic risk contribution using a closed formula for
  conditional value at risk through copula.
\newblock {\em Communications on Stochastic Analysis}, 9(1):8.

\bibitem[Hartmann et~al., 2004]{hartmann2004asset}
Hartmann, P., Straetmans, S., and Vries, C.~d. (2004).
\newblock Asset market linkages in crisis periods.
\newblock {\em Review of Economics and Statistics}, 86(1):313--326.

\bibitem[Harvey and Siddique, 2000]{coskew}
Harvey, C.~R. and Siddique, A. (2000).
\newblock Conditional skewness in asset pricing tests.
\newblock {\em The Journal of Finance}, 55(3):1263--1295.

\bibitem[Hauksson et~al., 2001]{Samoroddependence}
Hauksson, H., Dacorogna, M., Domenig, T., Mller, U., and Samorodnitsky, G.
  (2001).
\newblock Multivariate extremes, aggregation and risk estimation.
\newblock {\em Quantitative Finance}, 1(1):79--95.

\bibitem[Hinkley, 1969]{hinkley1969ratio}
Hinkley, D.~V. (1969).
\newblock On the ratio of two correlated normal random variables.
\newblock {\em Biometrika}, 56(3):635--639.

\bibitem[Hoeffding, 1940]{hoeffding1940masstabinvariante}
Hoeffding, W. (1940).
\newblock Masstabinvariante korrelationstheorie.
\newblock {\em Schriften des Mathematischen Instituts und Instituts fur
  Angewandte Mathematik der Universitat Berlin}, 5:181--233.

\bibitem[Hofert et~al., 2019]{hofert2019elements}
Hofert, M., Kojadinovic, I., M{\"a}chler, M., and Yan, J. (2019).
\newblock {\em Elements of Copula Modeling with R}.
\newblock Use R! Springer International Publishing.

\bibitem[Inghelbrecht et~al., 2022]{inghelbrecht2022model}
Inghelbrecht, K., Verdickt, G., Linders, D., and Xie, Y. (2022).
\newblock Model-free implied dependence and the cross-section of returns.
\newblock {\em Available at SSRN 4235236}.

\bibitem[Janssen et~al., 2012]{janssen2012largebernstein}
Janssen, P., Swanepoel, J., and Veraverbeke, N. (2012).
\newblock Large sample behavior of the bernstein copula estimator.
\newblock {\em Journal of Statistical Planning and Inference},
  142(5):1189--1197.

\bibitem[Jaworski, 2017]{Jaworski+2017+1+19}
Jaworski, P. (2017).
\newblock {On Conditional Value at Risk (CoVaR) for tail-dependent copulas}.
\newblock {\em Dependence Modeling}, 5(1):1--19.

\bibitem[Joe, 1997]{joe1997multivariate}
Joe, H. (1997).
\newblock {\em Multivariate models and multivariate dependence concepts}.
\newblock CRC press.

\bibitem[Karimalis and Nomikos, 2018]{CopulaGarchCovar}
Karimalis, E.~N. and Nomikos, N.~K. (2018).
\newblock {Measuring systemic risk in the European banking sector: a copula
  CoVaR approach}.
\newblock {\em The European Journal of Finance}, 24(11):944--975.

\bibitem[Kelly and Jiang, 2014]{kelly2014tail}
Kelly, B. and Jiang, H. (2014).
\newblock Tail risk and asset prices.
\newblock {\em The Review of Financial Studies}, 27(10):2841--2871.

\bibitem[Kendall, 1938]{kendall1938new}
Kendall, M.~G. (1938).
\newblock A new measure of rank correlation.
\newblock {\em Biometrika}, 30(1/2):81--93.

\bibitem[Kuester et~al., 2005]{Condvarcompari}
Kuester, K., Mittnik, S., and Paolella, M.~S. (2005).
\newblock {Value-at-Risk Prediction: A Comparison of Alternative Strategies}.
\newblock {\em Journal of Financial Econometrics}, 4(1):53--89.

\bibitem[Leeuwenkamp, 2023]{leeuwenkamp2022making}
Leeuwenkamp, A. (2023).
\newblock Making heads or tails of systemic risk measures.
\newblock {\em arXiv preprint arXiv:2206.02582}.

\bibitem[Lintner, 1965]{lintnervaluation}
Lintner, J. (1965).
\newblock The valuation of risk assets and the selection of risky investments
  in stock portfolios and capital budgets.
\newblock {\em The Review of Economics and Statistics}, 47(1):13--37.

\bibitem[Longin and Solnik, 2001]{LonginSolnikExtremeCorr}
Longin, F. and Solnik, B. (2001).
\newblock Extreme correlation of international equity markets.
\newblock {\em The Journal of Finance}, 56(2):649--676.

\bibitem[Lu and Murray, 2019]{bearbeta}
Lu, Z. and Murray, S. (2019).
\newblock Bear beta.
\newblock {\em Journal of Financial Economics}, 131(3):736--760.

\bibitem[Mainik and Schaanning, 2014]{MainikSchaanning+2014+49+77}
Mainik, G. and Schaanning, E. (2014).
\newblock {On dependence consistency of CoVaR and some other systemic risk
  measures}.
\newblock {\em Statistics \& Risk Modeling}, 31(1):49--77.

\bibitem[Markowitz, 1952]{markowitz}
Markowitz, H. (1952).
\newblock Portfolio selection.
\newblock {\em The Journal of Finance}, 7(1):77--91.

\bibitem[McNeil et~al., 2015]{mcneil2015quantitative}
McNeil, A.~J., Frey, R., and Embrechts, P. (2015).
\newblock {\em Quantitative risk management: concepts, techniques and
  tools-revised edition}.
\newblock Princeton university press.

\bibitem[Menkveld and Yueshen, 2019]{menkveld2019flash}
Menkveld, A.~J. and Yueshen, B.~Z. (2019).
\newblock The flash crash: A cautionary tale about highly fragmented markets.
\newblock {\em Management Science}, 65(10):4470--4488.

\bibitem[Nelsen, 2007]{nelsen2007introduction}
Nelsen, R. (2007).
\newblock {\em An Introduction to Copulas}.
\newblock Springer Series in Statistics. Springer New York.

\bibitem[Omelka et~al., 2009]{omelka2009improved}
Omelka, M., Gijbels, I., and Veraverbeke, N. (2009).
\newblock Improved kernel estimation of copulas: Weak convergence and
  goodness-of-fit testing.
\newblock {\em The Annals of Statistics}, 37(5B):3023--3058.

\bibitem[Pelger, 2020]{pelgerHF2020}
Pelger, M. (2020).
\newblock {Understanding Systematic Risk: A High-Frequency Approach}.
\newblock {\em The Journal of Finance}, 75(4):2179--2220.

\bibitem[Pohl et~al., 2018]{pohl2018higher}
Pohl, W., Schmedders, K., and Wilms, O. (2018).
\newblock Higher order effects in asset pricing models with long-run risks.
\newblock {\em The Journal of Finance}, 73(3):1061--1111.

\bibitem[Poon et~al., 2003]{extremevaluedpendence}
Poon, S.-H., Rockinger, M., and Tawn, J. (2003).
\newblock {Extreme Value Dependence in Financial Markets: Diagnostics, Models,
  and Financial Implications}.
\newblock {\em The Review of Financial Studies}, 17(2):581--610.

\bibitem[Segers, 2012]{segers2012asymptotics}
Segers, J. (2012).
\newblock {Asymptotics of empirical copula processes under non-restrictive
  smoothness assumptions}.
\newblock {\em Bernoulli}, 18(3):764 -- 782.

\bibitem[Segers et~al., 2017]{SEGERS201735}
Segers, J., Sibuya, M., and Tsukahara, H. (2017).
\newblock The empirical beta copula.
\newblock {\em Journal of Multivariate Analysis}, 155:35--51.

\bibitem[Shahram and Milanfar, 2004]{shahram2004imaging}
Shahram, M. and Milanfar, P. (2004).
\newblock Imaging below the diffraction limit: a statistical analysis.
\newblock {\em IEEE Transactions on image processing}, 13(5):677--689.

\bibitem[Shahram and Milanfar, 2006]{shahram2006images1statistical}
Shahram, M. and Milanfar, P. (2006).
\newblock Statistical and information-theoretic analysis of resolution in
  imaging.
\newblock {\em IEEE Transactions on information Theory}, 52(8):3411--3437.

\bibitem[Sharpe, 1964]{sharpe1964capital}
Sharpe, W.~F. (1964).
\newblock Capital asset prices: A theory of market equilibrium under conditions
  of risk.
\newblock {\em The Journal of Finance}, 19(3):425--442.

\bibitem[Sklar, 1959]{sklar1959fonctions}
Sklar, A. (1959).
\newblock Fonctions de répartition à n dimensions et leurs marges.
\newblock {\em Publ. inst. statist. univ. Paris}, 8:229--231.

\bibitem[Spearman, 1904]{spearman1987proof}
Spearman, C. (1904).
\newblock The proof and measurement of association between two things.
\newblock {\em The American Journal of Psychology}, 15(1):72--101.

\bibitem[Taleb, 2009]{taleb2009finiteness}
Taleb, N.~N. (2009).
\newblock Finiteness of variance is irrelevant in the practice of quantitative
  finance.
\newblock {\em Complexity}, 14(3):66--76.

\bibitem[Tjøstheim and Hufthammer, 2013]{TJOSTHEIM201333}
Tjøstheim, D. and Hufthammer, K.~O. (2013).
\newblock {Local Gaussian correlation: A new measure of dependence}.
\newblock {\em Journal of Econometrics}, 172(1):33--48.

\bibitem[Vaart and Wellner, 1996]{vaart1996empirical}
Vaart, A. v.~d. and Wellner, J.~A. (1996).
\newblock Empirical processes.
\newblock In {\em Weak Convergence and Empirical Processes: With Applications
  to Statistics}. Springer.

\bibitem[Van~der Vaart, 2000]{van2000asymptotic}
Van~der Vaart, A.~W. (2000).
\newblock {\em Asymptotic statistics}, volume~3.
\newblock Cambridge university press.

\bibitem[Van~Oordt and Zhou, 2016]{van2016systematicdownbeta}
Van~Oordt, M.~R. and Zhou, C. (2016).
\newblock Systematic tail risk.
\newblock {\em Journal of Financial and Quantitative Analysis}, 51(2):685--705.

\bibitem[Ćmiel and Ledwina, 2020]{CMIEL202055}
Ćmiel, B. and Ledwina, T. (2020).
\newblock Validation of association.
\newblock {\em Insurance: Mathematics and Economics}, 91:55--67.

\end{thebibliography}
\newpage
\section{Appendix}\label{Sec:appendix}

\subsubsection{Empirical beta copula}\label{Cop:empbeta}
The empirical beta copula is an empirical copula estimator but also a copula itself. For a sample $\boldsymbol{X}_i=(X_{i,1},\dots, X_{i,d}),i=1,\dots,n$ which is ranked $R^{(n)}_{i,j}$ among $X_{1,j},\dots,X_{n,j}$ the empirical beta copula is:
\begin{align*}
    \mathbb{C}^\beta_n(\boldsymbol{u})=\frac{1}{n}\displaystyle\sum^n_{i=1}\prod^d_{j=1}F_{n,R^{(n)}_{i,j}}(u_j) \text{ with }\boldsymbol{u}=(u_1,\dots,u_d)\in[0,1]^d
\end{align*}
\noindent With $F_{n,r}(u)=\displaystyle\sum^n_{s=r}\begin{pmatrix}
    n \\s
\end{pmatrix} u^s(1-u)^{n-s}$ the CDF of a beta distribution $B(r,n+1-r)$. In two dimensions this reduces to a sample $(X_i,Y_i),i=1,\dots,n$ with $R^{(n)}_{i,j},j=1,2$ denoting the ranks of $X$ and $Y$ respectively. Then
\begin{align*}
     \mathbb{C}^\beta_n(u,v)=\frac{1}{n}\displaystyle\sum^n_{i=1}F_{n,R^{(n)}_{i,1}}(u)F_{n,R^{(n)}_{i,2}}(v) \text{ with }(u,v)\in[0,1]^2.
\end{align*}
\noindent The general idea behind the beta copula is that the ranks of a uniformly distributed sampled i.i.d. from the sample $(X_i,Y_i)$ will be beta distributed. However, this is equivalent to sampling from the empirical beta copula conditional on the original sample. This is what allows one to easily sample from this copula. \newline\indent
Furthermore, the empirical beta copula is equivalent to the empirical Bernstein copula where all Bernstein polynomials are of degree $n$ \cite{janssen2012largebernstein,SEGERS201735}. Since in the proof of Weierstrass Approximation Theorem Bernstein polynomials can be used \cite{bernstein1912démo} it means the empirical beta copula estimated on a sample of size $n$ can approximate any copula arbitrarily well as the sample size grows. For finite samples it means that one can approximate well any copula that at most is as complex as an $n$-th degree Bernstein polynomial. This increase in smoothness as the sample size grows does away with the need for any bandwidth parameter common in other smooth estimation approaches of copulas \cite{omelka2009improved}.\newline
\indent Due to the Fundamental Theorem of Algebra Bernstein polynomials also help with the $\omega$ and $\gamma$ estimates as they require solving a root finding problem with the estimated copula. 

\subsection{Proofs}\label{App:proofs}
\subsubsection{Proof of equivalence of $\iota$ and $\delta$}\label{proof:unifeq}

\begin{myprop}
    Let $(X,Y)$ be a bivariate random vector with copula $C$. If $Y\sim U(a,b)$ then $\iota_{\alpha,\beta}(Y\mid X)=\delta_{\alpha,\beta}(Y| X)$ for all $\alpha,\beta\in(0,1)$.
\end{myprop}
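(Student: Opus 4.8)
The plan is to exploit the one structural fact that makes a uniform margin special: its quantile function is \emph{affine}. For $Y\sim U(a,b)$ we have $\VaR_p(Y)=F_Y^{-1}(p)=a+(b-a)p$ for every $p\in[0,1]$. Consequently every difference of VaRs that appears in the definition of $\delta$ collapses to $(b-a)$ times a difference of probability levels, and I expect the common factor $(b-a)$ to cancel against the numerator, leaving exactly the defining ratio of $\iota$. Since $b-a>0$, the affine map also preserves signs, which will be the reason the case splits of the two measures align.

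Concretely, for the lower-left version I would first rewrite the $\DCov$ using Definition \ref{def:CovCopula} together with the definition of $\DCov$: since $\CoV_{\alpha,\beta}(Y\mid X)=\VaR_\omega(Y)$, we get $\DCov_{\alpha,\beta}(Y\mid X)=\VaR_\omega(Y)-\VaR_\beta(Y)=(b-a)(\omega-\beta)$. Because $b>a$, the sign of $\DCov$ equals the sign of $\omega-\beta$, so the branch condition $\DCov\leqslant 0$ versus $\DCov>0$ in $\delta^L$ coincides exactly with the branch condition $\omega-\beta\leqslant 0$ versus $\omega-\beta>0$ in $\iota^L$. Then I would compute the two denominators: $\VaR_{\alpha\beta}(Y)-\VaR_\beta(Y)=(b-a)(\alpha\beta-\beta)$ and, using the identity $1-\alpha(1-\beta)-\beta=(1-\alpha)(1-\beta)$, also $\VaR_{1-\alpha(1-\beta)}(Y)-\VaR_\beta(Y)=(b-a)(1-\alpha)(1-\beta)$. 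Substituting into $\delta^L$ and cancelling $(b-a)$ yields $\frac{\omega-\beta}{\alpha\beta-\beta}$ in the first branch and $\frac{-(\omega-\beta)}{(1-\alpha)(1-\beta)}$ in the second, which are precisely the two branches of $\iota^L$.

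The upper-right version follows by the identical argument with $\gamma$ in place of $\omega$ and Definition \ref{def:Covcopula2} supplying $\CoV_{\alpha,\beta}(Y\mid X)=\VaR_\gamma(Y)$: here $\DCov=(b-a)(\gamma-\beta)$, and after using $\alpha+\beta-\alpha\beta-\beta=\alpha-\alpha\beta$ and $\beta(1-\alpha)-\beta=-\alpha\beta$ the two denominators simplify to $(b-a)(\alpha-\alpha\beta)$ and $(b-a)(-\alpha\beta)$, so cancelling $(b-a)$ recovers the two branches of $\iota^U$.

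The computations are entirely routine, so the work is bookkeeping rather than mathematics. The single point demanding genuine care is verifying that the case conditions of $\delta$ (phrased via the sign of $\DCov$) and those of $\iota$ (phrased via the sign of $\omega-\beta$, resp. $\gamma-\beta$) match branch-for-branch; this is exactly what $b-a>0$ guarantees. Once that alignment is established, each branch reduces to an algebraic identity, and notably no assumption on the copula $C$ or on the direction of dependence is required.
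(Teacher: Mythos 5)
Your proof is correct and follows exactly the paper's approach: the paper's (one-line) argument likewise rests solely on the affine quantile function $F_Y^{-1}(p)=a+(b-a)p$, with the cancellation of $(b-a)$ and the branch alignment left implicit. You have simply written out the bookkeeping the paper omits, including the worthwhile observation that $b-a>0$ makes the sign of $\DCov$ agree with that of $\omega-\beta$ (resp.\ $\gamma-\beta$), so the case splits match.
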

\begin{proof}
     Because $Y$ is distributed uniformly on $(a,b)$ $F_Y^{-1}(p)=p(b-a)+a$. Therefore $\iota_{\alpha,\beta}(Y \mid X)=\delta_{\alpha,\beta}(Y| X).$
\end{proof}
\noindent This proposition holds for the lower left and upper right tail versions of $\iota$ and $\delta$. Also, if $Y$ is not uniform but its quantile function is linear on the intervals $[\alpha\beta,\beta]$, $[\beta,1-\alpha(1-\beta)]$, $[\beta,\alpha+\beta-\alpha\beta]$ or $[\beta(1-\alpha),\beta]$ depending on the version used. In economic terms the result states that if one is completely ignorant about the distribution of $Y$ (apart from that it has finite support) then there is no difference in considering $\iota$ or $\delta$. \newpage
\subsubsection{Proofs of properties of $\iota$}\label{proof:iota}
We aim to establish the following properties: 
\begin{enumerate}
    \item Dependence consistency. 
    \item $\iota\in[-1,1]$, $\iota=-1\Longleftrightarrow$ counter-monotonicity, $\iota=1 \Longleftrightarrow$ comontonicity and $\iota=0 \Longleftrightarrow$ independence. 
    \item Invariant under strictly increasing functions of $X$ and $Y$.
    \item  Risk asymmetry: $\iota_{\alpha,\beta}(Y\mid X)= \iota_{\alpha,\beta}(X\mid Y)$ $\Longleftrightarrow$ the copula is symmetric $C(u,v)=C(v,u)$. 
    \item Tail asymmetry: $\iota^L_{\alpha,\beta}(Y\mid X)=\iota^U_{1-\alpha,1-\beta}(Y\mid X)$ $\Longleftrightarrow$ the copula is radially symmetric $C(u,v)=\Bar{C}(1-u,1-v)$.
    \item Locality: $\iota_{\alpha,\beta}(Y\mid X)$ measures the dependence between $X$ and $Y$ at the quantiles given by $\alpha,\beta$.
    \item Consistency and asymptotic normality.
\end{enumerate}
\noindent 1. Dependence consistency:\\
To prove dependence consistency we have to prove that $\iota^L_{\alpha,\beta}(Y\mid X)\leq\iota^L_{\alpha,\beta}(Y'\mid X')$ if $C(u,v)\leq C'(u,v)$ for all $u,v\in(0,1)$ and $F_X=F_{X'},F_Y=F_{Y'}$. 
\begin{proof}
    Take two bivariate random vectors $(X,Y),(X',Y')$ with copulas $C(u,v),C'(u,v)$. Assume that  $C(u,v)\leq C'(u,v)$ for all $u,v\in(0,1)$ and $F_X=F_{X'},F_Y=F_{Y'}$. Then we have at any level $\alpha,\beta\in(0,1)$ that $\omega'\leq \omega$. Therefore, $\iota^L_{\alpha,\beta}(Y\mid X)\leq\iota^L_{\alpha,\beta}(Y'\mid X')$.
\end{proof}
\noindent This proof equally applies for the upper tail version but then with some inequalities flipped. \\

\noindent 2. $\iota\in[-1,1]$, $\iota=-1\Longleftrightarrow$ counter-monotonicity, $\iota=1 \Longleftrightarrow$ comontonicity and $\iota=0 \Longleftrightarrow$ independence:\\
The boundedness of $\iota$ follows from the upper and lower bounds of $\omega$ and $\gamma$. Then, by the definition of $\iota^U$ and $\iota^L$ it will always hold that both are bounded between -1 and 1. \\
First, $\iota=-1\Longleftrightarrow$ counter-monotonicity\\
\begin{proof}
    ($\Longleftarrow$)Take a bivariate counter monotonic random vector $(X,Y)$ then $C(u,v)=\max(u+v-1,0)$ and $\Bar{C}(u,v)=1-u-v+\max(u+v-1,0)$. It follows that for any $\alpha,\beta\in(0,1)$ the largest solution to  $\max(\alpha+\omega-1,0)=\alpha\beta$ is $\omega=1-\alpha(1-\beta)$. Equivalently, from $\gamma$ being the smallest solution to $1-\alpha-\gamma+\max(\alpha+\gamma-1,0)=(1-\alpha)(1-\beta)$ follows that $\gamma=\beta(1-\alpha)$. Then, it follows that $\iota^L=-1,\iota^U=-1$.\\
    ($\Longrightarrow$) Take a bivariate random vector $(X,Y)$ and assume it is not counter monotonic. Then $C(u,v)>\max(u+v-1,0)$. By dependence consistency we then have that $\omega<1-\alpha(1-\beta)$ and $\gamma>\beta(1-\alpha)$. Then it follows that $\iota^L\neq-1,\iota^U\neq-1$.
\end{proof}
\noindent Second, $\iota=1\Longleftrightarrow$ comonotonicity\\
\begin{proof}
    ($\Longleftarrow$)Take a bivariate comonotonic random vector $(X,Y)$ then $C(u,v)=\min(u,v)$ and $\Bar{C}(u,v)=1-u-v+\min(u,v)$. It follows that for any $\alpha,\beta\in(0,1)$ the largest solution to $\min(\alpha,\omega)=\alpha\beta$ is $\omega=\alpha\beta$. Equivalently, from $\gamma$ being the smallest solution to $1-\alpha-\gamma+\min(\alpha,\gamma)=(1-\alpha)(1-\beta)$ it follows that $\gamma=\alpha+\beta-\alpha\beta$. Then it follows that $\iota^L=1,\iota^u=1$.
    ($\Longrightarrow$) Take a bivariate random vector $(X,Y)$ and assume it is not comonotonic. Then $C(u,v)<\min(u,v)$. By dependence consistency we then have that $\omega>\alpha\beta$ and $\gamma<\alpha+\beta-\alpha\beta$. Then it follows that $\iota^L\neq 1,\iota^U\neq 1$.
\end{proof}
\noindent Third, $\iota=0\Longleftrightarrow $ independence
\begin{proof}
      ($\Longleftarrow$)Take a bivariate independent random vector $(X,Y)$ then $C(u,v)=uv$ and $\Bar{C}(u,v)=1-u-v+uv$. It follows that for any $\alpha,\beta\in(0,1)$ the largest solution to $\alpha\omega=\alpha\beta$ is $\omega=\beta$. Equivalently, from $\gamma$ being the smallest solution to $1-\alpha-\gamma+\alpha\gamma=(1-\alpha)(1-\beta)$ it follows that $\gamma=\beta$. Then it follows that $\iota^L=0,\iota^u=0$.
    ($\Longrightarrow$) Take a bivariate random vector $(X,Y)$ and assume it is not independent. Then $C(u,v)\neq uv$. By dependence consistency we then have that $\omega\neq \beta$ and $\gamma\neq \beta$. Then it follows that $\iota^L\neq 0,\iota^U\neq 0$.
\end{proof}

\noindent 3. Invariant under strictly increasing functions of $X$ and $Y$: \\

\begin{proof}
    It is a basic result in the copula literature that copulas are invariant under strictly increasing functions of the marginals (see Theorem 2.4.3 in \cite{nelsen2007introduction} or Theorem 2.4.7 in \cite{hofert2019elements}). From this it easily follows that then $\omega,\gamma$ and hence $\iota^U$ and $\iota^L$ are unchanged. 
\end{proof}
\noindent 4. Risk asymmetry: $\iota_{\alpha,\beta}(Y\mid X)= \iota_{\alpha,\beta}(X\mid Y)$ $\Longleftrightarrow$ the copula is symmetric $C(u,v)=C(v,u)$. 
\begin{proof}
    ($\Longleftarrow$) Assume the copula is symmetric. Hence, $C(u,v)=C(v,u)$ for all $v,u\in(0,1)$. Then, it follows that $\omega$ and $\gamma$ are unchanged if we change the order of $X$ and $Y$. Therefore, $\iota_{\alpha,\beta}(Y\mid X)= \iota_{\alpha,\beta}(X\mid Y)$ for all $\alpha,\beta\in(0,1)$.\\
    ($\Longrightarrow$) Assume  $\iota_{\alpha,\beta}(Y\mid X)= \iota_{\alpha,\beta}(X\mid Y)$ for all $\alpha,\beta\in(0,1)$ from the definition of $\omega$ and $\gamma$ it then follows that $C(\alpha,\omega)=C(\omega,\alpha)$ and $\Bar{C}(\alpha,\gamma)=\Bar{C}(\gamma,\alpha)$ for all $\alpha,\beta\in(0,1)$. Therefore the copula is symmetric. 
\end{proof}
\noindent 5. Tail asymmetry:  $\iota^L_{\alpha,\beta}(Y\mid X)=\iota^U_{1-\alpha,1-\beta}(Y\mid X)$ $\Longleftrightarrow$ the copula is radially symmetric $C(u,v)=\Bar{C}(1-u,1-v)$.
\begin{proof}
    ($\Longleftarrow$) Assume the copula is radially symmetric. This implies that $C(u,v)=\Tilde{C}(u,v)$ for all $u,v\in(0,1)$ with $\Tilde{C}(u,v)$ the survival copula . By definition $\Tilde{C}(u,v)=u+v-1+C(1u,1-v)$. Then, it follows that $C(u,v)=\Bar{C}(1-u,1-v)$ for all $u,v\in(0,1)$ (see also \cite{gijbels2021specification} for this result). From the definitions $\ref{def:CovCopula},\ref{def:Covcopula2}$ it then follows that for all $\alpha,\beta\in(0,1)$ $\omega(\alpha,\beta,C)=1-\gamma(1-\alpha,1-\beta,C)$. From the definitions of $\iota^L_{\alpha,\beta}(Y\mid X)$ and $\iota^U_{1-\alpha,1-\beta}(Y\mid X)$ it then follows that $\iota^L_{\alpha,\beta}(Y\mid X)=\iota^U_{1-\alpha,1-\beta}(Y\mid X)$.\\
    ($\Longrightarrow$) Assume $\iota^L_{\alpha,\beta}(Y\mid X)=\iota^U_{1-\alpha,1-\beta}(Y\mid X)$. From the definitions of $\iota^L_{\alpha,\beta}(Y\mid X)$ and $\iota^U_{1-\alpha,1-\beta}(Y\mid X)$ it then follows that $\omega(\alpha,\beta,C)=1-\gamma(1-\alpha,1-\beta,C)$. From definitions \ref{def:CovCopula}, \ref{def:Covcopula2} it then follows that $C(\alpha,\omega)=\Bar{C}(1-\alpha,1-\gamma)$ which is equivalent to radial symmetry of the copula. 
\end{proof}
\noindent 6. Locality: $\iota_{\alpha,\beta}(Y\mid X)$ measures the dependence between $X$ and $Y$ at the quantiles given by $\alpha,\beta$:\\
\noindent This can be seen by the fact that $\omega,\gamma$ are both functions of the set of quantiles one chooses. Therefore, the properties of the copula and hence the dependence structure in the local region of the solutions $\omega$ and $\gamma$ determine their value.\\
\noindent 7. Consistency and asymptotic normality:\\
\noindent First, we state some definitions. \\
Let
\begin{align}\label{eq:empcop}
    \hat{C}_n(u,v)=\frac{1}{n}\sum^n_{i=1}\mathbb{I}(\hat{U}_i\leq u,\hat{V}_i\leq v)
\end{align}
be the empirical copula with $\hat{U}_i=\hat{F}(X_i)$, $\hat{F}(X_i)$ empirical cdf of $X$, 
$\hat{V}_i=\hat{G}(Y_i)$, $\hat{G}(Y_i)$ the empirical cdf of $Y$. Let
\begin{align}
    \hat{C}^\beta_n(u,v)=\frac{1}{n}\displaystyle\sum^n_{i=1}F_{n,R^{(n)}_{i,1}}(u)F_{n,R^{(n)}_{i,2}}(v) 
\end{align}
be the empirical beta copula with  $F_{n,r}(u)=\displaystyle\sum^n_{s=r}\begin{pmatrix}
    n \\s
\end{pmatrix} u^s(1-u)^{n-s}$. According to \cite{SEGERS201735}  $F_{n,r}(u)=\displaystyle\sum^n_{s=r}\begin{pmatrix}
    n \\s
\end{pmatrix} u^s(1-u)^{n-s}=\sum^n_{s=0}\mathbb{I}(r\leq s)\begin{pmatrix}
    n \\s
\end{pmatrix}u^s(1-u)^{n-s}$ which means that the empirical beta copula is still an empirical copula because it smoothes the empirical copula with a Bernstein polynomial. Hence, it is also a Bernstein copula \cite{janssen2012largebernstein} with degree equal to $n$. In order for our results to hold we put assumptions on the copula. Following \cite{segers2012asymptotics} and \cite{SEGERS201735} we assume the copula $C(u,v)$ has continuous partial derivatives except possibly on the boundary points (0,0),(0,1),(1,0) and (1,1). Furthermore, assume all observations $(X_1,Y_1),\dots,(X_n,Y_n)$ are i.i.d. \\
First, we will establish uniform and point-wise consistency of $\hat{\iota}_{\alpha,\beta}^L(Y\mid X)$ and $\hat{\iota}_{\alpha,\beta}^U(Y\mid X)$:
\begin{proof}
    According to \cite{SEGERS201735} the empirical beta copula is uniformly consistent whenever the empirical copula is. Since \cite{deheuvels1979fonction} has proven uniform consistency of the empirical copula it follows that:
    \begin{align*}
        \sup_{[u,v]\in[0,1]^2}\mid \hat{C}^\beta_n(u,v)-C(u,v)\mid \xrightarrow{a.s.}0, n\xrightarrow[]{}\infty.
    \end{align*}
    From this point-wise consistency in the sense that for all $(u,v)\in[0,1]^d$ $\hat{C}^\beta_n(u,v)\xrightarrow{a.s.}C(u,v)$ immediately follows (see \cite{van2000asymptotic}). Assuming $\gamma(\alpha,\beta,C)$ and $\omega(\alpha,\beta,C)$ are continuous functionals of $C$ for all $\alpha,\beta\in(0,1)$ by the extended continuous mapping Theorem (\cite{van2000asymptotic} Theorem 18.11) it follows that $\hat{\gamma}$ and $\hat{\omega}$ are uniform and point-wise consistent estimators of $\gamma$ and $\omega$. Then, again by the extended continuous mapping Theorem it follows that $\hat{\iota}_{\alpha,\beta}^L(Y\mid X)$ and $\hat{\iota}_{\alpha,\beta}^U(Y\mid X)$ are uniform and point-wise consistent estimators. 
\end{proof}
\noindent First, we define the concept of weak convergence following \cite{vaart1996empirical,van2000asymptotic}. This is an extension of the Central Limit Theorem to function spaces. For example if $\sqrt{n}(\hat{F}(t)-F(t))$ converges weakly in $D[-\infty,\infty]$ to a Gaussian process $G_F$ or $\sqrt{n}(\hat{F}(t)-F(t))\rightsquigarrow G_F$ it means the whole empirical CDF $\hat{F}$ will fluctuate in the set of all CDFs $D[-\infty,\infty]$ in a Gaussian manner around the true CDF $F$ as the sample size grows. This result, called Donskers Theorem, is also what the Kolmogorov Smirnov test is based on. Since it considers the convergence of the entire function in a "uniform" manner it is much stronger than the traditional point-wise asymptotic normality result of $\hat{F}$ which states that for all $t\in\Real$ $\sqrt{n}(\hat{F}(t)-F(t))\rightsquigarrow\mathcal{N}(0,F(t)(1-F(t))$ as $n\xrightarrow{}\infty$.\\
Next, we establish asymptotic normality of of $\hat{\iota}_{\alpha,\beta}^L(Y\mid X)$ and $\hat{\iota}_{\alpha,\beta}^U(Y\mid X)$ in a point-wise and uniform sense:
\begin{proof}
    Under the given assumptions on $C$ and assuming that $\sqrt (\hat{C}_n-C)\rightsquigarrow G$ where\begin{align*}
        G(u,v)=B(u,v)-\partial_uC(u,v)B(u,1)-\partial_vC(u,v)B(1,v)
    \end{align*} with $B(u,v)$ a Brownian Bridge on $[0,1]^2$ and \begin{align*}
        \cov(B(u,v),B(s,t))=C(\min(u,s),\min(v,t))-C(u,v)C(s,t)
    \end{align*} for $u\leq s,v\leq t$, \cite{SEGERS201735} proves that $\sqrt{n}(\hat{C}^\beta_n(u,v)-C(u,v))\rightsquigarrow G$ in the space of bounded functions $\ell^\infty([0,1]^2)$. Therefore, it follows also that for any $(u,v)\in[0,1]^2$ $\sqrt{n}(\hat{C}^\beta_n(u,v)-C(u,v))\rightsquigarrow \mathcal{N}(0,\sigma^2(u,v))$ with \begin{align*}
        \sigma^2(u,v)=C(u,v)(1-C(u,v))+u(1-u)\partial_u C(u,v)^2+v(1-v)\partial_v C(u,v)^2-2(1-u)C(u,v)\partial_u C(u,v)\\-2(1-v)C(u,v)\partial_v C(u,v)+2\partial_u C(u,v)\partial_v C(u,v)(C(u,v)-uv)
    \end{align*}
    derived in \cite{janssen2012largebernstein}. Assuming $\gamma$ and $\omega$ are Hadamard differentiable functions of $C$ by the Functional Delta Theorem \cite{van2000asymptotic} it follows that $\hat{\gamma}$ and $\hat{\omega}$ converge weakly and are point-wise asymptotically normal too. Now, because $\iota^L$ and $\iota^H$ are not Hadamard differentiable at $C(u,v)=uv$, assume that $C(u,v)\neq uv$. Then by again applying the Functional Delta Theorem we obtain weak convergence and hence point-wise asymptotic normality of $\hat{\iota}_{\alpha,\beta}^L(Y\mid X)$ and $\hat{\iota}_{\alpha,\beta}^U(Y\mid X)$.
\end{proof}
\noindent While the assumptions on the copula are as minimal as possible to obtain weak convergence to a Gaussian process $G$ with continuous trajectories they still leave out cases like the counter- and comonotonic copulas. Hence, when the copula is close to these boundary cases along with independence one should be wary of the $\iota$ estimates. According to \cite{segers2012asymptotics} the weak convergence could still hold in a more restricted space of functions like the space of càdlàg functions on $[0,1]^2$. point-wise consistency will still be preserved though since this requires weaker assumptions. Also, one could examine if the $\omega,\gamma$ estimates are close to $\beta$ since these do retain all nice properties. Furthermore, according to \cite{SEGERS201735} the results will still hold if the i.i.d. assumption is replaced by one of strict stationarity and weak dependence. These kinds of results are explored in \cite{bucher2013empirical}. 
\newpage
\subsubsection{Proofs of properties of $\delta$}\label{proof:delta}
We aim to establish the following properties: 
\begin{enumerate}
 \item Dependence consistency. 
    \item If $F_Y^{-1}$ is strictly increasing: $\delta\in[-1,1]$, $\delta=-1\Longleftrightarrow$ counter-monotonicity, $\delta=1 \Longleftrightarrow$ comontonicity and $\delta=0 \Longleftrightarrow$ independence. 
    \item Invariant under strictly increasing functions of $X$ and strictly increasing linear functions of $Y$.
    \item  Risk asymmetry: $\delta_{\alpha,\beta}(Y\mid X)= \delta_{\alpha,\beta}(X\mid Y)$ $\Longleftarrow$ the copula is symmetric $C(u,v)=C(v,u)$ and $X$ is a linear function of $Y$. 
    \item Tail asymmetry: $\delta^L_{\alpha,\beta}(Y\mid X)=\delta^U_{1-\alpha,1-\beta}(Y\mid X)$  $\Longleftarrow$ the copula is radially symmetric $C(u,v)=\Bar{C}(1-u,1-v)$ and $Y$ is symmetric.
    \item Locality: $\delta_{\alpha,\beta}(Y\mid X)$ measures the dependence between $X$ and $Y$ at the quantiles given by $\alpha,\beta$.
\item Consistency. 
    \end{enumerate}
    \noindent 1. Dependence consistency:
    \begin{proof}
        Given dependence consistency of $\omega$ and $\gamma$ shown in section \ref{proof:iota} we get that $\omega'\leq \omega$ and $\gamma'\geq \gamma$ if $C'(u,v)\geq C(u,v)$ for all $u,v\in (0,1)$. Since CDFs are by definition nondecreasing quantile functions are nondecreasing too. Therefore $F_{Y'}^{-1}(\omega')\leq F_{Y}^{-1}(\omega)$ and $F_{Y'}^{-1}(\gamma')\geq F_Y^{-1}(\gamma)$. From this it follows that $\delta^L_{\alpha,\beta}(Y'\mid X')\geq\delta^L_{\alpha,\beta}(Y\mid X)$ and $\delta^U_{\alpha,\beta}(Y'\mid X')\geq\delta^U_{\alpha,\beta}(Y\mid X)$.
    \end{proof}
    \noindent Alternatively, \cite{dhaene2022systemic} proves the dependence consistency of the $\DCov$. Because any $\delta$ is the $\DCov$ multiplied with a constant it follows that $\delta$ is dependence consistent as well.\\ 
    
    \noindent 2.  If $F_Y^{-1}$ is strictly increasing: $\delta\in[-1,1]$, $\delta=-1\Longleftrightarrow$ counter-monotonicity, $\delta=1 \Longleftrightarrow$ comontonicity and $\delta=0 \Longleftrightarrow$ independence:\\
    \noindent
    \begin{proof}
        All of these results follow from the results on $\iota$ but now for the $\Longrightarrow$ implications we require a strictly monotonic quantile function of $Y$ to ensure that $F_Y^{-1}(p')>F_Y^{-1}(p)$ for $p'>p$. 
    \end{proof}
    \noindent
    3. Invariant under strictly increasing functions of $X$ and strictly increasing linear functions of $Y$:\\
    \noindent 
    \begin{proof}
        The result on invariance to strictly increasing functions of $X$ follows from reasoning similar to the proof for $\iota$. For the result on $Y$ consider $Y'=a+bY$ with $a\in\Real$ and $b>0$. Then $F_{Y'}^{-1}(q)=a+bF_Y^{-1}(q)$ for all $q\in(0,1)$. Hence, it follows that $\delta$ will be invariant. 
    \end{proof}
    \noindent 4.  Risk asymmetry: $\delta_{\alpha,\beta}(Y\mid X)= \delta_{\alpha,\beta}(X\mid Y)$ $\Longleftarrow$ the copula is symmetric $C(u,v)=C(v,u)$ and  $X$ is a linear function of $Y$:
    \begin{proof}
        ($\Longleftarrow$) From the proofs in section \ref{proof:iota} it follows that $\omega,\gamma$ remain invariant. Because $X$ is a linear function of $Y$ we get that $F_{Y}^{-1}(q)=a+bF^{-1}_X(q)$ for all $q\in(0,1)$ and for $a,b\in\Real$. Then, we have that $\delta_{\alpha,\beta}(Y\mid X)= \delta_{\alpha,\beta}(X\mid Y)$.
    \end{proof}
    \noindent 5. Tail asymmetry $\delta^L_{\alpha,\beta}(Y\mid X)=\delta^U_{1-\alpha,1-\beta}(Y\mid X)$ $\Longleftarrow$ the copula is radially symmetric $C(u,v)=\Bar{C}(1-u,1-v)$ and $Y$ is symmetric.
    \begin{proof}
From the proof in section \ref{proof:iota} we know that a radially symmetric copula implies that $\omega(\alpha,\beta,C)=1-\gamma(1-\alpha,1-\beta,C)$ for all $\alpha,\beta\in(0,1)$. Without loss of generality we assume the density of $Y$ is symmetric around 0. Then, it follows that $F_Y^{-1}(q)=-F_Y^{-1}(1-q)$ for all $q\in(0,1)$. From the definition \ref{def:delta} it then follows that $\delta^L_{\alpha,\beta}(Y\mid X)=\delta^U_{1-\alpha,1-\beta}(Y\mid X)$  for all $\alpha,\beta\in(0,1)$. 
\end{proof}
\noindent 6. Locality: $\delta_{\alpha,\beta}(Y\mid X)$ measures the dependence between $X$ and $Y$ at the quantiles given by $\alpha,\beta$:\\
\noindent Similar reasoning as in section \ref{proof:iota} applies. However, $\delta$ also depends on the quantile function. Therefore, $\delta$ measures that dependence structure and the risk on $Y$ locally. \\
\noindent 7. Consistency:\\
\begin{proof}
    Given the uniform consistency of $\hat{\omega}$, $\hat{\gamma}$ and $\hat{F}_Y^{-1}$ (see \cite{van2000asymptotic}) applying the continuous mapping Theorem it follows that estimates of $\delta^L$ and $\delta^U$ are also uniformly consistent. 
\end{proof}
\noindent Asymptotic normality does not follow because for $\delta$ this would result in a ratio of asymptotically normal estimators which would, asymptotically, result in a ratio of normal distributions. In general, this is a heavy-tailed distribution \cite{hinkley1969ratio}.
\newpage
\subsection{Figures}\label{App:figs}
\subsubsection{Flash Crash}\label{App:FCfigs}
\begin{figure}[H]
     \centering
     \begin{subfigure}[t]{0.48\textwidth}
    \includegraphics[width=\textwidth]{Figures/FCplot99.pdf}
    \caption{5 minutes}
    \label{fig:FCplot5min}
\end{subfigure}
\begin{subfigure}[t]{0.48\textwidth}
    \includegraphics[width=\textwidth]{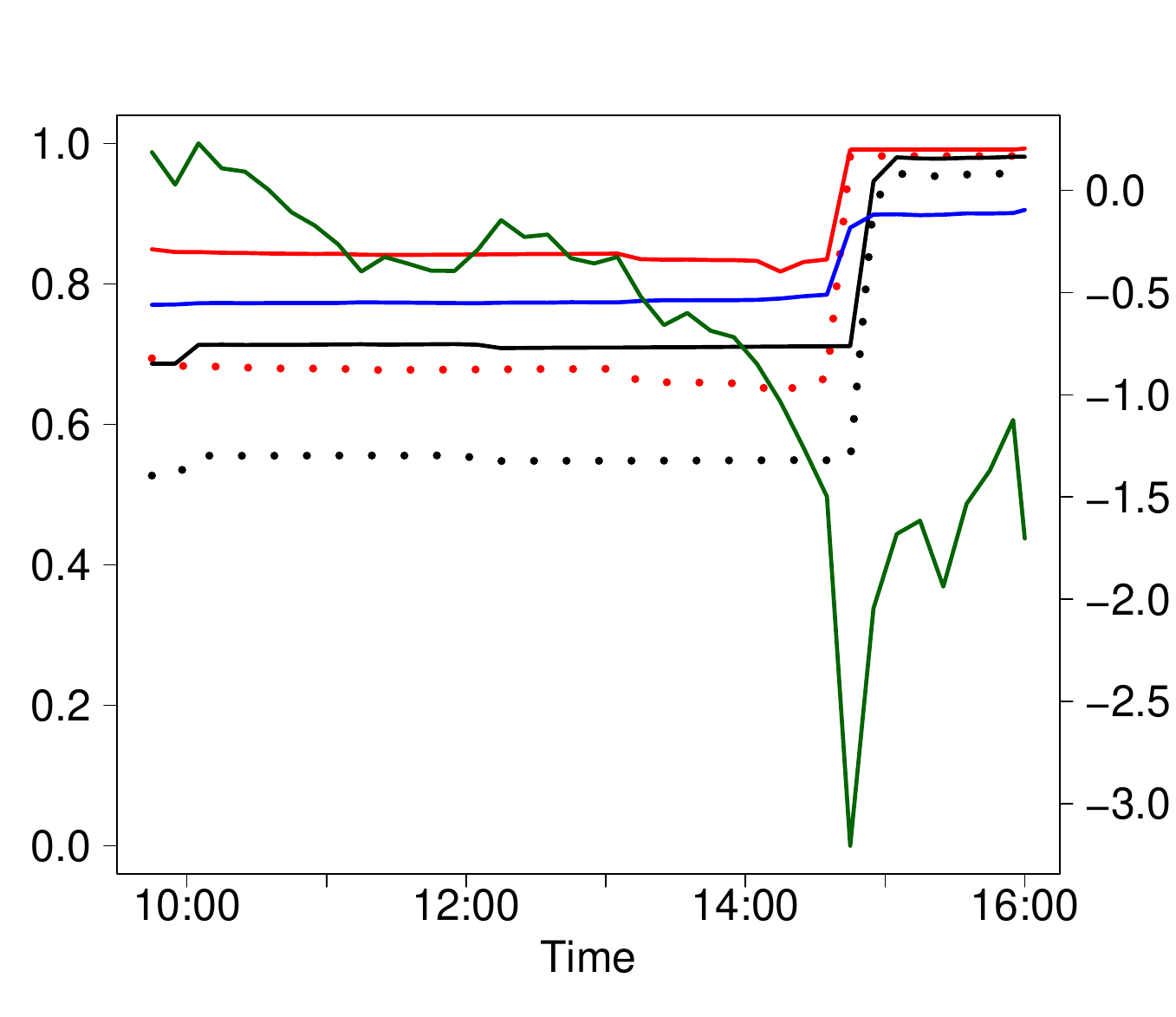}
    \caption{10 minutes }
    \label{fig:FCplot10min}
\end{subfigure}
\begin{subfigure}[t]{0.48\textwidth}
    \includegraphics[width=\textwidth]{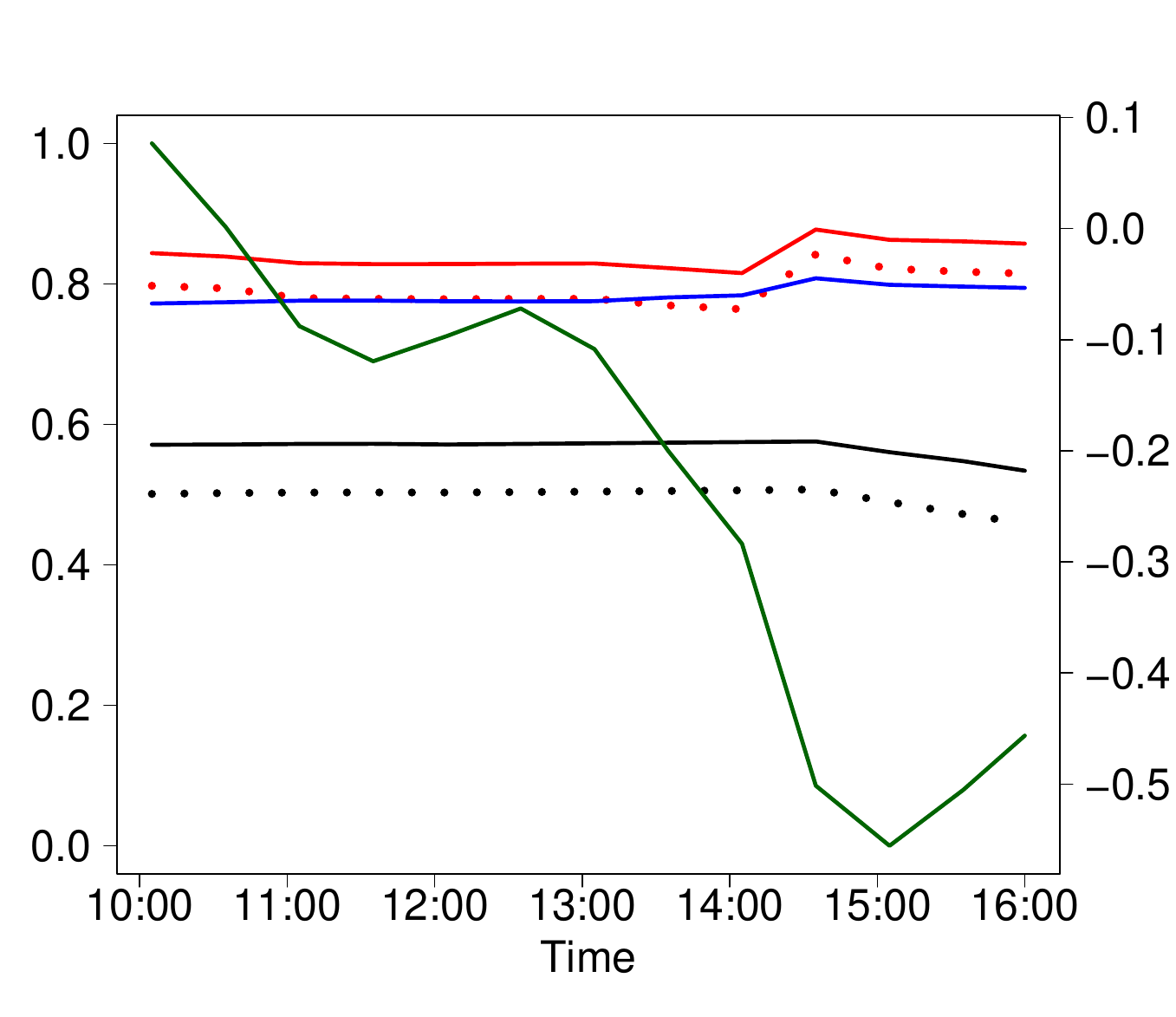}
    \caption{30 minutes}
    \label{fig:FCplot30min}
\end{subfigure}
\begin{subfigure}[t]{0.48\textwidth}
    \includegraphics[width=\textwidth]{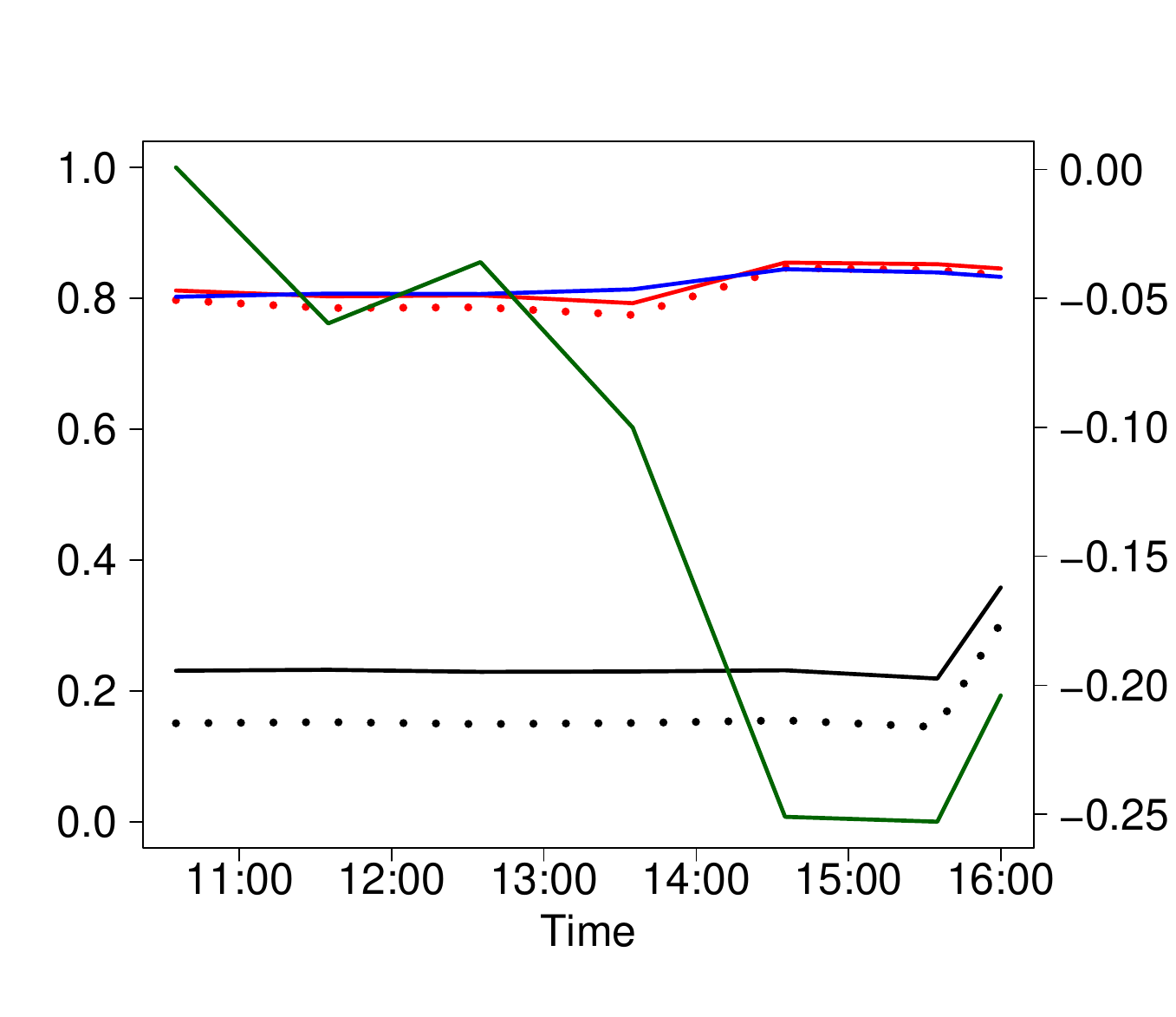}
    \caption{1 hour}
    \label{fig:FCplot1hr}
\end{subfigure}
\caption{Left: Estimates of the lower (red) and upper tail (black) $\iota$(solid) $\delta$(dotted) on the Flash Crash data compared with the fully Gaussian $\delta$(blue solid). Right: market index \% return (green line) . Lower: $\alpha=\beta=0.01$, upper: $\alpha=\beta=0.99$.}
\label{fig:FCtimecomp}
\end{figure}
\newpage
\subsubsection{September 15th}\label{App:Sept15}\
\begin{figure}[H]
     \centering
     \begin{subfigure}[t]{0.48\textwidth}
    \includegraphics[width=\textwidth]{Figures/GFC1plot99.pdf}
    \caption{Lower: $\alpha=\beta=0.01$, Upper: $\alpha=\beta=0.99$}
    \label{fig:GFC199}
\end{subfigure}
\begin{subfigure}[t]{0.48\textwidth}
    \includegraphics[width=\textwidth]{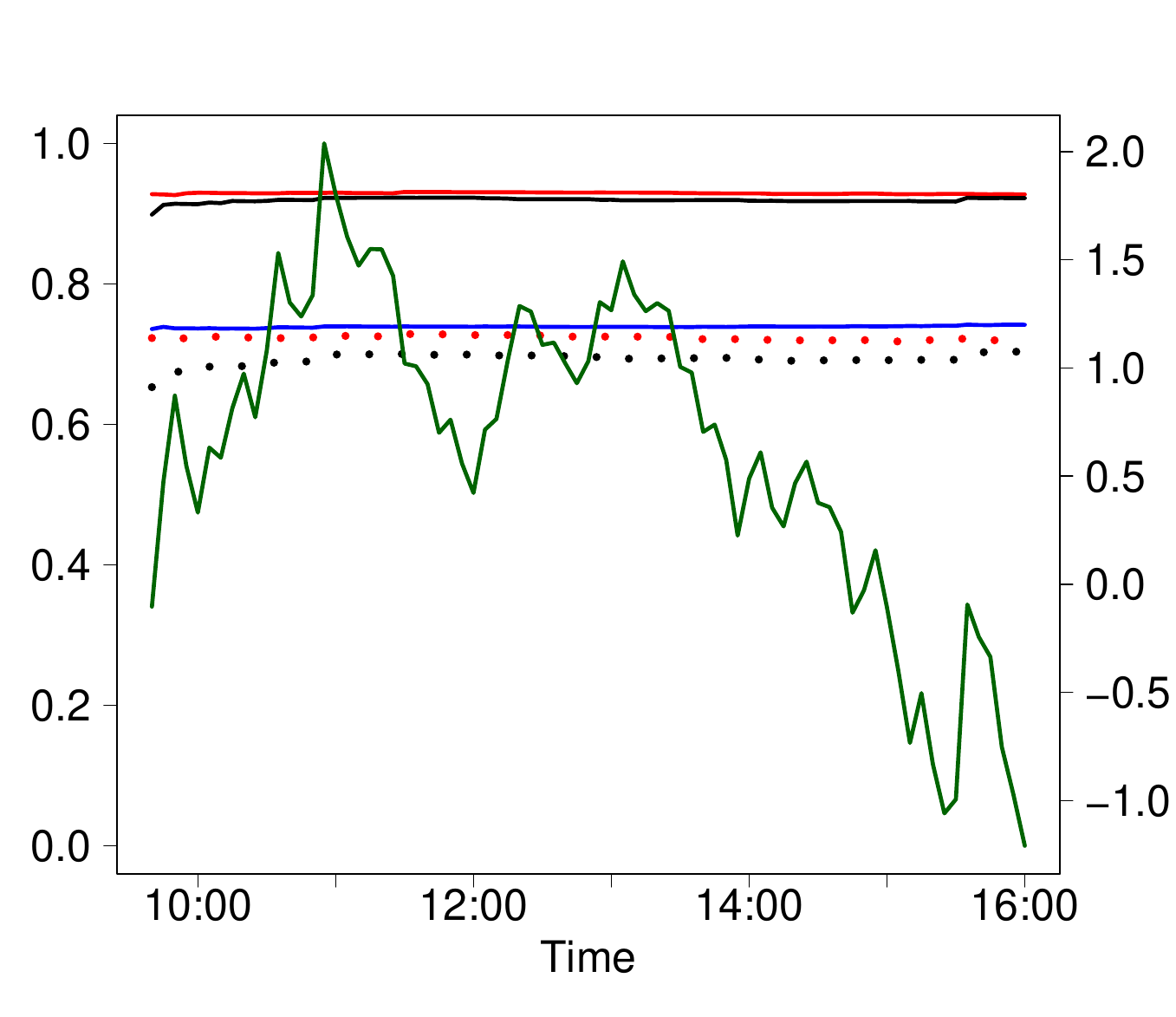}
    \caption{Lower: $\alpha=\beta=0.05$, Upper: $\alpha=\beta=0.95$ }
    \label{fig:GFC195}
\end{subfigure}
\begin{subfigure}[t]{0.48\textwidth}
    \includegraphics[width=\textwidth]{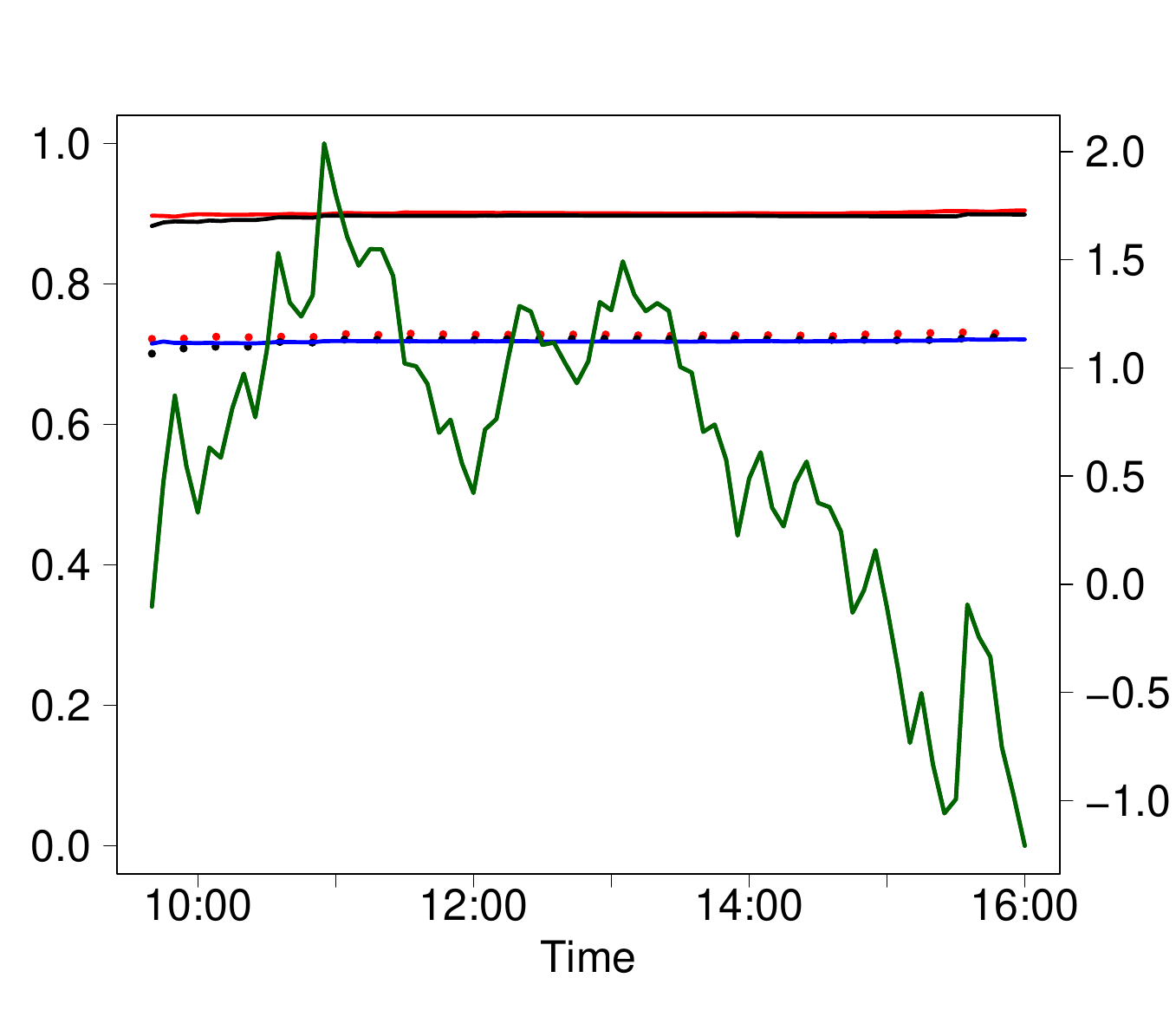}
    \caption{Lower: $\alpha=\beta=0.10$, Upper: $\alpha=\beta=0.90$}
    \label{fig:GFC190}
\end{subfigure}
\begin{subfigure}[t]{0.48\textwidth}
    \includegraphics[width=\textwidth]{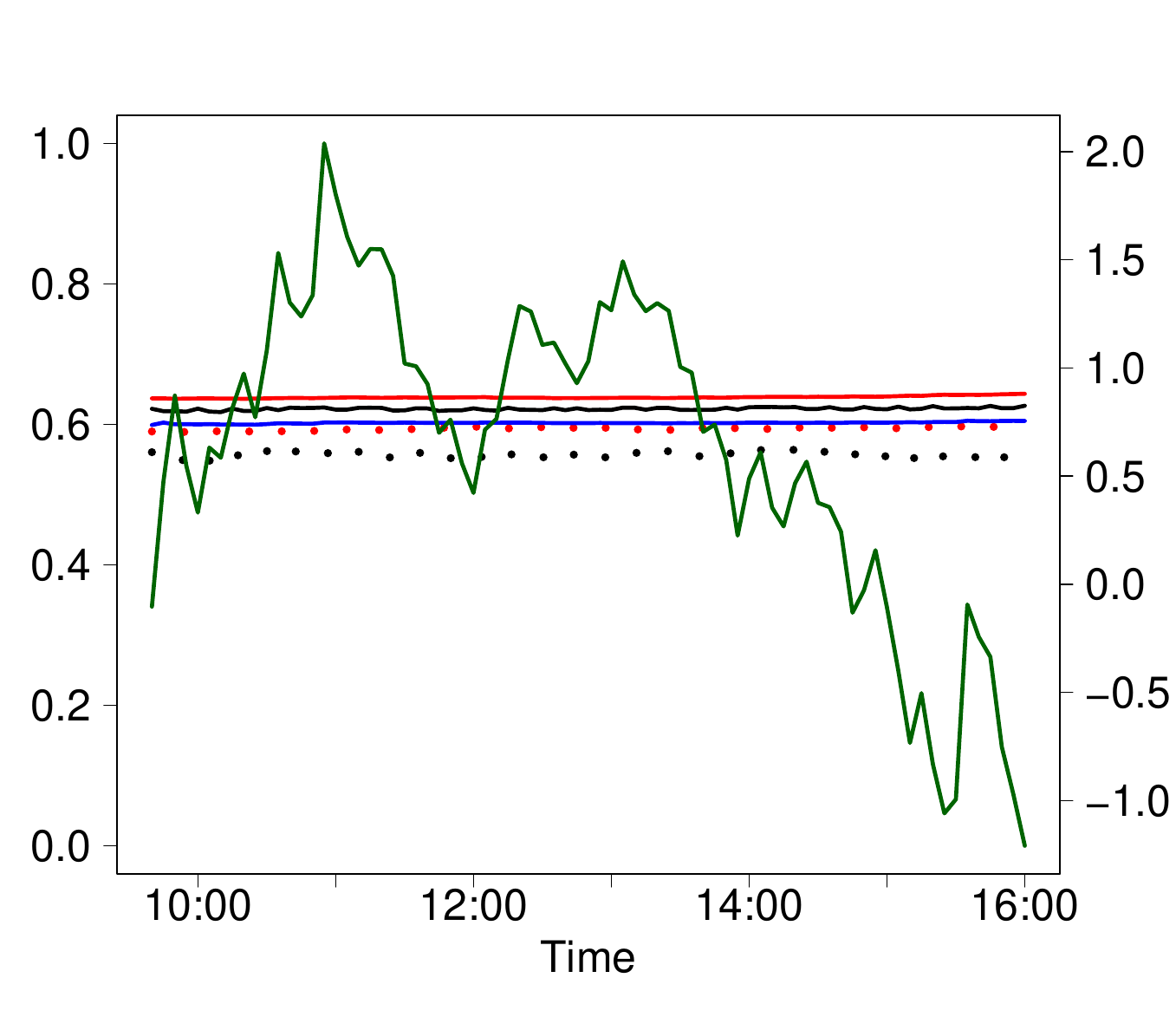}
    \caption{Lower: $\alpha=\beta=0.50$, Upper: $\alpha=\beta=0.50$}
    \label{fig:GFC150}
\end{subfigure}
\caption{Left: Estimates of the lower (red) and upper tail (black) $\iota$(solid) $\delta$(dotted) on September 15th 2008 data compared with the fully Gaussian $\delta$(blue solid). Right: market index \% return (green line). }
\label{fig:GFC1comp}
\end{figure}
\newpage
\subsubsection{September 16th}
\begin{figure}[H]
     \centering
     \begin{subfigure}[t]{0.48\textwidth}
    \includegraphics[width=\textwidth]{Figures/GFC2plot99.pdf}
    \caption{Lower: $\alpha=\beta=0.01$, Upper: $\alpha=\beta=0.99$}
    \label{fig:GFC299}
\end{subfigure}
\begin{subfigure}[t]{0.48\textwidth}
    \includegraphics[width=\textwidth]{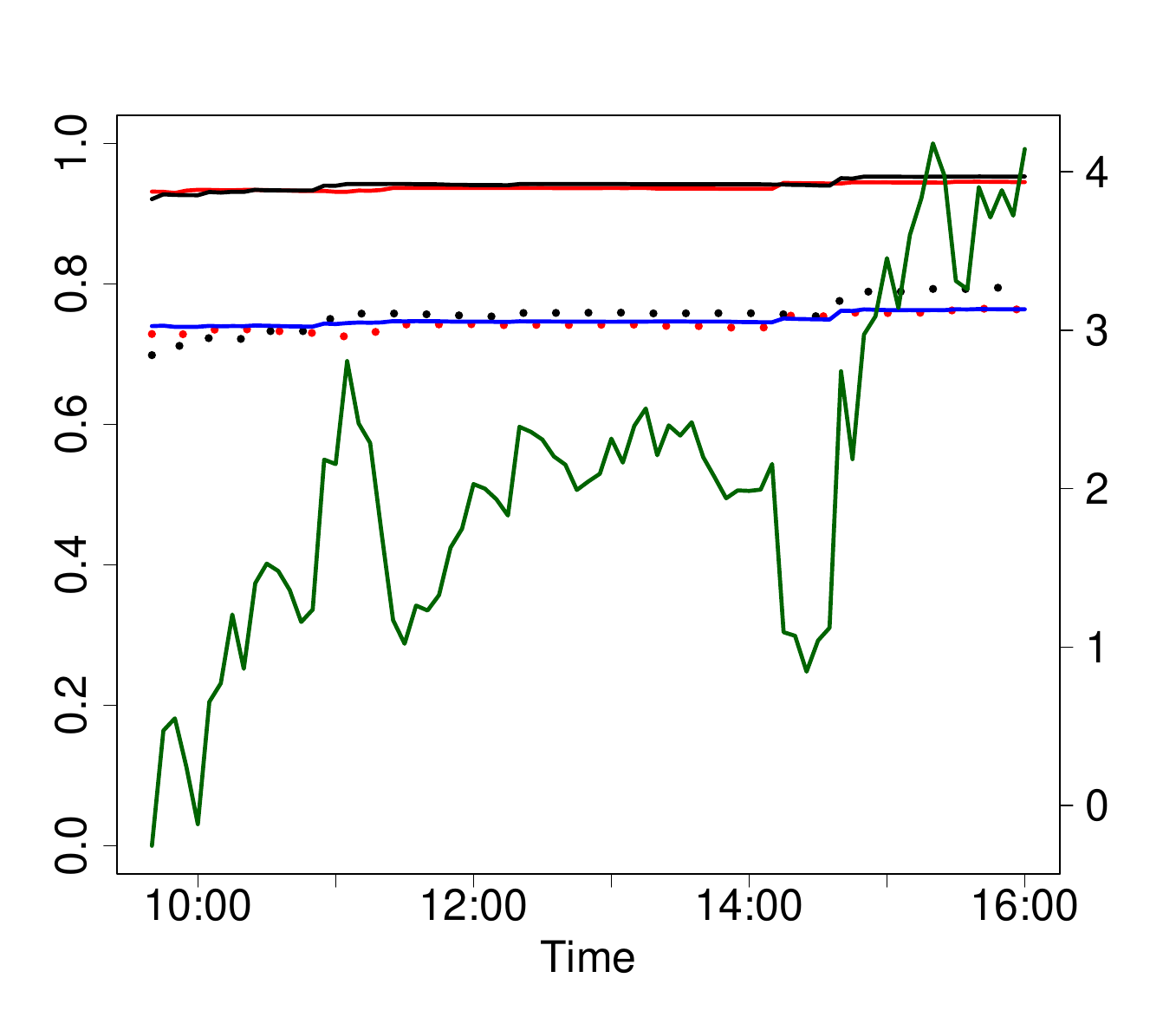}
    \caption{Lower: $\alpha=\beta=0.05$, Upper: $\alpha=\beta=0.95$ }
    \label{fig:GFC295}
\end{subfigure}
\begin{subfigure}[t]{0.48\textwidth}
    \includegraphics[width=\textwidth]{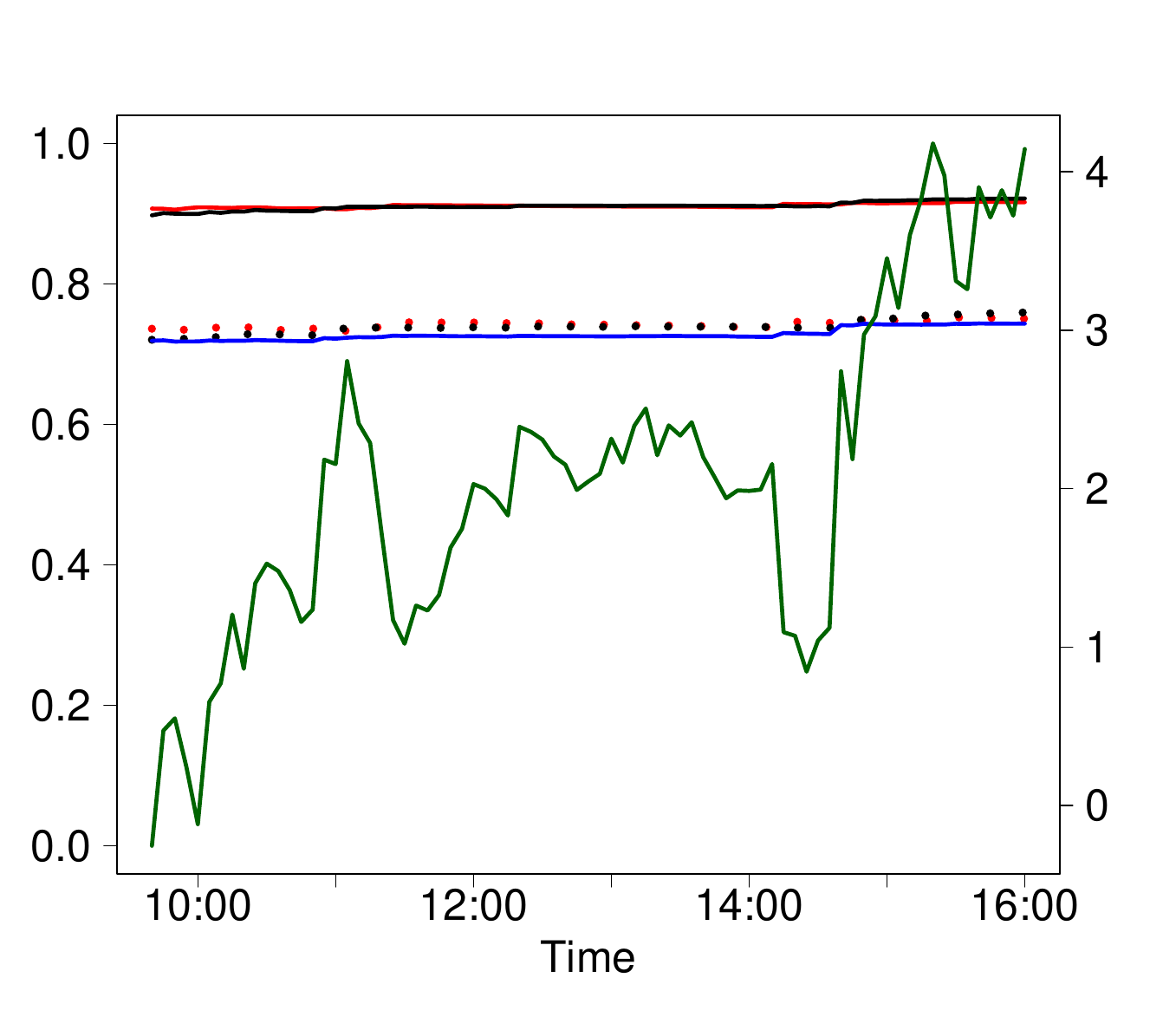}
    \caption{Lower: $\alpha=\beta=0.10$, Upper: $\alpha=\beta=0.90$}
    \label{fig:GFC290}
\end{subfigure}
\begin{subfigure}[t]{0.48\textwidth}
    \includegraphics[width=\textwidth]{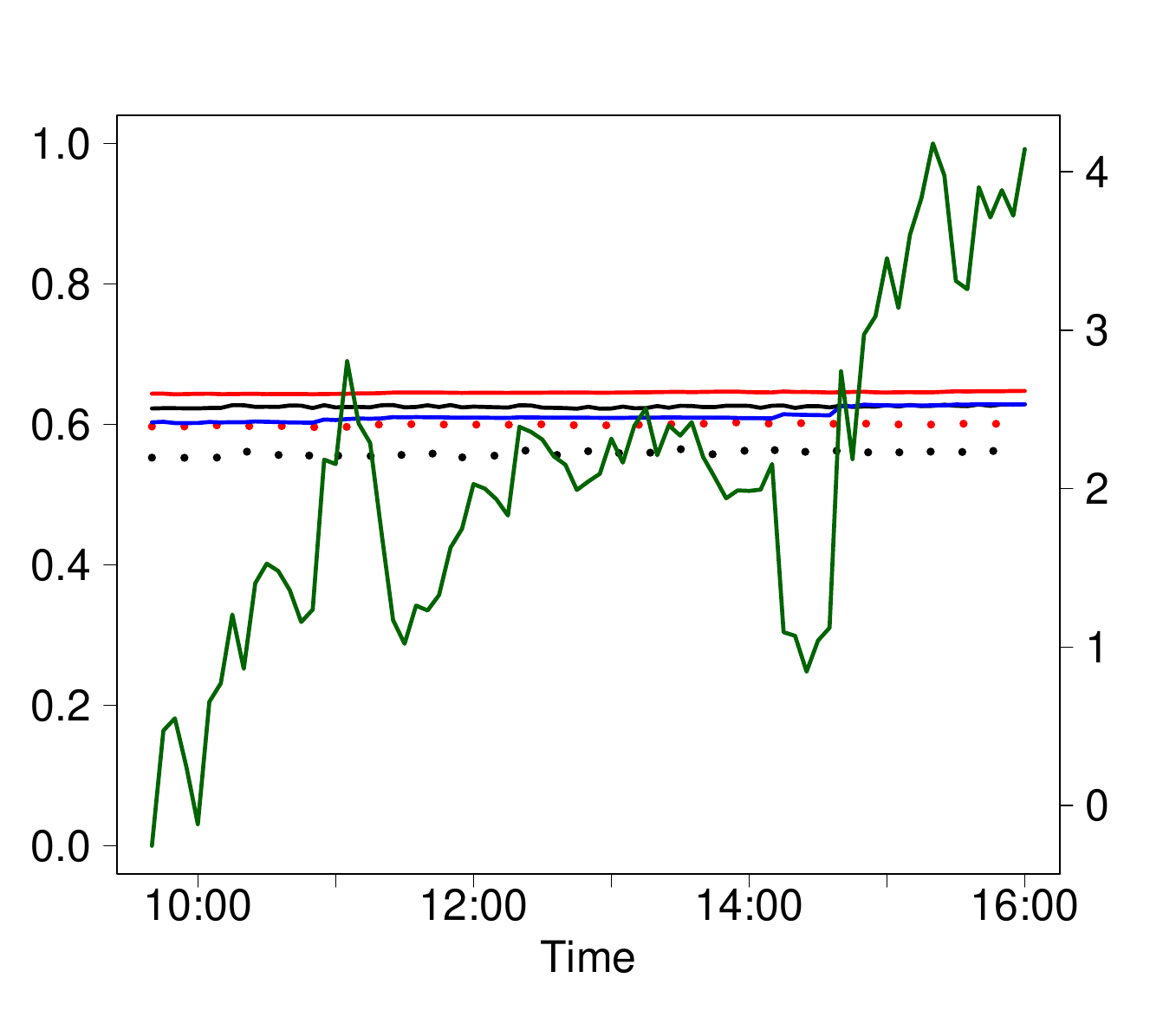}
    \caption{Lower: $\alpha=\beta=0.50$, Upper: $\alpha=\beta=0.50$}
    \label{fig:GFC250}
\end{subfigure}
\caption{Left: Estimates of the lower (red) and upper tail (black) $\iota$(solid) $\delta$(dotted) on September 16th 2008 data compared with the fully Gaussian $\delta$(blue solid). Right: market index \% return (green line). }
\label{fig:GFC2comp}
\end{figure}
\newpage
\subsubsection{September 29th}
\begin{figure}[H]
     \centering
     \begin{subfigure}[t]{0.48\textwidth}
    \includegraphics[width=\textwidth]{Figures/GFC3plot99.pdf}
    \caption{Lower: $\alpha=\beta=0.01$, Upper: $\alpha=\beta=0.99$}
    \label{fig:GFC399}
\end{subfigure}
\begin{subfigure}[t]{0.48\textwidth}
    \includegraphics[width=\textwidth]{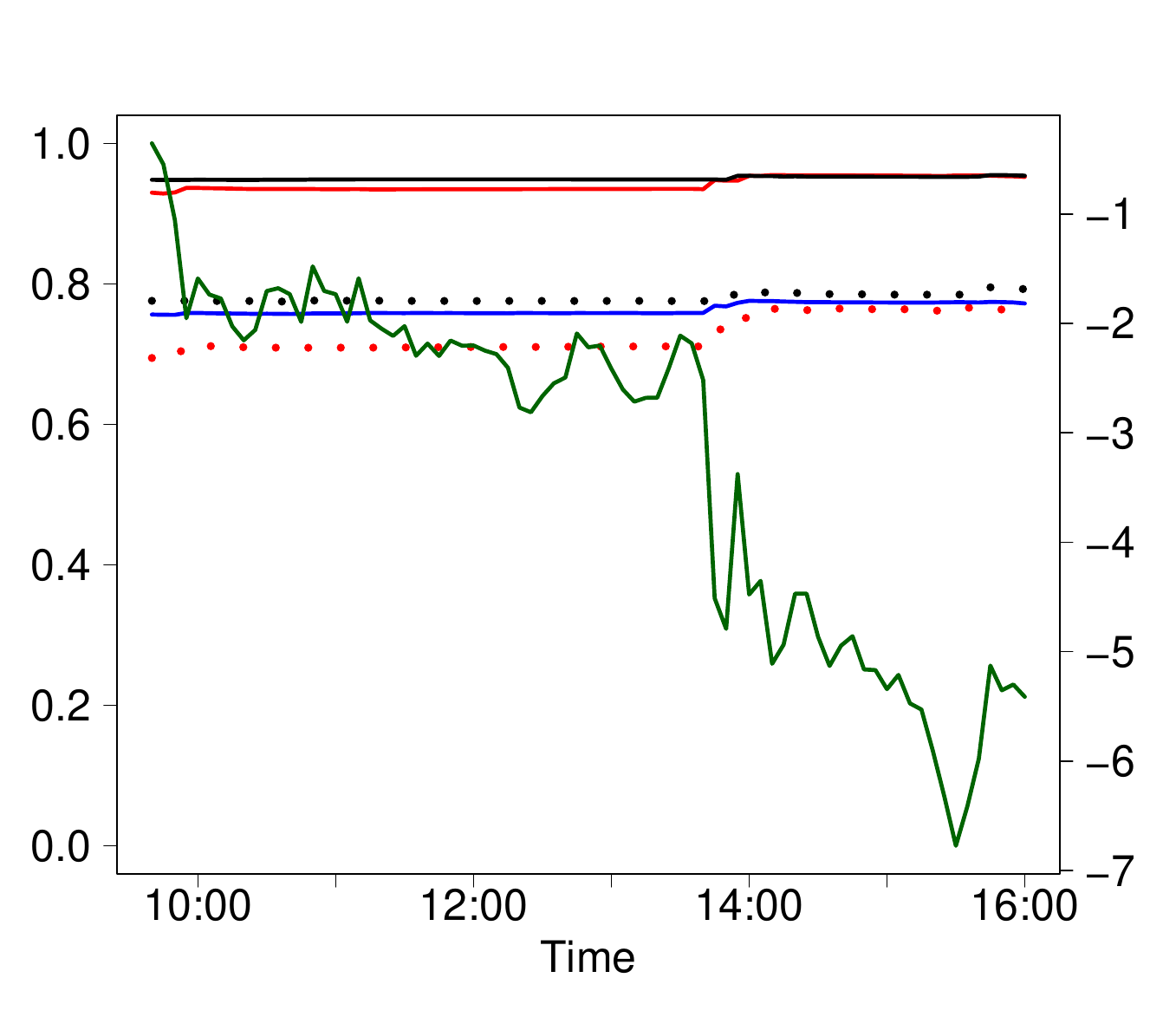}
    \caption{Lower: $\alpha=\beta=0.05$, Upper: $\alpha=\beta=0.95$ }
    \label{fig:GFC395}
\end{subfigure}
\begin{subfigure}[t]{0.48\textwidth}
    \includegraphics[width=\textwidth]{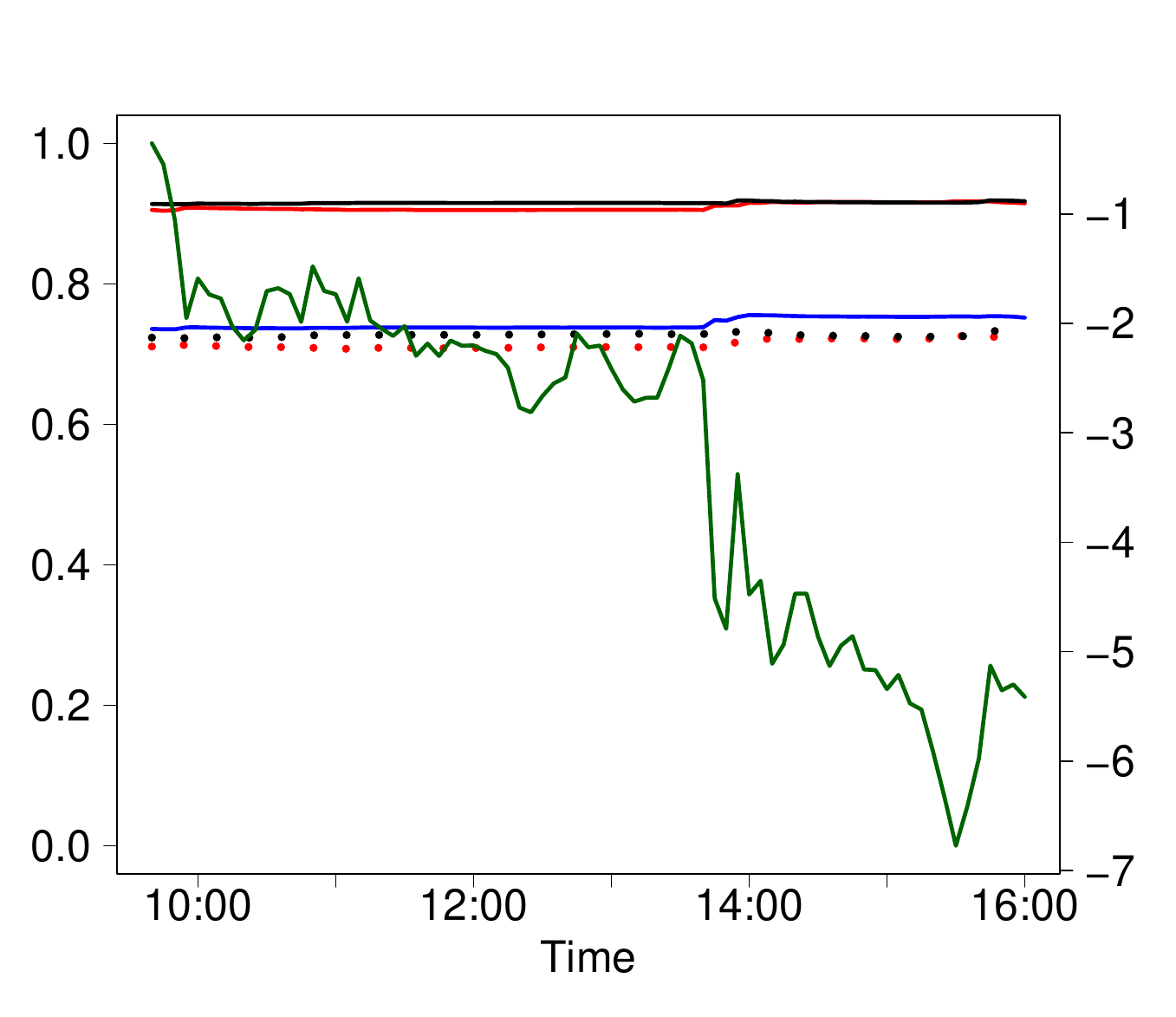}
    \caption{Lower: $\alpha=\beta=0.10$, Upper: $\alpha=\beta=0.90$}
    \label{fig:GFC390}
\end{subfigure}
\begin{subfigure}[t]{0.48\textwidth}
    \includegraphics[width=\textwidth]{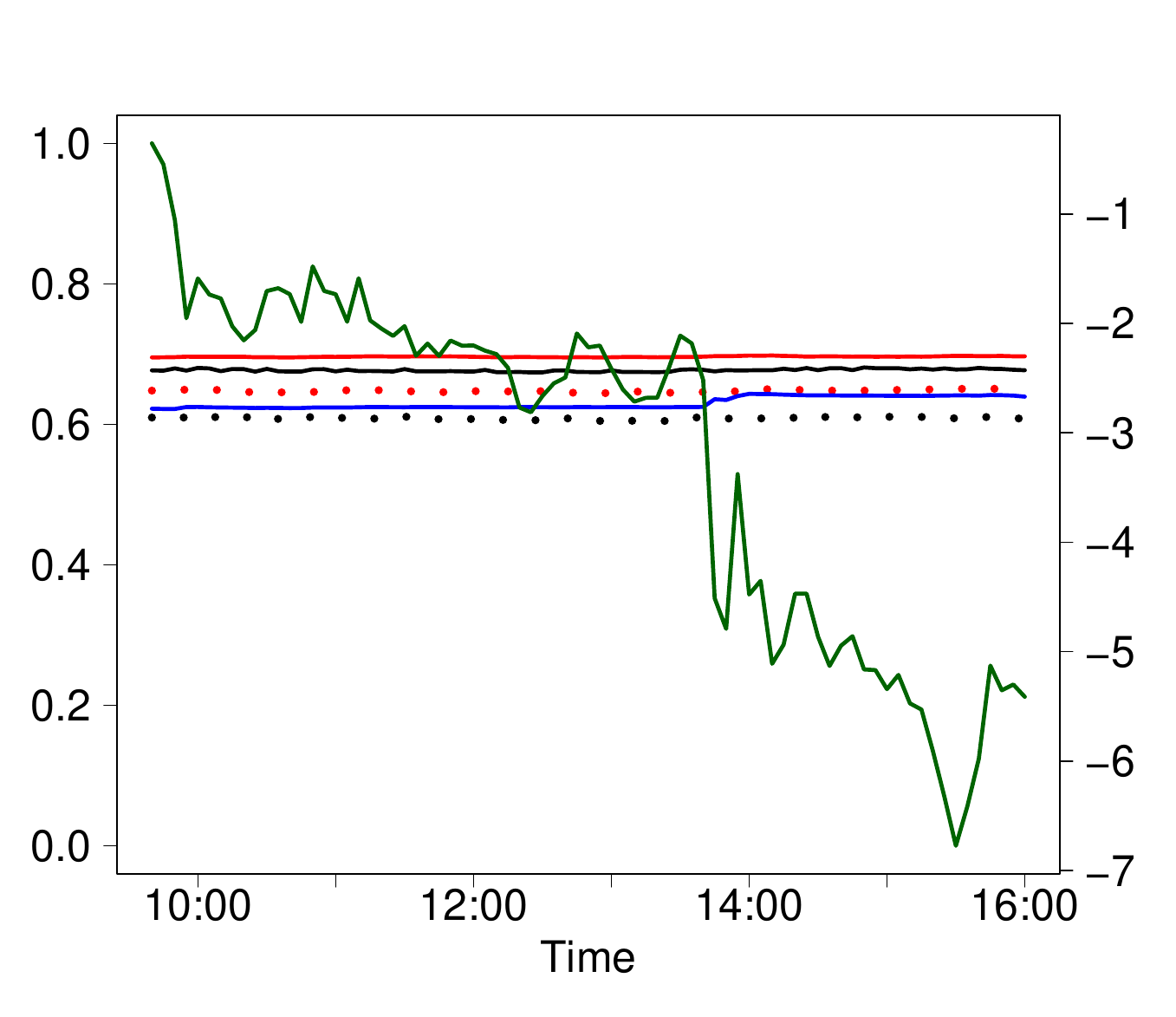}
    \caption{Lower: $\alpha=\beta=0.50$, Upper: $\alpha=\beta=0.50$}
    \label{fig:GFC350}
\end{subfigure}
\caption{Left: Estimates of the lower (red) and upper tail (black) $\iota$(solid) $\delta$(dotted) on September 29th 2008 data compared with the fully Gaussian $\delta$(blue solid). Right: market index \% return (green line).}
\label{fig:GFC3comp}
\end{figure}
\newpage
\subsubsection{October 3rd}
\begin{figure}[H]
     \centering
     \begin{subfigure}[t]{0.48\textwidth}
    \includegraphics[width=\textwidth]{Figures/GFC4plot99.pdf}
    \caption{Lower: $\alpha=\beta=0.01$, Upper: $\alpha=\beta=0.99$}
    \label{fig:GFC499}
\end{subfigure}
\begin{subfigure}[t]{0.48\textwidth}
    \includegraphics[width=\textwidth]{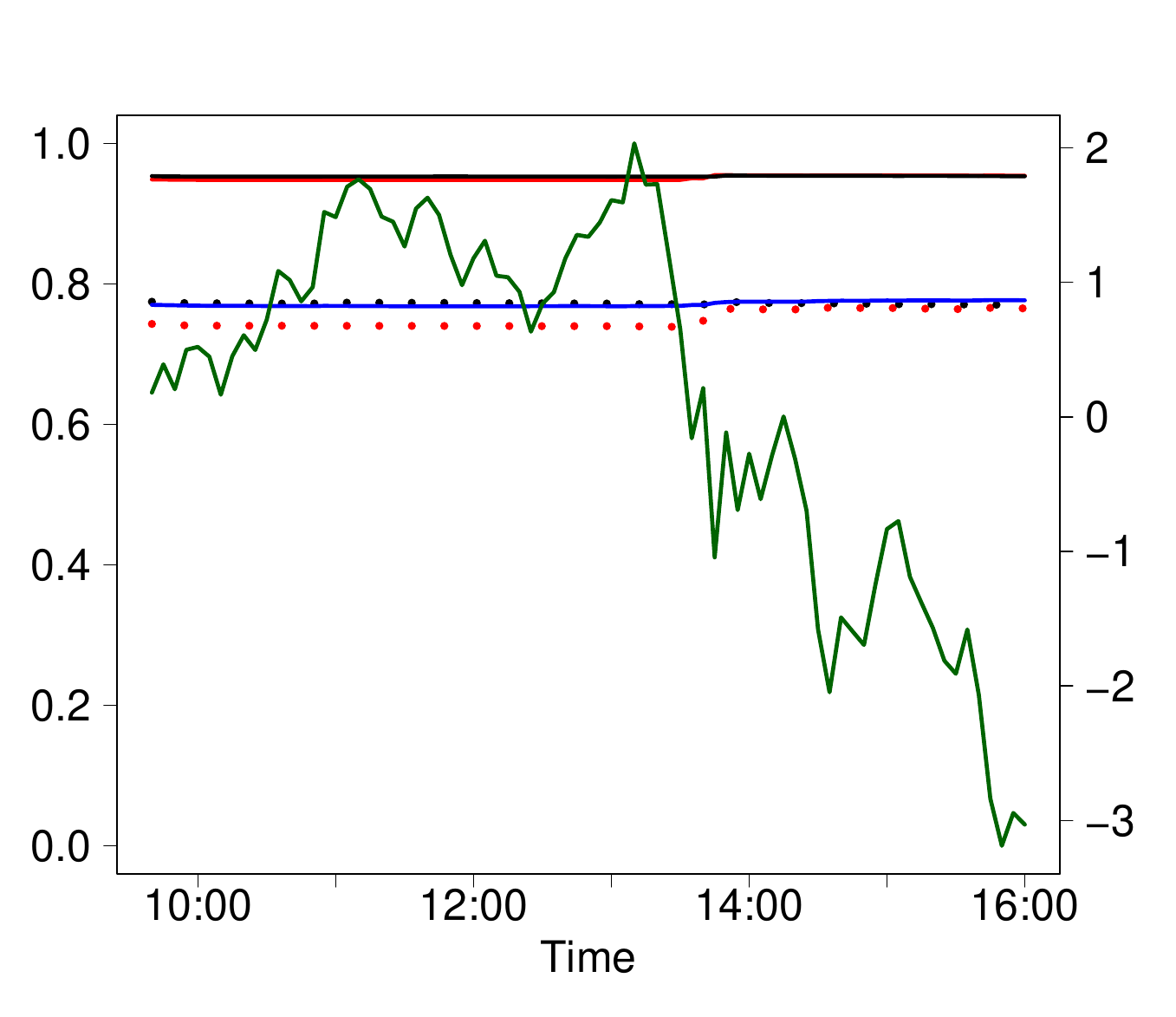}
    \caption{Lower: $\alpha=\beta=0.05$, Upper: $\alpha=\beta=0.95$ }
    \label{fig:GFC495}
\end{subfigure}
\begin{subfigure}[t]{0.48\textwidth}
    \includegraphics[width=\textwidth]{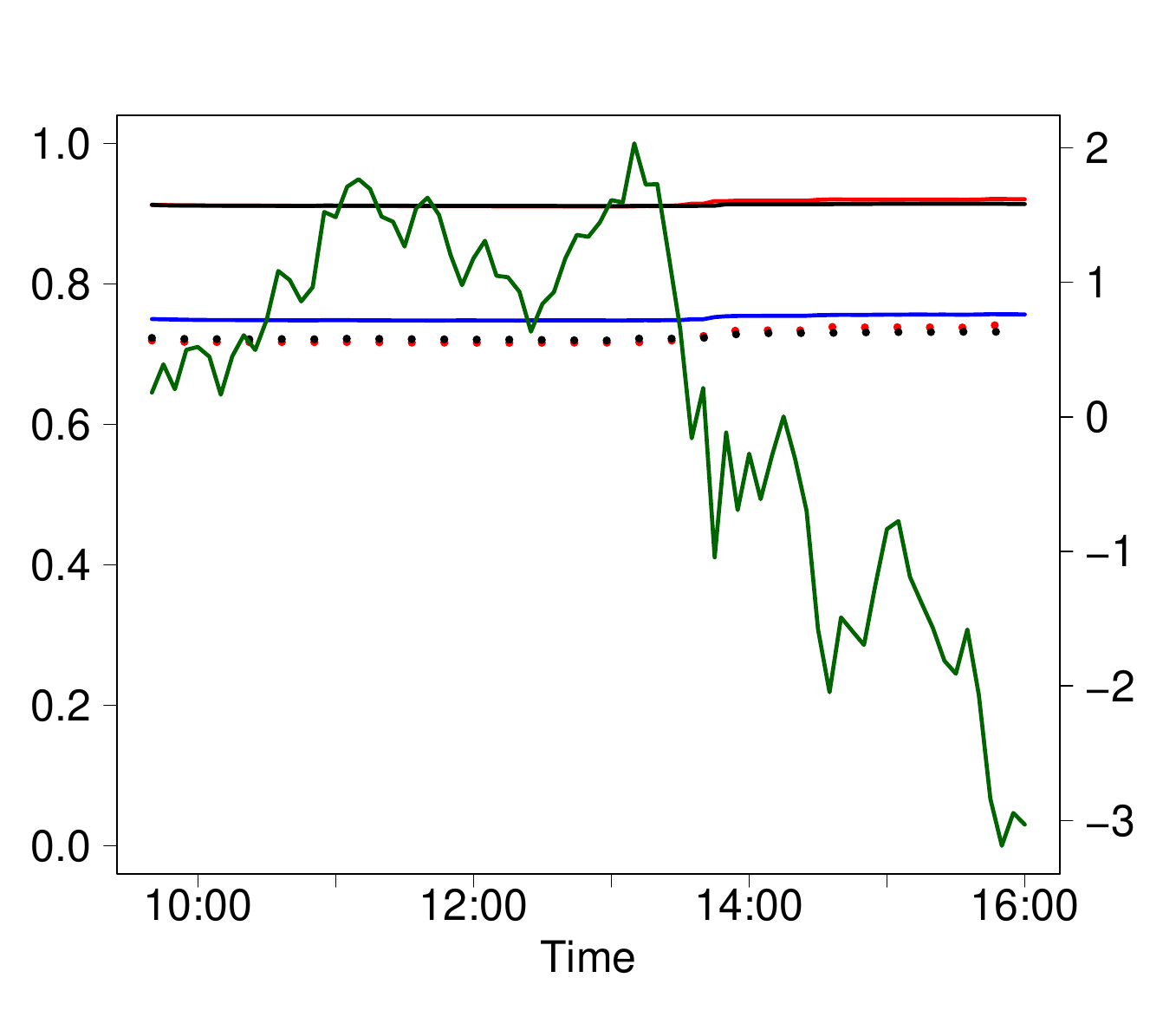}
    \caption{Lower: $\alpha=\beta=0.10$, Upper: $\alpha=\beta=0.90$}
    \label{fig:GFC490}
\end{subfigure}
\begin{subfigure}[t]{0.48\textwidth}
    \includegraphics[width=\textwidth]{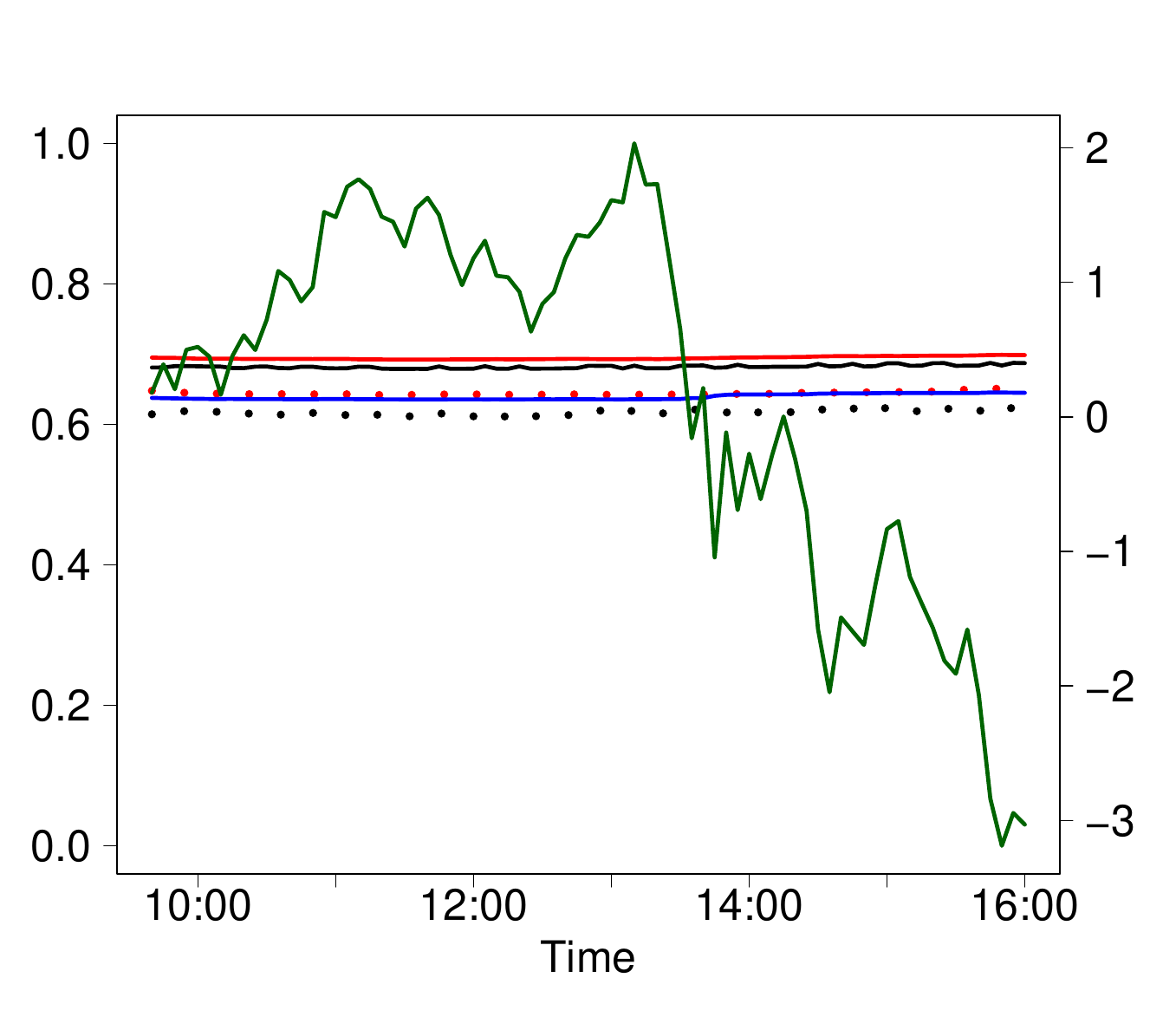}
    \caption{Lower: $\alpha=\beta=0.50$, Upper: $\alpha=\beta=0.50$}
    \label{fig:GFC450}
\end{subfigure}
\caption{Left: Estimates of the lower (red) and upper tail (black) $\iota$(solid) $\delta$(dotted) on October 3rd 2008 data compared with the fully Gaussian $\delta$(blue solid). Right: market index \% return (green line). }
\label{fig:GFC4comp}
\end{figure}
\newpage
\end{document}